\documentclass[11pt]{amsart}
\usepackage{amsmath}
\usepackage{amsfonts}
\usepackage{amssymb,color}
\usepackage[hidelinks]{hyperref}
\usepackage{amsthm,doi}
\usepackage{thmtools}
\usepackage{tikz}
\usepackage{pgfplots}
\pgfplotsset{compat=1.16}
\allowdisplaybreaks
\usepackage{enumitem}

\addtolength{\oddsidemargin}{-1.5cm}
\addtolength{\evensidemargin}{-1.5cm}
\addtolength{\textwidth}{3cm}
\addtolength{\topmargin}{-1cm}
\addtolength{\textheight}{1.8cm}
\linespread{1.05}
\setlength{\parskip}{0.3\baselineskip}

\numberwithin{equation}{section}

\newcommand{\R}{\mathbb{R}}

\newcommand{\N}{\mathbb{N}}
\newcommand{\C}{\mathbb{C}}
\newcommand{\Z}{\mathbb{Z}}

\newcommand{\No}{\mathcal{N}}

\renewcommand{\S}{\mathcal{S}}

\newcommand{\E}{\mathbb{E}}
\renewcommand{\P}{\mathbb{P}}

\renewcommand{\ker}{{\sf Ker}\,}

\newcommand{\ud}{\,\mathrm{d}}
\newcommand{\norm}[2]{\left\| #2 \right\|_{#1}}

\newcommand{\abs}[1]{\ensuremath{\left| #1 \right| }}
\newcommand{\pbox}{\hfill$\Box$\\}

\newtheorem{lemma}{Lemma}[section]
\newtheorem{theorem}[lemma]{Theorem}

\newtheorem{prop}[lemma]{Proposition}

\newtheorem{rem}[lemma]{Remark}

\author{Jos\'e Luis Romero}
\address[J. L. Romero]{Faculty of Mathematics \\
	University of Vienna \\
	Oskar-Morgenstern-Platz 1 \\
	A-1090 Vienna, Austria \\and
	Acoustics Research Institute\\ Austrian Academy of Sciences\\Dr. Ignaz Seipel-Platz 2,	AT-1010 Vienna, Austria}
\email{jose.luis.romero@univie.ac.at}

\author{Michael Speckbacher}
\address[M. Speckbacher]{Faculty of Mathematics \\
	University of Vienna \\
	Oskar-Morgenstern-Platz 1 \\
	A-1090 Vienna, Austria }
\email{michael.speckbacher@univie.ac.at}

\thanks{J. L. R. and M. S. gratefully acknowledge support from the Austrian Science Fund (FWF): 10.55776/Y1199.}

\title[Estimation of binary masks from ambient noise]{Estimation of binary time-frequency masks from ambient noise}

\keywords{spectrograms, localization operators, binary masks, white noise, operator identification, level sets}
\subjclass{42C40, 46E10,  	60H40, 47B35, 46E22}

\begin{document}

\begin{abstract}
We investigate the retrieval of a binary time-frequency mask from a few observations of filtered white ambient noise. Confirming household wisdom in acoustic modeling, we show that this is possible by inspecting the average spectrogram of ambient noise. Specifically, we show that the lower quantile of the average of $\mathcal{O}(\log(\abs{\Omega}/\varepsilon))$ masked spectrograms is enough to identify a rather general mask $\Omega$ with confidence at least $\varepsilon$, up to shape details concentrated near the boundary of $\Omega$. As an application, the expected measure of the estimation error is dominated by the perimeter of the time-frequency mask. The estimator  requires no knowledge of the noise variance, and only a very qualitative profile of the filtering window, but no exact knowledge of it.
\end{abstract}

\maketitle

\section{Introduction}
Time-frequency masking is the process of altering a signal $f: \mathbb{R}^d \to \mathbb{C}$ by weighting its time-frequency profile with a \emph{mask}. 
A standard model for such operation is based on the \emph{short-time Fourier transform} of $f$. First a profile of $f$ is built out of its time-frequency correlations with an auxiliary smooth and fast-decaying \emph{window function} $g:\mathbb{R}^d \to \mathbb{C}$,
\begin{align}\label{eq_stft}
V_gf(x,\xi)=\int_{\R^d}f(t)\overline{g(t-x)}e^{-2\pi i \xi\cdot t}\ud t, \quad (x,\xi)\in\R^{2d}.
\end{align}
Second, a mask $m: \mathbb{R}^d\times\mathbb{R}^d \to 
\mathbb{C}$ is applied to the time-frequency profile of $f$ by solving the optimization problem
\begin{align}\label{eq_Hm_extremal}
H_m f = \mathrm{argmin}_{\tilde{f}} \int_{\mathbb{R}^d \times \mathbb{R}^d}
\Big| m(x,\xi) \cdot V_gf(x,\xi) - V_g \tilde{f}(x,\xi) \Big|^2 \ud x \ud \xi,
\end{align}
whose solution is, more explicitly,
\begin{align}\label{eq_Hm}
H_m f(t) = \int_{\mathbb{R}^{d} \times \mathbb{R}^d} m(x,\xi) V_g f(x,\xi)
{g(t-x)} e^{2\pi i \xi\cdot t}\ud x \ud \xi, \qquad t \in \mathbb{R}^d,
\end{align}
 {provided that $g$ is normalized by
	$\|g\|_2=1$ ---	see Section \ref{sec_tf}}. While the filter \eqref{eq_Hm} depends on the auxiliary window function $g$, different choices of smooth and fast-decaying $g$ yield qualitatively similar results.

The subclass of \emph{binary masks}, $m(x,\xi) \in \{0,1\}$, plays a distinguished role in acoustic modeling --- see, e.g., \cite{wang2005ideal,wang2006computational}. In this case $m$ is the indicator function of a domain $\Omega \subseteq \mathbb{R}^d\times\mathbb{R}^d$ and \eqref{eq_Hm} models a \emph{hard filter} that rejects certain time-frequency components, for example, following a perceptual criterion. The \emph{direct estimation problem} consists in producing a binary mask optimally tailored to a certain task such as the separation of speech from noise in a particular auditory setup. In contrast, the \emph{inverse estimation problem} consists in learning the mask of a filter from (known or unknown) test inputs, for example, so as to retrieve the current calibration of a device that simulates or enhances human hearing \cite{hazrati2013blind}.

While the problem of retrieving a mask $m$ from observations of filtered signals $H_m f$ is challenging --- as it is one example of operator identification, see \cite{hlawatsch2011wireless, MR2191780,MR2300357,MR3048592,MR3447991,MR3158014} and the references therein --- learning a binary mask is expected to be substantially easier. In fact, it is household wisdom of acoustic modeling that a binary mask is revealed by simply inspecting the filtered output of \emph{ambient noise}. The purpose of this article is to provide a formal analysis of this fact in the idealized setting of \emph{white} ambient noise --- that is, noise whose stochastics are independent at disjoint spatial locations. Crucially, \emph{we do not assume the variance of the noise to be known}, and, similarly, we do not assume exact knowledge of the window function $g$. 

\section{Results}
\subsection{Mask estimation with unknown noise level}\label{sec_r1}
\subsubsection*{Setup} Let $g: \mathbb{R}^d \to \mathbb{C}$ be a Schwartz function normalized by $\norm{2}{g}=1$, let $\Omega \subseteq \mathbb{R}^{2d}$ be a compact domain, and denote $H_\Omega$ the filter \eqref{eq_Hm} with binary mask
\begin{align*}
m(z) = \begin{cases}
	1 & \mbox{if }z\in\Omega\\
	0 & \mbox{if }z\notin\Omega
\end{cases}.
\end{align*}

\subsubsection*{Ambient noise model} We consider zero-mean Gaussian \emph{complex white noise} $\mathcal{N}$ with variance $\sigma^2>0$, that is, a generalized complex-valued Gaussian stochastic process on $\mathbb{R}^d$ satisfying
\begin{align*}
&\mathbb{E}\big\{\mathcal{N}(z) \cdot \overline{\mathcal{N}(w)}\big\} = \sigma^2 \cdot \delta(z-w),
\\
&\mathbb{E}\big\{\mathcal{N}(z) \cdot \mathcal{N}(w)\big\} = 0.
\end{align*}
Measure-theoretic aspects are presented in more detail in Section \ref{sec_noise}. The key assumption is \emph{whiteness}: If $E_1, E_2 \subset \mathbb{R}^d$ are disjoint Borel sets, then $\mathcal{N}_{|E_1}$ and $\mathcal{N}_{|E_2}$ are statistically independent.  Equivalently, whiteness means that the noise has equal intensity at different frequencies, and thus a constant power spectral density. White noise is a mathematical idealization of empirically observable ambient noise, which has flat power spectrum over a certain range of relevant frequencies, determined by the propagation medium or generating mechanism.

The assumption that $\mathcal{N}$ is complex-valued is made for mathematical convenience, as it is compatible with the use of complex exponentials in the Fourier transform. (Indeed, the modulation operation
$\mathcal{N} \to e^{2\pi i \xi \cdot\ } \mathcal{N}$, $\xi \in \mathbb{R}^d$, preserves the stochastics of $\mathcal{N}$). The minor modifications necessary to treat real-valued white noise are sketched in Section \ref{sec_real}.

\subsubsection*{Data} Suppose that we observe $H_\Omega(\mathcal{N}_1), \ldots, H_\Omega(\mathcal{N}_K)$, where $\mathcal{N}_1, \ldots, \mathcal{N}_K$ are $K$ independent realizations of complex white noise with \emph{unknown variance} $\sigma^2>0$. 
Even though a typical realization of white noise $\mathcal{N}$ is not a function on $\mathbb{R}^d$ but merely a tempered distribution, the fact that $g \in \mathcal{S}(\R^d)$ allows us to interpret $H_\Omega(\mathcal{N})$ distributionally --- see Section \ref{sec_noise} for details.

{As a modeling assumption the use of (pure) white noise as input means that the effective bandwidth of the ambient noise is larger than the frequency diameter of the filtering mask $\Omega$, as the filter \eqref{eq_Hm} is largely insensitive to input frequencies beyond the frequency support of $\Omega$. In other words, effective noise bandwidth limits beyond the frequency support of the mask do not need to be modeled, as they are automatically enforced by the filter \eqref{eq_Hm}	--- see Section \ref{sec_d}. In practical terms, the input model thus requires that the ambient noise be perceived as white within a certain auditory setup, modeled by the filter \eqref{eq_Hm}.}

\subsubsection*{Goal} We wish to estimate $\Omega$ based on the available data.

 {To the best of our knowledge, this is the first work to rigorously study the retrieval of \emph{binary} time-frequency masks. The focus on such masks is motivated by certain applications, notably acoustics \cite{wang2005ideal,wang2006computational}, and leads to important differences with respect to the identification problem for general operators. For example, a standard assumption in the theory of \emph{operator sampling} \cite{hlawatsch2011wireless, MR2191780,MR2300357,MR3447991} is that the unknown operator has \emph{bandlimited Kohn-Nirenberg symbol} \cite{MR2589451,MR3048592,MR3158014}, which is not satisfied by the filters studied in this article.}

\subsubsection*{Estimation limits}  {To understand what kind of performance guarantees can be expected, let us look more closely into the masking procedure \eqref{eq_Hm_extremal}. The time-frequency filter with binary mask $m=\chi_\Omega$ finds the short-time Fourier transform $V_g H_\Omega f$ that best matches the function $\chi_\Omega V_g f$ in the quadratic mean sense. On the one hand, due to the nature of the windowing procedure defining the short-time Fourier transform, the function $V_g H_\Omega f$ is smooth, independently of the regularity of the input signal $f$ \cite[Chapter 3]{groe1} (which in our case is white noise). On the other hand, $\chi_\Omega V_g f$ has jump discontinuities along $\partial \Omega$. Hence, the operation $f \mapsto H_\Omega f$ has a smoothing effect along the time-frequency components located near the boundary of $\Omega$ (see Figure \ref{fig_proj}) and thus suppresses the fine details of the boundary of $\Omega$.}

\begin{figure}[ht]
	\begin{center}
		\includegraphics[height=5cm]{./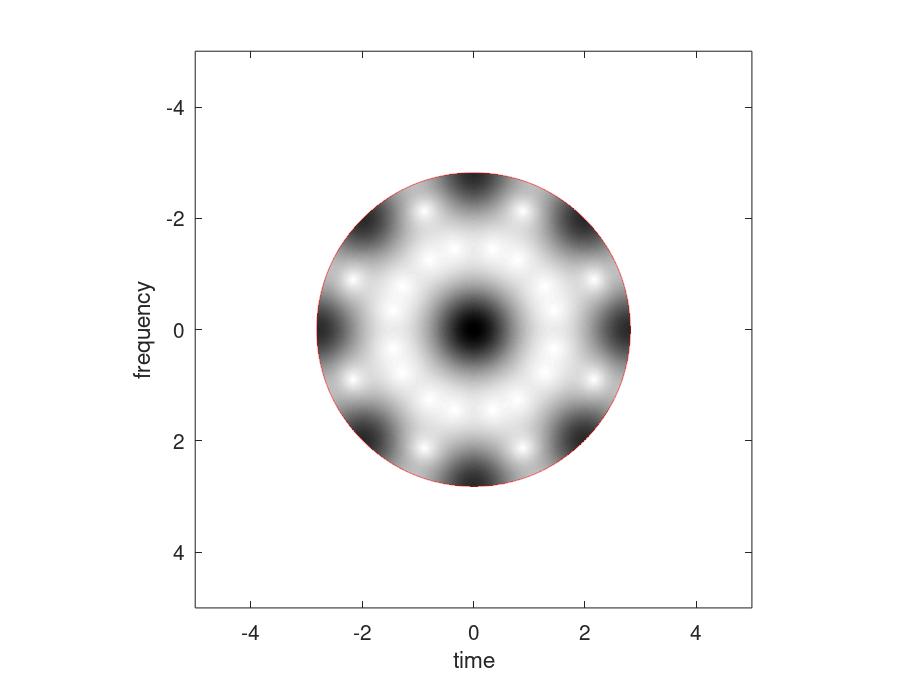}
		\includegraphics[height=5cm]{./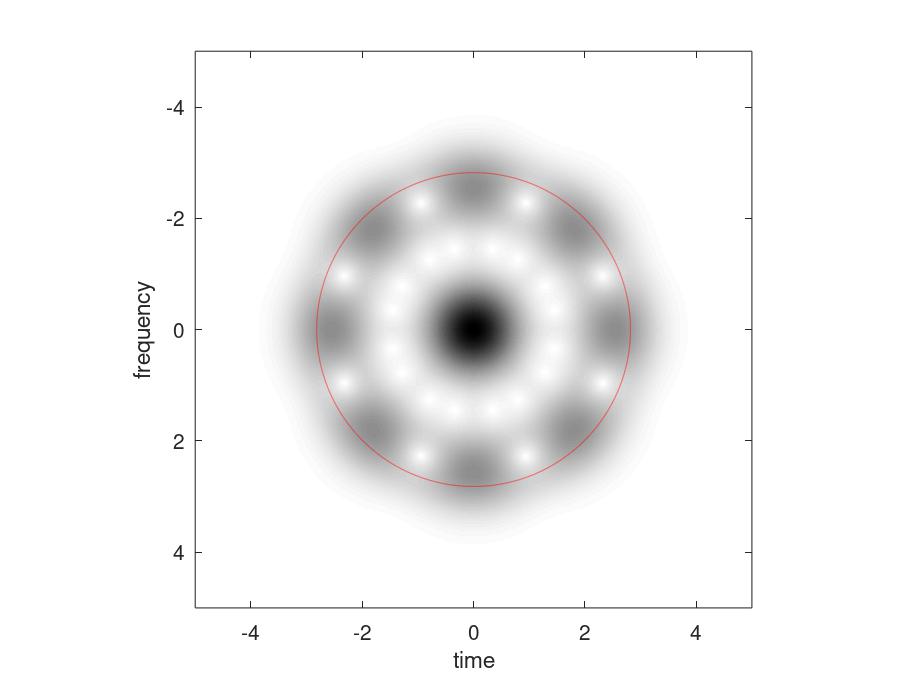}
		\\
		\includegraphics[height=5cm]{./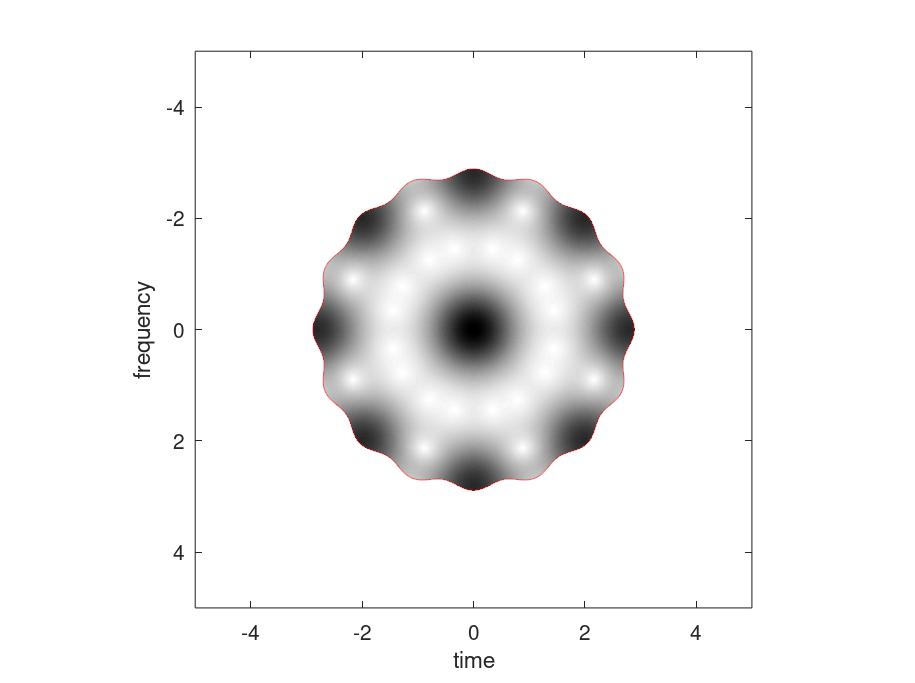}
		\includegraphics[height=5cm]{./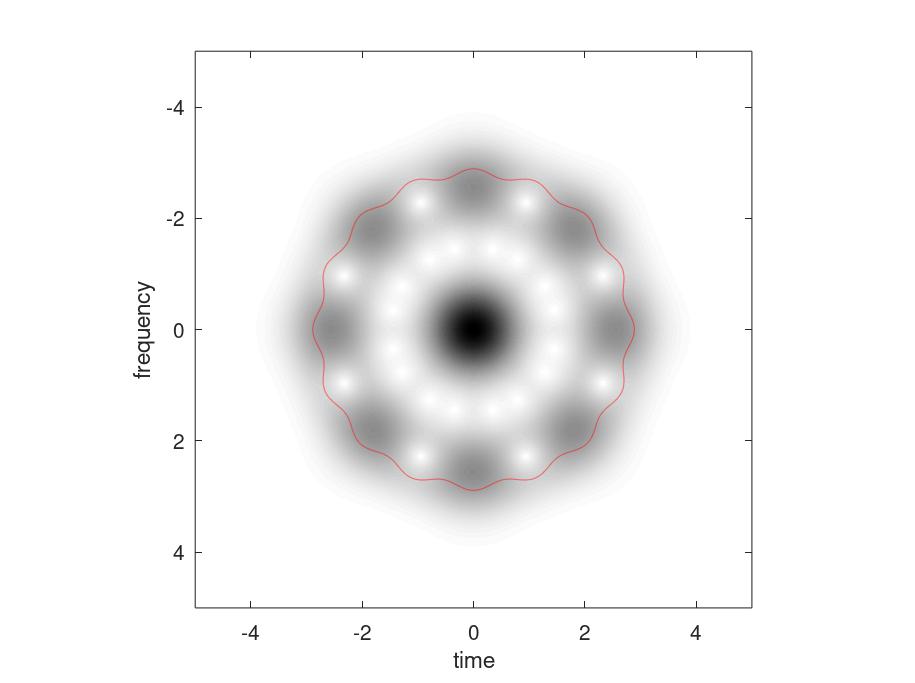}
		\caption{Time-frequency filters can suppress details along the mask boundary. Plot of $|V_\varphi f|\cdot \chi_\Omega$ (left) and $|V_\varphi 
			H_\Omega f|$ (right), with $\varphi(t)=2^{1/4}e^{-\pi t^2}$, $f$ a linear combination of time-frequency shifts of $\varphi$, and  $\Omega$ a disk (top) compared to  a slightly 
			perturbed  disk (bottom).}
		\label{fig_proj}
	\end{center}
\end{figure}

\subsubsection*{Proposed estimator} We first fix a function $\varphi \in \mathcal{S}(\mathbb{R}^d)$, which we call \emph{reconstruction window}, normalized by $\lVert\varphi\rVert_{2}=1$. A typical choice is a Gaussian $\varphi(t)=2^{d/4} e^{-\pi t^2}$. We then let
\begin{align}\label{eq_asp}
\rho(z):=\frac{1}{K}\sum_{k =1}^K \big|V_\varphi\big(H_\Omega \No_k\big)(z) \big|^2, \qquad
z=(x,\xi) \in \mathbb{R}^d\times\mathbb{R}^d
\end{align}
be the \emph{average observed spectrogram}. As an estimate of $\Omega$, we propose the following lower quantile of $\rho$:
\begin{align}\label{eq_estimator}
\widehat{\Omega} := \Big\{ z \in \mathbb{R}^d\times\mathbb{R}^d\,:\,
\rho(z) \geq \frac{1}{4} \max_{w \in \mathbb{R}^{2d}} \rho(w)
\Big\}.
\end{align}

\subsubsection*{Assumptions} 
We  assume that $\Omega$ is reasonably large in the sense that its measure dominates its perimeter:
\begin{align}\label{eq_as2}
|\Omega| \geq \max\Big\{ 2,  
8\int_{\R^{2d}}|V_gg(z)|^2|z|\ud z \cdot |\partial\Omega| \Big\}.
\end{align}
Here, $\abs{\partial \Omega}$ denotes the \emph{perimeter} of $\Omega$, that is, the $(2d-1)$-dimensional measure of its boundary (see Section \ref{sec_fp} for a detailed discussion).  {In particular, as $\Omega$ is compact, \eqref{eq_as2} requires the perimeter $|\partial\Omega|$ to be finite.}

 {To motivate the largeness condition \eqref{eq_as2}, recall the above discussion of estimation limits. Our assumption means that, in the proposed recovery problem, the measure of the estimation objective $\Omega$ is large with respect to the expected source of error $\partial\Omega$.}
	
 {In the literature, it is most common to analyze time-frequency filters \eqref{eq_Hm} by considering binary masks produced by dilation of a certain domain, $\Omega = r \cdot \Omega_0$, and their large-scale asymptotics, $r \to \infty$ \cite{MR1310645}. In this regime, the largeness condition \eqref{eq_as2} holds for all sufficiently large $r$, as long as $\Omega_0$ is compact and has finite perimeter,
	because $|\Omega|=r^{2d} |\Omega_0|$, while $|\partial\Omega|=r^{2d-1}|\partial\Omega_0|$. Our analysis is finer and less asymptotic, as it concerns a given domain and a concrete largeness condition, that is satisfied at a concrete scale, determined by the relation between the measure and perimeter of the masking domain $\Omega$ and the time-frequency concentration of the model window $g$.}

\subsubsection*{Performance guarantees} 
The \emph{estimation error} is the symmetric difference set
\begin{align}
	\Omega \triangle \widehat{\Omega} = \big(\Omega \setminus \widehat{\Omega}\big)
	\cup \big(\widehat{\Omega} \setminus \Omega\big).
\end{align}
Our main result shows that with overwhelming probability the estimation error is concentrated around the \emph{boundary} of $\Omega$; see Figure \ref{fig:1} for an illustration. More precisely, 
denoting the $r$-neighborhood of a set $E \subseteq \mathbb{R}^{2d}$ by
\begin{align*}
E^r := \big\{z \in \mathbb{R}^{2d}: d(z,E) < r\big\},
\end{align*}
the main result reads as follows.
\begin{theorem}[Recovery up to boundary details]
	\label{mthm1}
	Let $g,\varphi\in\S(\R^d)$ with $\norm{2}{g}=\norm{2}{\varphi}=1$ be model and reconstruction windows.
	Let $\Omega\subset \R^{2d}$ be compact and satisfying the largeness condition \eqref{eq_as2}. Let $K \geq 4$, assume that $K$ independent realizations of $H_\Omega(\mathcal{N})$ are observed, and consider the estimator
	$\widehat{\Omega}$ defined in \eqref{eq_estimator}.
	
	Then there exist positive constants $c, C, r>0$, that depend on $g$ and $\varphi$ but not on $\Omega$, such that with probability at least $1-c \cdot |\Omega^r| \cdot e^{-C K}$, the estimation error satisfies
	\begin{equation}\label{eq_rec}
		\Omega \triangle \widehat{\Omega} \subseteq \big[\partial\Omega\big]^r.
	\end{equation}
\end{theorem}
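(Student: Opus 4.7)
The plan is to compare the random average spectrogram $\rho$ with its deterministic mean $\mu(z):=\E\rho(z)$, show that $\mu$ separates sharply between the $r$-deep interior of $\Omega$ and its $r$-exterior, and then transfer that separation to $\rho$ itself by Gaussian concentration. For the mean, note that $H_\Omega$ is self-adjoint and $\pi(z)\varphi\in\S(\R^d)$, so $V_\varphi H_\Omega\No(z)=\ip{\No}{H_\Omega\pi(z)\varphi}$ is a centered complex Gaussian of variance $\mu(z)=\sigma^2\|H_\Omega\pi(z)\varphi\|_2^2$. Factoring $H_\Omega=V_g^*M_{1_\Omega}V_g$ and using $|V_g(\pi(z)\varphi)(w)|=|V_\varphi g(z-w)|$, this variance expands as a double integral over $\Omega\times\Omega$ whose integrand is a unimodular phase times $V_gg(w-w')\,\overline{V_\varphi g(z-w)}\,V_\varphi g(z-w')$. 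The diagonal contribution $\sigma^2\int_\Omega|V_\varphi g(z-w)|^2\,\ud w$, once $r$ is chosen so that $|V_\varphi g|^2$ places most of its mass in the ball of radius $r$, is $\geq(1-\eta)\sigma^2$ on $\Omega\setminus(\partial\Omega)^r$ and $\leq\eta\sigma^2$ on $\R^{2d}\setminus\Omega^r$. The off-diagonal cross-terms are bounded by the Fubini/coarea inequality
\begin{equation*}
\int_\Omega\!\!\int_{\R^{2d}\setminus\Omega}|V_gg(w-w')|\,\ud w'\,\ud w\ \leq\ |\partial\Omega|\int_{\R^{2d}}|V_gg(z)|\,|z|\,\ud z
\end{equation*}
and its squared analogue; the largeness assumption \eqref{eq_as2} then forces these corrections to be negligible compared to $\sigma^2|\Omega|$. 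This produces deterministic constants $\alpha>4\beta$, depending only on $g,\varphi$, such that $\mu\geq\alpha\sigma^2$ on $\Omega\setminus(\partial\Omega)^r$ and $\mu\leq\beta\sigma^2$ on $\R^{2d}\setminus\Omega^r$.

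For the concentration step, $\rho(z)$ is at each fixed $z$ a normalised sum of $K$ i.i.d.\ squared moduli of complex Gaussians with common variance $\mu(z)$, so $\rho(z)/\mu(z)\sim(2K)^{-1}\chi^2_{2K}$, and standard chi-square tail bounds give $\P(|\rho(z)/\mu(z)-1|\geq\delta)\leq 2e^{-cK\delta^2}$ for $\delta\in(0,1)$. To upgrade this to a uniform bound over $\Omega^r$, I exploit that $z\mapsto V_\varphi H_\Omega\No(z)$ has smooth sample paths with gradient of uniformly bounded variance, controlled by Schwartz-norm functionals of $g$ and $\varphi$: a net of cardinality $\lesssim|\Omega^r|$, combined with the pointwise tail and this Lipschitz estimate, yields $\P(\sup_{z\in\Omega^r}|\rho(z)-\mu(z)|\geq\delta\sigma^2)\leq c|\Omega^r|e^{-CK}$. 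The complement $\R^{2d}\setminus\Omega^r$ is treated by a dyadic covering together with the rapid decay of $\|H_\Omega\pi(z)\varphi\|_2$ as $d(z,\Omega)\to\infty$, which keeps $\sup\rho$ there below $\beta\sigma^2$ with the same type of probability.

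On the resulting good event, $\max_w\rho(w)$ is pinched between $(1-\delta)\alpha\sigma^2$ (attained near any deep-interior point, which exists by the largeness of $\Omega$) and $(1+\delta)\sigma^2$ (using $\|H_\Omega\|_{\mathrm{op}}\leq 1$), so the threshold $\max\rho/4$ sits strictly between $\beta\sigma^2$ and $\alpha\sigma^2/2$. Every $z\in\Omega\setminus(\partial\Omega)^r$ then satisfies $\rho(z)\geq(1-\delta)\alpha\sigma^2>\max\rho/4$ and belongs to $\widehat\Omega$, while every $z\notin\Omega^r$ satisfies $\rho(z)\leq(1+\delta)\beta\sigma^2<\max\rho/4$ and does not, which yields $\Omega\triangle\widehat\Omega\subseteq[\partial\Omega]^r$. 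The principal obstacle I anticipate is in the first step: controlling the off-diagonal contribution to $\mu$ by the perimeter of $\Omega$ via the displayed Fubini/coarea inequality and thereby bringing \eqref{eq_as2} into play. Once $\mu$ is well-understood, the uniform Gaussian concentration is comparatively routine but still requires care to extract the sharp volume factor $|\Omega^r|$ from the Schwartz decay of the windows and to treat the unbounded complement $\R^{2d}\setminus\Omega^r$.
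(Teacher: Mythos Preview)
Your overall strategy---compare $\rho$ to its mean $\mu(z)=\sigma^2\lVert H_\Omega\pi(z)\varphi\rVert_2^2=\sigma^2\vartheta(z)$, show that $\mu$ separates the deep interior from the deep exterior, then transfer via concentration and a covering argument---mirrors the paper's. The concentration and covering steps are essentially equivalent to the paper's Lemma~\ref{lem:aux-hw}, Theorems~\ref{thm:error-outside} and~\ref{thm:error-inside}, and Proposition~\ref{prop:max-rho}, only phrased differently (direct $\chi^2$ tails and Lipschitz nets in place of Hanson--Wright on truncated eigenexpansions and reproducing-kernel local suprema).

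The genuine gap is exactly where you flag the ``principal obstacle.'' Your off-diagonal cross-term is $\langle H_\Omega\pi(z)\varphi,\pi(z)\varphi\rangle-\lVert H_\Omega\pi(z)\varphi\rVert_2^2$, and your displayed coarea inequality bounds it, at each fixed $z$, by a quantity that is \emph{uniform in $z$ but proportional to $|\partial\Omega|$} (after using $|V_\varphi g|\le 1$). The largeness condition \eqref{eq_as2} only says $|\partial\Omega|\lesssim|\Omega|$; it does not make $|\partial\Omega|$ small in absolute terms. So the correction is indeed ``negligible compared to $\sigma^2|\Omega|$'' as you write, but that is not what you need: to extract constants $\alpha>4\beta$ depending only on $g,\varphi$ you need the cross-term to be pointwise small compared to $\sigma^2$ when $z$ is far from $\partial\Omega$. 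As written, your bound gives $\alpha,\beta$ that degrade with $|\partial\Omega|$, and the threshold separation collapses for large domains.

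The paper closes this gap with a spectral observation (Proposition~\ref{prop:estimates}) that bypasses the double integral entirely. Writing $\vartheta(z)=\sum_m\lambda_m^2|V_\varphi f_m(z)|^2$ and $\chi_\Omega*|V_\varphi g|^2(z)=\sum_m\lambda_m|V_\varphi f_m(z)|^2$, the cross-term is $\sum_m(\lambda_m-\lambda_m^2)|V_\varphi f_m(z)|^2$, which, using $0\le\lambda_m\le 1$ and $\sum_m|V_\varphi f_m(z)|^2\le 1$, is bounded by $1-\chi_\Omega*|V_\varphi g|^2(z)$ on $\Omega$ and by $\chi_\Omega*|V_\varphi g|^2(z)$ on $\Omega^c$. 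Hence $|\vartheta(z)-\chi_\Omega(z)|\le 2\,|\chi_\Omega(z)-\chi_\Omega*|V_\varphi g|^2(z)|$, and \emph{this} is what decays pointwise like $\mathrm{dist}(z,\partial\Omega)^{-1}$ with constants depending only on $g,\varphi$. The largeness assumption \eqref{eq_as2} is then used elsewhere---to guarantee many eigenvalues $\lambda_m\ge 3/4$ (Lemma~\ref{lem:aux-eigenvalues}) and hence a lower bound on $\lVert\rho\rVert_\infty$ in Proposition~\ref{prop:max-rho}---not to control $\mu$ pointwise.
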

Let us remark on some aspects of Theorem \ref{mthm1}:

\noindent $\bullet$ The estimation method \emph{does not require knowledge of the noise level} $\sigma>0$. Similarly, the performance guarantees are independent of $\sigma$.

\noindent $\bullet$  {As discussed above, recovery up to a neighborhood of the boundary of $\Omega$ is intuitively the best possible result, as the filter $H_\Omega$ tends to suppress boundary details. Theorem \ref{mthm1} shows that given a confidence level $\varepsilon>0$ this is achieved with only $\mathcal{O}\big(\log(\abs{\Omega^r}/\varepsilon)\big)$ measurements of ambient noise.}

\noindent $\bullet$ While the estimation method does not require precise knowledge of the model window $g$, 
its time-frequency concentration does impact the performance guarantees through the largeness condition \eqref{eq_as2}.
Similarly, the correlation between the model and reconstruction windows affects the overall estimation performance.  {In fact, as we show in Remark \ref{rem_cons}, the dependence  of the constants $c,C,r$ on $g$ and $\varphi$ can be made explicit, and the main contribution is from} the following mutual correlation measure:
\begin{align*}
\int_{\mathbb{R}^{2d}} |V_\varphi g (z)|^2|z|  \ud z.
\end{align*}
Thus, in practice, the choice of the reconstruction window $\varphi$ amounts to a rough guess of the time-frequency profile of the model window $g$,  {with $\varphi=g$ being the ideal choice. Without knowledge of $g$,
	a natural choice for the reconstruction window is the standard Gaussian $\varphi(t)=2^{1/4}e^{-\pi t^2}$, or its time-frequency shifted version $\varphi(t)=2^{1/4}e^{-\pi (t-x)^2}e^{-2\pi i \xi t}$ if an educated guess for the time-frequency center $(x,\xi)$ of $g$ is available. See Figure~\ref{fig:1} for an illustration with $\varphi=g$ or $\varphi\neq g$.}

As an application of Theorem \ref{mthm1} we see that the measure of the estimation error is dominated by the perimeter of the masking domain $\abs{\partial \Omega}$.
\begin{theorem}\label{mcoro1}
Under the conditions of Theorem \ref{mthm1}, with the stated probability, the estimation error satisfies
\begin{equation*}
	\big|{\Omega \triangle \widehat{\Omega}}\big| \leq C_1 \cdot |\partial\Omega|,
\end{equation*}
where the constant $C_1$ depends on $\varphi$ and $g$ but not on $\Omega$.
\end{theorem}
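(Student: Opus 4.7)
My plan is to combine Theorem~\ref{mthm1} with a standard upper bound on the Lebesgue measure of a tubular neighborhood of a codimension-one set. Applying Theorem~\ref{mthm1} directly yields, with the stated probability, the inclusion $\Omega \triangle \widehat{\Omega} \subseteq [\partial\Omega]^r$, and hence
\begin{equation*}
\bigl|\Omega \triangle \widehat{\Omega}\bigr| \leq \bigl|[\partial\Omega]^r\bigr|.
\end{equation*}
The statement therefore reduces to establishing a Minkowski-type estimate of the form $|[\partial\Omega]^r| \leq C_1 \cdot |\partial\Omega|$, where $C_1$ is allowed to depend on $g$ and $\varphi$ (through the fixed radius $r$ produced by Theorem~\ref{mthm1}) but not on $\Omega$.

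For the second step, I would invoke the upper Minkowski-content bound for $(2d-1)$-rectifiable sets: for any such $\Sigma \subset \R^{2d}$ and any fixed $r>0$, one has $|[\Sigma]^r| \leq C \cdot r \cdot \mathcal{H}^{2d-1}(\Sigma)$ with $C$ depending only on the dimension and on an upper bound for $r$. Concretely, a Vitali-type covering argument extracts a cover of $\partial\Omega$ by roughly $|\partial\Omega|/r^{2d-1}$ balls of radius $r$, each of which contributes at most $C_d \, r^{2d}$ to $|[\partial\Omega]^r|$. Alternatively, one may split $[\partial\Omega]^r$ into its inward and outward $r$-layers and apply the coarea formula to the signed distance function $d_{\partial\Omega}$, then bound the $(2d-1)$-measure of the parallel level sets in terms of $|\partial\Omega|$. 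Either route delivers the desired inequality.

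The main technical obstacle is matching the regularity of $\partial\Omega$ with the Minkowski-content estimate. If $\Omega$ is a Lipschitz domain or has piecewise smooth boundary, the bound is classical; for more general sets of finite perimeter one has to ensure that the topological boundary appearing in Theorem~\ref{mthm1} does not significantly exceed the reduced boundary, whose $(2d-1)$-measure equals the perimeter in the BV sense. I expect the regularity framework indicated in Section~\ref{sec_fp} to be tailored precisely so that this identification is valid, after which the proof is essentially one step beyond Theorem~\ref{mthm1}.
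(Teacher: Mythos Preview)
Your approach has a genuine gap: the Minkowski-type estimate $|[\partial\Omega]^r| \leq C_r\,|\partial\Omega|$ is \emph{false} under the paper's hypotheses. The only standing assumptions on $\Omega$ are compactness and \eqref{eq_as2}, with $|\partial\Omega|$ defined as the $(2d-1)$-dimensional Hausdorff measure of the topological boundary. Neither rectifiability nor a Lipschitz condition is assumed, and Section~\ref{sec_fp} does not supply one --- it only records that $\mathrm{Var}(\chi_\Omega)\le|\partial\Omega|$. A concrete counterexample in $\R^2$ is $\Omega=[0,N]^2\cup F$, where $F$ is a finite set of $M$ points that are $2r$-separated and far from the square. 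Then $|\partial\Omega|=4N$, \eqref{eq_as2} holds once $N$ is large, yet $|[\partial\Omega]^r|\ge M\pi r^2$, which can be made arbitrarily large relative to $|\partial\Omega|$. Both of your suggested arguments break on this example: the Vitali count ``$|\partial\Omega|/r^{2d-1}$ balls'' is wrong because $\mathcal{H}^{2d-1}(F)=0$ while $F$ still needs $M$ balls; and in the coarea route, the parallel level sets $\{d_{\partial\Omega}=t\}$ near an isolated point are spheres of radius $|t|$, whose $\mathcal{H}^{2d-1}$-measure is not controlled by $\mathcal{H}^{2d-1}(\partial\Omega)$.

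The paper does not deduce Theorem~\ref{mcoro1} from the conclusion \eqref{eq_rec} of Theorem~\ref{mthm1}. It goes back to the \emph{proof} of Theorem~\ref{mthm1} and uses the sharper containment established there: on the high-probability event, $\Omega\triangle\widehat\Omega$ lies inside the union of the two sets in \eqref{eq_b}, namely $\Omega\cap\{|\vartheta-1|>1/6\}$ and $\Omega^c\setminus\mathcal A\big((100\|V_\varphi g\|_1^2)^{-1}\big)$. The measures of these sets are then bounded by $C\,|\partial\Omega|$ directly via Proposition~\ref{prop:estimates}(ii) and Proposition~\ref{prop:size-Ag^c}, both of which rest on the regularization estimate \eqref{eq_reg} and hence only use the BV perimeter. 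In the counterexample above, these two sets do \emph{not} contain neighborhoods of the isolated points (since $\vartheta$ is a mollification of $\chi_\Omega$ and ignores null sets), so the paper's bound remains $\lesssim N$ while $|[\partial\Omega]^r|$ blows up. The passage from \eqref{eq_b} to $[\partial\Omega]^{r'}$ in the proof of Theorem~\ref{mthm1} is thus a lossy step, and you cannot recover Theorem~\ref{mcoro1} from its output alone.
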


Finally, we also derive a bound for the expected reconstruction error.
\begin{theorem}\label{mthm2}
Under the conditions of Theorem \ref{mthm1}, there exist constants $C_1, C_2, r>0$ that depend on $g$ and $\varphi$ but not on $\Omega$ such that
if
\begin{align}\label{eq_o1}
K \geq C_1 \log\big(\abs{\Omega^r} \big),
\end{align}
then
\begin{equation*}
	\mathbb{E} \left\{ \big|{\Omega \triangle \widehat{\Omega}}\big|\right\} \leq C_2 \cdot |\partial\Omega|.
\end{equation*}
\end{theorem}
 {The constant $r$ in Theorem \ref{mthm2} plays a moderate role for regular domains $\Omega$. For example, in the asymptotic dilation regime, where $\Omega$ is obtained by increasing dilation of a given smooth domain, $|\Omega^r|$ can often be replaced by $|\Omega|$.}

 {Our proofs below provide more precise versions of the main results, with additional qualitative information on the various constants. In particular, Theorem~\ref{thm:error-expec} below is a more detailed version of Theorem~\ref{mthm2}, while Remark \ref{rem_cons2} provides additional information on the constants $C_1, C_2, r$.}

\subsection{Real-valued noise}\label{sec_real}
We now briefly discuss how to adapt the main result to \emph{real} white noise. Consider the estimation problem in Section \ref{sec_r1}, but this time under the assumption that the ambient noise is real-valued and white. Specifically, suppose that we observe $H_\Omega(\mathcal{N}_1), \ldots, H_\Omega(\mathcal{N}_K)$, where
$\mathcal{N}_1, \ldots, \mathcal{N}_K$ are independent copies of real-valued white noise on $\mathbb{R}^d$ with variance $\sigma^2>0$.

Setting $K' := \lfloor K/2 \rfloor$, the new stochastic processes
\begin{align}\label{eq_complexification}
\mathcal{N}'_k := \mathcal{N}_k + i \cdot \mathcal{N}_{k+K'}, \qquad k=1, \ldots, K',
\end{align}
are independent copies of complex white noise with variance $2 \sigma^2$. As the averaged spectrogram \eqref{eq_asp} associated with the observations $H_\Omega(\mathcal{N}'_1), \ldots, H_\Omega(\mathcal{N}'_{K'})$ can be computed from the available data,
\begin{align}\label{eq_real_estim}
\rho(z):=\frac{1}{K'}\sum_{k =1}^{K'} \big|V_\varphi \big(H_\Omega \No'_k \big)(z) \big|^2 = \frac{1}{K'}\sum_{k =1}^{K'} \big|V_\varphi\big(H_\Omega \No_k\big)(z) + i \cdot V_\varphi\big(H_\Omega \No_{k+K'}\big)(z)\big|^2,
\end{align}
and $K' \leq K \leq 3 K'$ as soon as $K \geq 4$, the performance guarantees from Theorems~\ref{mthm1}, \ref{mcoro1}, and  \ref{mthm2}  still hold, with possibly larger constants.

Finally, note that if $\mathcal{N}_1, \ldots, \mathcal{N}_K$
are independent copies of complex-valued white noise, then so are
$\mathcal{N'}_1, \ldots, \mathcal{N}'_{K'}$ as given by \eqref{eq_complexification}. Hence, the \emph{complexified estimator} \eqref{eq_real_estim} satisfies the performance guarantees of Theorem \ref{mthm1} regardless of whether the ambient noise is real or complex-valued.

\begin{figure}[ht]
\begin{center}
\includegraphics[width=6cm,height=5.76cm]{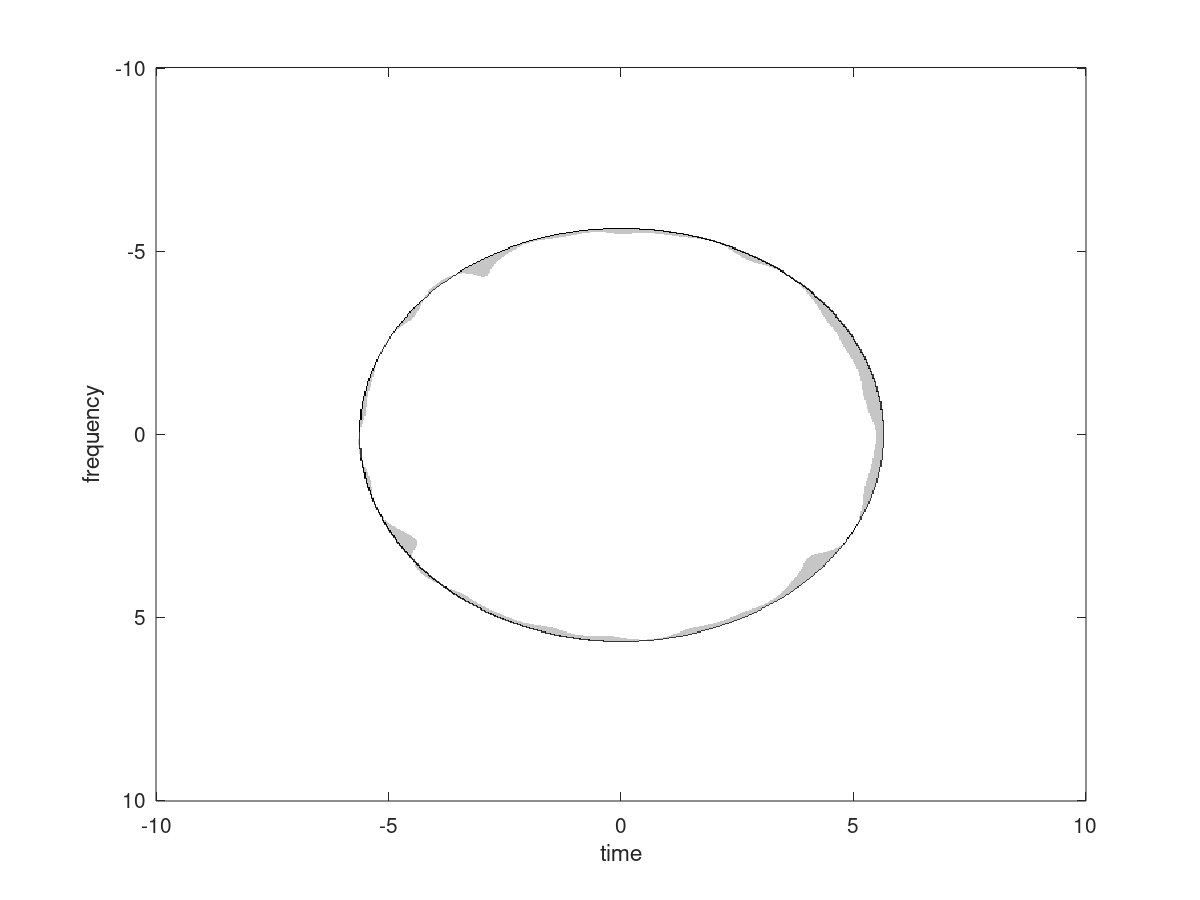}
\includegraphics[width=6cm,height=5.76cm]{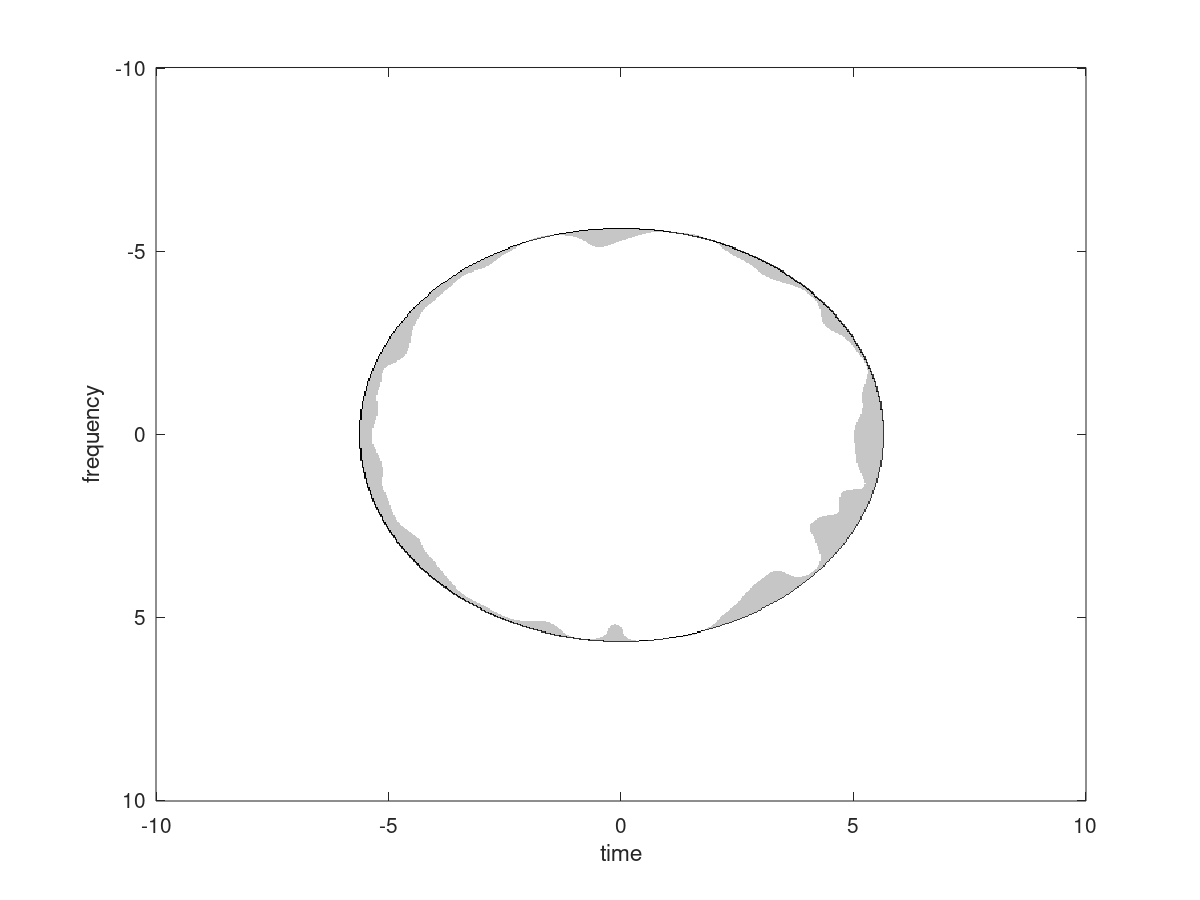}
\\
\includegraphics[width=6cm,height=5.76cm]{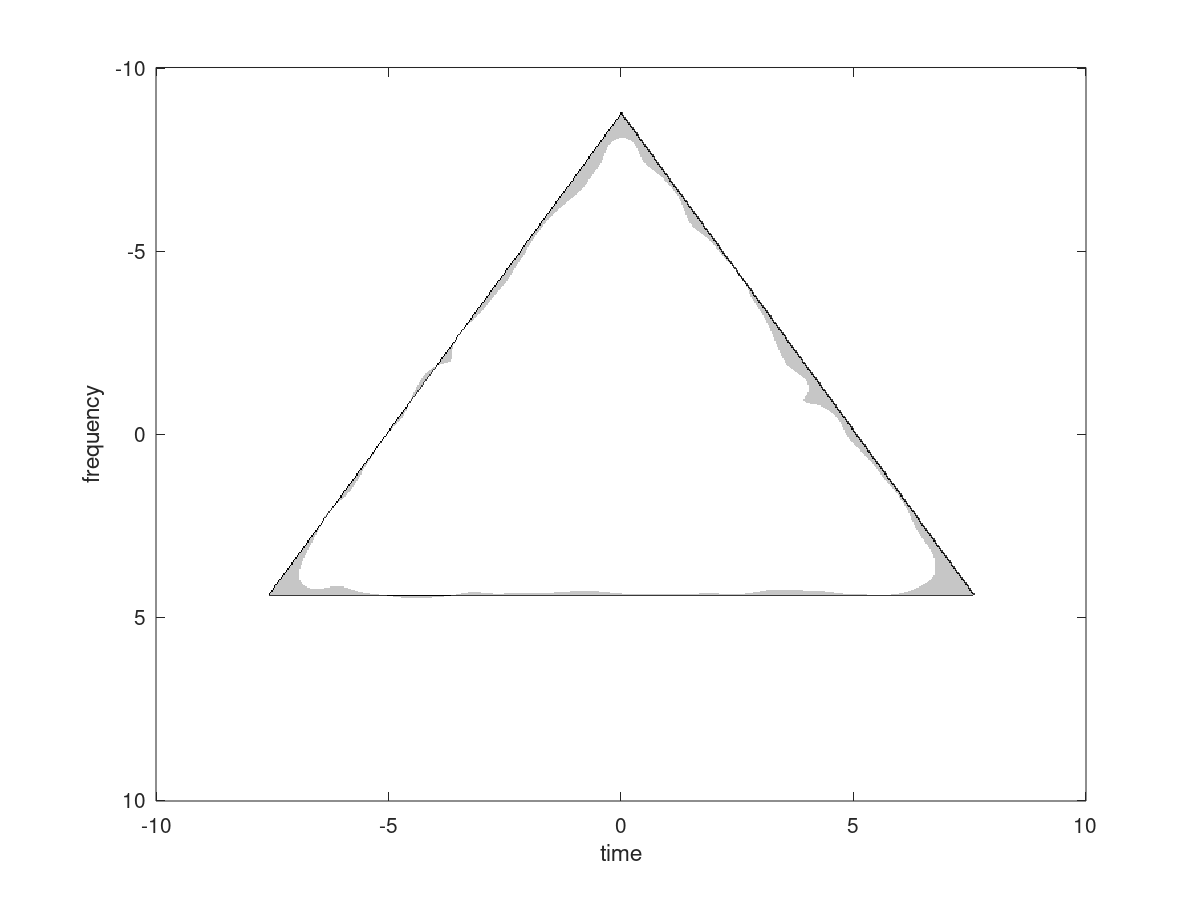}
\includegraphics[width=6cm,height=5.76cm]{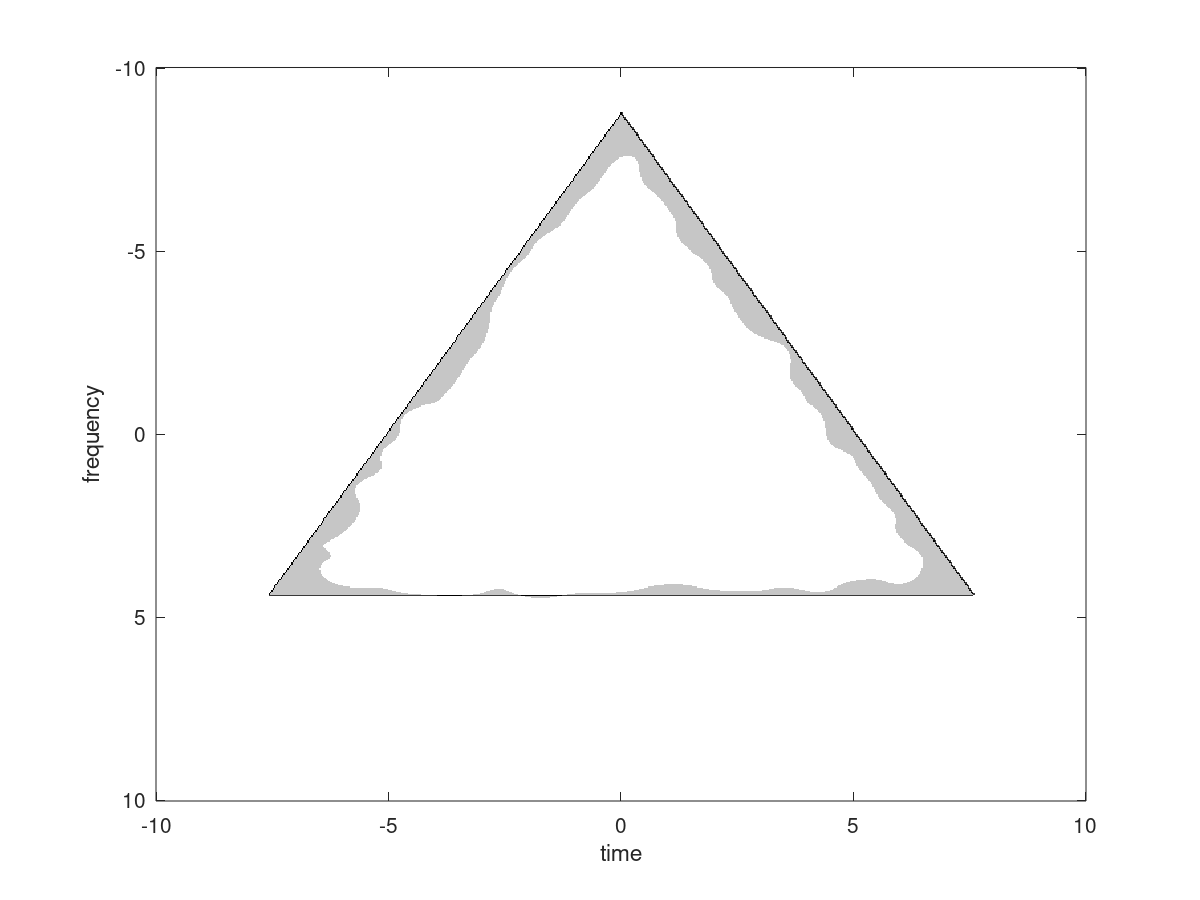}
\caption{The symmetric difference $\Omega\Delta\widehat{\Omega}$ is depicted in grey and $\partial \Omega$ in black.\\
Parameters:  $|\Omega|=100$; $K=20$; $\varphi(t)=2^{1/4}e^{-\pi t^2}$; left column: $g=\varphi$; right column: $g(t)= \varphi(t)t^2$.}
\label{fig:1}
\end{center}
\end{figure}

\subsection{Related literature and technical overview}\label{sec_d}
The spectrogram of a function is the squared absolute value of its short-time Fourier transform \eqref{eq_stft} and provides a useful proxy for the elusive notion of a \emph{time-varying spectrum} \cite{MR1602835}. As with other quadratic time-frequency representations in the so-called \emph{Cohen class},
its use represents a compromise between many desirable but only approximately achievable properties \cite{MR4182488}. A specific merit of the spectrogram is being non-negative and efficiently computable, which makes it a popular tool in signal processing \cite{papandreou2018applications}. On the other hand, the choice of the analyzing window $g$ introduces a trade-off between spatial and temporal resolution. In the acoustic filter model \eqref{eq_Hm}, $g$ is a \emph{partially unknown degree of freedom}. An important consequence of our work is that the estimation of a binary mask succeeds based only on a rough educated guess for $g$ (namely, the reconstruction window $\varphi$).

The estimator that we propose and analyze is a level set of the average observed spectrogram \eqref{eq_asp} taken with an \emph{observation dependent threshold} (lower quantile). Level sets of spectrograms are standard tools in signal processing. For example, a common method to extract a signal embedded into noise consists in computing a spectrogram level set with a carefully chosen threshold related to the expected signal to noise ratio. See \cite{golisu22} for performance guarantees for signal estimation under white noise contamination based on spectrogram level sets with Gaussian windows, and \cite{MR4162314} for a systematic study of analytic time-frequency representations of white noise.

At the technical level, we make the central assumption that the ambient noise is \emph{Gaussian}, which translates into suitable concentration properties for various statistical quantities. Indeed, we make essential use of concentration of measure for quadratic forms \cite{ada15,ruve13}. In order to apply these in an analog setting, we consider adequate low-rank approximations of the filter \eqref{eq_Hm}.

It is a general principle that the low-rank core of a matrix can be revealed by sketching it with independent random vectors. One usually takes a number of random input vectors slightly exceeding the matrix numerical rank and then computes a reduced singular value decomposition \cite{MR2806637}. Remarkably, as we show in Theorem \ref{mthm1}, a time-frequency filter with a \emph{binary} mask can be learned from far fewer noise measurements, and without costly linear algebra computations. (Indeed, the numerical rank of $H_\Omega$ is $\approx \abs{\Omega}$ \cite[Chapter 1]{MR2883827} while the number of required measurements is $\approx \log(\abs{\Omega})$.) The proof of Theorem \ref{mthm1} exploits a priori knowledge of the \emph{asymptotic profile of the spectrum and eigenvectors} of the filter $H_\Omega$, which reduces the size of the learning task. At the technical level, we build and significantly elaborate on the spectral analysis of \cite{abgrro16, abpero17, baconi20}.

Finally, we point out some limitations and possible extensions of our work. First, while noise can be multifaceted \cite{1550188}, ours is white and Gaussian. Second, we work with signals directly in the continuous domain, while in practice measurements are finite, and effective duration and bandwidth need to be taken into account \cite{MR2883827,papandreou2018applications}. These are modeling simplifications and we expect that our work can be suitably generalized.  {For example, our methods can be suitably extended to ambient noise with finite bandwidth $W$, modeled as $\mathcal{N}_W := \mathcal{F}^{-1} (1_{[-W/2,W/2]^d} \mathcal{F} \mathcal{N})$, where $\mathcal{F}$ is the Fourier transform and $\mathcal{N}$ is white noise, provided that the frequency mask $\Omega$ satisfies $\Omega \subset \mathbb{R}^d \times 1_{[-W/2,W/2]^d}$ --- as, in this case, $H_\Omega \mathcal{N}$ is approximately equal to $H_\Omega \mathcal{N}_W$. In this respect, we mention the different but related problem of \emph{designing} suitably smooth and fast decaying input signals for operator identification \cite{MR3207671}.}

\subsection{Organization} Section \ref{sec_pre} establishes the notation and preliminary results. Basic background on time-frequency analysis, noise and spectral theory is provided in Section \ref{sec_tff}. The main results are then proved in Section \ref{sec_main}.

\section{Preliminaries}\label{sec_pre}
\subsection{Notation}
We denote  balls in $\R^d$ by $B_R(z)$ and define the distance between a point $z$ and a set $\Omega$ by $\text{dist}(z,\Omega)=\inf\{|z-w|:\ w\in\Omega\}$. Moreover, we write $A\lesssim B$ if there exists a universal constant $C>0$ such that $A\leq CB$. We use $\|\cdot\|_s$ and $\|\cdot\|_F$ for the spectral and Frobenius norms of a matrix in $\C^{m\times m}$ respectively. The convolution of two functions $f,g$ on $\mathbb{R}^d$ is $f\ast g(z)=\int_{\R^d} f(w)g(z-w)\ud w$.

\subsection{Sets with finite perimeter}\label{sec_fp}
For a compact set $\Omega \subset \mathbb{R}^{2d}$ we let $\abs{\partial \Omega}$ denote the $(2d-1)$-dimensional Hausdorff measure of the topological boundary of $\Omega$. Whenever this measure is finite, the set $\Omega$ has \emph{finite perimeter} in the sense that its characteristic  function $\chi_\Omega$ has bounded variation. Moreover,
in this case, $\mathrm{Var}(\chi_\Omega) \leq \abs{\partial \Omega}$, as $\mathrm{Var}(\chi_\Omega)$ is the $(2d-1)$-dimensional Hausdorff measure of the measure-theoretic boundary of $\Omega$, which is a subset of the topological boundary of $\Omega$. In our main results, the assumption that $\Omega$ is compact, together with \eqref{eq_as2}, 
implies that $\abs{\partial \Omega} < \infty$. See \cite[Chapter 5]{evga92} for comprehensive background on sets of finite perimeter.

We will use the following estimate on regularization with an integrable convolution kernel $\psi:\R^{2d}\rightarrow \R$:
\begin{align}\label{eq_reg}
\big\|\chi_\Omega\ast \psi -\big(\textstyle{\int}\psi\big)\chi_\Omega\big\|_1\leq \int_{\R^{2d}}|\psi(z)| |z|\ud z\cdot|\partial\Omega|,
\end{align}
for a proof, see, e.g., \cite[Lemma~3.2]{abgrro16}.

\subsection{Complex Gaussian vectors}
A complex Gaussian vector is a random vector $X$ on $\mathbb{C}^n$ such that $(\text{Re} (X), \text{Im} (X))$ is normally distributed, has zero mean, and vanishing pseudo-covariance $\mathbb{E}\big\{ X   X^t \big\} = 0$. The stochastics of such $X$ are thus encoded in its covariance matrix
\begin{align*}
	\Sigma = \mathbb{E} \big\{ X   X^* \big\},
\end{align*}
and we write $X\sim \mathcal{CN}(0,\Sigma)$.

\subsection{Complex white noise}\label{sec_noise}
Let $\mathcal{S}_\R(\R^d)$ and $\mathcal{S}(\R^d)$ be the spaces of real and complex valued Schwartz functions and $\S'_\R(\R^{d})$, $\S'(\R^d)$ their respective duals (the spaces of tempered distributions). 
By the Bochner-Minlos theorem, there exists a unique probability measure on $\mathcal{B}_\R$, the family of Borel subsets of $\S'_\R(\R^d)$, such that 
$$
\int_{S_\R'(\R^d)} e^{i\langle f,\varphi\rangle}\ud\mu_\R(\varphi)=e^{-\frac{\sigma^2}{4}\|f\|^2},
$$
for all $f\in\S_\R(\R^d)$. 
With the identification $\S(\R^d)=\S_\R(\R^d)+i\S_\R(\R^d)\simeq\S_\R(\R^d)\times\S_\R(\R^d)$, we introduce the product measure $\mu_\C=\mu_\R\times \mu_\R$, defined on $\mathcal{B}_\C$, the product $\sigma$-algebra on $\S^\prime(\R^d)$. The measure space $\big(\S^\prime(\R^d),\mathcal{B}_\C,\mu_\C\big)$ is called \emph{complex white noise}. Intuitively, we think of this space as the law of a random distribution $\mathcal{N}$.
If $f,g\in \S(\R^d)$ and $\mathcal{N}$ is an instance of complex white noise, then  $\langle f,\mathcal{N}\rangle \sim \mathcal{C}\No\big(0,\sigma^2\|f\|^2\big)$
and 
$\E\left\{\langle f,\mathcal{N}\rangle \overline{\langle g,\mathcal{N}\rangle}\right\}=\sigma^2\langle f,g\rangle.$
For more on complex white noise, see for example \cite[Section 6]{hida}.

\subsection{Concentration of measure for quadratic forms}
We will need the following complex version of of the Hanson-Wright inequality \cite{ada15,ruve13}, which we prove for completeness.
\begin{theorem}\label{thm:hansonwright}
	Let $X$ be an $n$-dimensional complex Gaussian random variable with $X\sim \mathcal{CN}(0,\Sigma)$ and $A\in\C^{n\times n}$ Hermitian. Then for every $t>0$,
	$$
	\P\big(|\langle AX,X\rangle -\E\{\langle AX,X\rangle\} |\geq t\big)\leq 2\ \emph{exp}\left(-C_{\mathrm{hw}}\min\left(\frac{t^2}{\|\Sigma\|^2_s\|A\|_F^2},\frac{t}{\|\Sigma\|_s\|A\|_s}\right)\right)
	$$
	for some universal constant $C_{\mathrm{hw}}>0$.
\end{theorem}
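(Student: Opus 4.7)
The plan is to deduce the complex Hanson--Wright inequality from its real counterpart via a standard realification of the complex Gaussian vector $X$. Write $X = X_R + i X_I$ and set $Y := (X_R, X_I)^t \in \mathbb{R}^{2m}$. The assumptions that $X$ is centered complex Gaussian with covariance $\Sigma = \mathbb{E}\{XX^*\}$ and vanishing pseudo-covariance $\mathbb{E}\{XX^t\}=0$ translate, upon decomposing $\Sigma = \Sigma_R + i \Sigma_I$ (with $\Sigma_R$ real symmetric, $\Sigma_I$ real antisymmetric), into the statement that $Y$ is real Gaussian with covariance matrix
\begin{equation*}
\Sigma_Y = \tfrac{1}{2}\begin{pmatrix} \Sigma_R & -\Sigma_I \\ \Sigma_I & \Sigma_R \end{pmatrix}.
\end{equation*}
A direct computation using $A = A_R + i A_I$ (with $A_R$ symmetric and $A_I$ antisymmetric, since $A$ is Hermitian) shows that $X^* A X = Y^t \tilde{A} Y$ for the real symmetric matrix
\begin{equation*}
\tilde{A} = \begin{pmatrix} A_R & -A_I \\ A_I & A_R \end{pmatrix}.
\end{equation*}

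Next I would whiten: write $Y = \Sigma_Y^{1/2} Z$ with $Z \sim \mathcal{N}(0, I_{2m})$, so that $X^*AX = Z^t B Z$ with $B := \Sigma_Y^{1/2}\, \tilde{A}\, \Sigma_Y^{1/2}$, and apply the real Hanson--Wright inequality (see, e.g., \cite{ruve13, ada15}) to the standard Gaussian vector $Z$. This yields
\begin{equation*}
\P\big(|Z^t B Z - \E\{Z^t B Z\}| \geq t\big) \leq 2 \exp\!\left(-c \min\!\left(\frac{t^2}{\|B\|_F^2},\, \frac{t}{\|B\|_s}\right)\right).
\end{equation*}

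The final step is the linear algebra conversion of the norms of $B$. Using submultiplicativity,
\begin{equation*}
\|B\|_s \leq \|\Sigma_Y\|_s\, \|\tilde{A}\|_s, \qquad \|B\|_F \leq \|\Sigma_Y\|_s\, \|\tilde{A}\|_F.
\end{equation*}
The key identities I would verify are that the real representation $\tilde{A}$ of a Hermitian matrix has exactly the eigenvalues of $A$ each with multiplicity $2$, hence $\|\tilde{A}\|_s = \|A\|_s$ and $\|\tilde{A}\|_F^2 = 2\|A\|_F^2$; and that the analogous structural identity applied to $\Sigma$ gives $\|\Sigma_Y\|_s = \tfrac{1}{2}\|\Sigma\|_s$. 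The cleanest way to see both is via the unitary conjugation by $U = \tfrac{1}{\sqrt{2}}\bigl(\begin{smallmatrix} I & i I \\ I & -i I\end{smallmatrix}\bigr)$, which block-diagonalizes the real representation into $\mathrm{diag}(A, \bar A)$ and $\tfrac{1}{2}\mathrm{diag}(\Sigma, \bar \Sigma)$ respectively. Substituting these bounds into the real Hanson--Wright estimate produces the asserted inequality with $C_{\mathrm{hw}}$ differing from the real constant $c$ only by an absolute factor.

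The main obstacle is not analytic but bookkeeping: one must carefully justify the realification (the vanishing pseudo-covariance is what makes $\Sigma_Y$ have the stated block form rather than an arbitrary $2m \times 2m$ positive matrix) and keep track of the constants $\tfrac{1}{2}$ and $\sqrt{2}$ arising from the norm identities so that they may be absorbed into the universal constant $C_{\mathrm{hw}}$.
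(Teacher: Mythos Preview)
Your proposal is correct and follows essentially the same realification strategy as the paper: both pass to $Y=(\operatorname{Re}X,\operatorname{Im}X)^t$, form the block matrix $\tilde A$, and invoke a real Hanson--Wright inequality. The only difference is that the paper cites \cite[Theorem~2.5]{ada15} (a version already stated for Gaussians with general covariance $\Sigma'$) and therefore skips your explicit whitening step $Y=\Sigma_Y^{1/2}Z$; your computation $\|\Sigma_Y\|_s=\tfrac12\|\Sigma\|_s$ is in fact more accurate than the paper's claim $\|\Sigma'\|_s=\|\Sigma\|_s$, though the discrepancy is absorbed into $C_{\mathrm{hw}}$ either way.
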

\begin{proof}
By absorbing the matrix $\Sigma$ into $A$ we can assume without loss of generality that $\Sigma$ is the identity matrix.
Define   $Y:=(\text{Re}(X),\text{Im}(X))^T\in \R^{2n}$ as well as
$$
\mathbb{A}:=\begin{pmatrix}
	\text{Re}(A)&-\text{Im}(A)\\ \text{Im}(A)& \text{Re}(A)
\end{pmatrix}\in\R^{2n\times 2n}.
$$
As $A$ is Hermitian, one has that $\mathbb{A}$ is symmetric and $\langle AX,X\rangle =\langle \mathbb{A}Y,Y\rangle$ which implies that $\|A\|_s=\|\mathbb{A}\|_s$. On the other hand, it is clear from its definition that $\|\mathbb{A}\|_F^2=2\|A\|_F^2$.  Moreover, if $X\sim C\mathcal{N}(0,I)$ then $Y\sim \mathcal{N}\big(0,\tfrac{1}{2}I\big)$. The result then follows after an application of \cite[Theorem 2.5]{ada15}.
\end{proof}

\section{Time-frequency filters and the range of the short-time Fourier transform}\label{sec_tff}

\subsection{Basic properties of the STFT}

Let $z=(x,\xi)\in\R^d\times\R^d$. The \emph{time-frequency shifts} of a function $f\in L^2(\R^d)$ are \[\pi(z)f(t)=e^{2\pi i \xi\cdot t}f(t-x), \qquad z=(x,\xi) \in \mathbb{R}^d\times\mathbb{R}^d.\]
The short-time Fourier transform (STFT) of a function $f \in L^2(\mathbb{R}^d)$ with respect to a window $g \in L^2(\mathbb{R}^d)$ is defined by \eqref{eq_stft} and satisfies the following \emph{orthogonality relation}
\begin{equation}
\label{eq:ortho-rel}
\int_{\R^{2d}} V_{g_1}f_1(z)\overline{V_{g_2}f_2(z)}\ud z=\langle f_1,f_2\rangle \langle g_2,g_1\rangle,\quad f_1,f_2,g_1,g_2\in L^2(\R^d).
\end{equation}
The range of the STFT, $V_g(L^2(\R^d))$, is a reproducing kernel subspace of $L^2(\R^{2d})$ with \emph{reproducing kernel} 
$K_g(z,w)=\|g\|^{-2}_2\langle \pi(w)g,\pi(z)g\rangle$. 
Concretely, this means that $V_g(L^2(\R^d))$ is closed in $L^2(\R^{2d})$ and the following reproducing formula holds:
\begin{equation}\label{eq:rep} 
V_g f(z)=\int_{\R^{2d}}V_gf(w)K_g(z,w)\ud w, \qquad f \in L^2(\mathbb{R}^d).
\end{equation}
As soon as $g \in \mathcal{S}(\mathbb{R}^d)$, the definition of the short-time Fourier transform \eqref{eq_stft} extends to $f \in \mathcal{S}'(\mathbb{R}^d)$ in the usual way and yields a continuous function $V_g f: \mathbb{R}^{2d} \to \mathbb{C}$; see \cite[Chapter 11]{groe1}, or \cite{benyimodulation}. 

\subsection{Spectral properties of time-frequency filters}\label{sec_tf}
Let $g\in\S(\R^d)$ with $\norm{2}{g}=1$ and $\Omega\subset\R^{2d}$ be compact. The \emph{time-frequency localization operator} $H_\Omega:L^2(\mathbb{R}^d) \to L^2(\mathbb{R}^d)$ is defined by the formula:
\begin{align}\label{eq_locop}
H_\Omega f= \int_\Omega V_gf(z)\pi (z)g \ud z,
\end{align}
where
the integral is to be interpreted in the weak sense
\begin{align*}
	\langle H_\Omega f, h \rangle = \int_\Omega V_gf(z) \langle \pi (z)g,h\rangle \ud z, \qquad h \in L^2(\mathbb{R}^d).
\end{align*}
Originally introduced in \cite{da88}, $H_\Omega$ is a basic tool in signal processing that models a non-stationary filter. As an operator, $H_\Omega$ is compact and self-adjoint. {In terms of the isometry $V_g:L^2(\mathbb{R}^d) \to L^2(\mathbb{R}^{2d})$, the time-frequency localization operator can be factored as $H_\Omega f= V_g^*(\chi_\Omega \cdot V_g f)$, which shows that $V_g H_\Omega f$ is the orthogonal projection of $\chi_\Omega V_g f$ onto $V_g (L^2(\mathbb{R}^d))$. (In fact, the same observation applies to any mask $m$ in lieu of $\chi_\Omega$ and shows the equivalence of \eqref{eq_Hm_extremal} and \eqref{eq_Hm}.)}

Let $1\geq \lambda_1\geq \lambda_2\geq ....>0$ be its non-zero eigenvalues counted with multiplicities and $\{f_n\}_{ n \in \N} \subset L^2(\mathbb{R}^d)$ the corresponding orthonormal set of eigenfunctions. This set may be incomplete if $H_\Omega$ has a non-trivial null-space --- it could also be finite if $H_\Omega$ has finite rank, but we still write $n \in \mathbb{N}$ with a slight abuse of notation. In any case, $H_\Omega$ has infinite rank as soon as $\Omega$ has non-empty interior \cite[Lemma 5.8]{doro14}. The  set of eigenfunctions are not only orthogonal on $\R^{d}$ but they also satisfy the following \emph{double orthogonality} property:
\begin{equation}
\label{eq:double-orth}
\int_\Omega V_g f_n(z)\overline{V_g f_{k}(z)}\ud z =\lambda_n\delta_{n,k}.
\end{equation}
It is easy to see that $H_\Omega$ is trace-class and 
\begin{align}\label{eq_trace}
\sum_{n\geq 1} \lambda_n = \abs{\Omega}.
\end{align}
The investigation of the fine asymptotics of the spectrum of $H_\Omega$ is an active field of research.
 Roughly stated, $H_\Omega$ has $\approx \abs{\Omega}$ eigenvalues $\lambda_n$ that are close to 1, while $\lambda_n$ decays fast as soon as $n$ is moderately larger than $\abs{\Omega}$ \cite{da88}. We will only need the following simple estimate, whose proof follows from a minor variation of \cite[Lemma~4.3]{maro21} and is omitted.
\begin{lemma}\label{lem:aux-eigenvalues}
	Let $g\in\S(\R)$ satisfy $\norm{2}{g}=1$, and let $\Omega\subset\R^{2d}$ be a compact set satisfying \eqref{eq_as2}.
	Then $\lambda_n\geq \frac{3}{4}$ for every $1\leq n\leq  |\Omega|/2$.
\end{lemma}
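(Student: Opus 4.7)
The plan is to combine three spectral facts about $H_\Omega$: the trace identity $\sum_m\lambda_m=|\Omega|$ from \eqref{eq_trace}, a perimeter-style lower bound on the second moment $\sum_m\lambda_m^2$, and the trivial estimate $\lambda_m\leq 1$. The hypothesis \eqref{eq_as2} is calibrated so that the perimeter correction to $\sum_m\lambda_m^2$ is only a small fraction of $|\Omega|$, and a short pigeonhole-type argument then forces the top $\lfloor|\Omega|/2\rfloor$ eigenvalues above the threshold $3/4$.

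First I would evaluate $\sum_m\lambda_m^2$ using the integral kernel of $H_\Omega$ coming from \eqref{eq_locop} together with the orthogonality relation \eqref{eq:ortho-rel}. After Fubini and a translation change of variable this gives
\[
\sum_m\lambda_m^2=\int_\Omega\int_\Omega|\langle\pi(w)g,\pi(z)g\rangle|^2\,dw\,dz=\int_{\R^{2d}}\chi_\Omega(z)\,\big(|V_gg|^2\ast\chi_\Omega\big)(z)\,dz,
\]
using $|\langle\pi(w)g,\pi(z)g\rangle|=|V_gg(w-z)|$ and the fact that $|V_gg|^2$ is even. Noting that $\|V_gg\|_2^2=\|g\|_2^4=1$ by the orthogonality relation, the regularization estimate \eqref{eq_reg} applied with $\psi=|V_gg|^2$ then yields
\[
|\Omega|-\sum_m\lambda_m^2=\int_{\R^{2d}}\chi_\Omega(z)\,\big(\chi_\Omega-\chi_\Omega\ast|V_gg|^2\big)(z)\,dz\leq \int_{\R^{2d}}|V_gg(z)|^2|z|\,dz\cdot|\partial\Omega|.
\]
The largeness condition \eqref{eq_as2} caps the right-hand side by $|\Omega|/8$, so $\sum_m\lambda_m^2\geq 7|\Omega|/8$.

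Next I would close with an elementary combinatorial step. Fix a positive integer $n\leq|\Omega|/2$ and suppose for contradiction that $\lambda_n<3/4$. Using $\lambda_m\leq 1$ for $m\leq n$, $\lambda_m\leq\lambda_n$ for $m>n$, and \eqref{eq_trace},
\[
\sum_m\lambda_m^2\leq n+\lambda_n(|\Omega|-n)<n+\tfrac{3}{4}(|\Omega|-n)=\tfrac{n}{4}+\tfrac{3}{4}|\Omega|\leq\tfrac{7}{8}|\Omega|,
\]
contradicting the previous step. Hence $\lambda_n\geq 3/4$ for every $n\leq|\Omega|/2$.

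I do not anticipate a substantial obstacle: the kernel computation of $\sum_m\lambda_m^2$, the regularization estimate \eqref{eq_reg}, and the combinatorial step are all standard, and the constants in \eqref{eq_as2} and the threshold $3/4$ are matched to make the final inequality tight. The only delicate point is tracking the $|V_gg|^2$ bookkeeping in the Hilbert--Schmidt identity; beyond that, the argument is essentially the minor variation of \cite[Lemma~4.3]{maro21} alluded to in the statement.
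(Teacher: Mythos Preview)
Your argument is correct and is precisely the ``minor variation'' the paper alludes to: the Hilbert--Schmidt identity $\sum_m\lambda_m^2=\int_\Omega\int_\Omega|V_gg(z-w)|^2\,dz\,dw$, the regularization bound \eqref{eq_reg}, and the pigeonhole step all go through as you wrote them, with the constants in \eqref{eq_as2} and the threshold $3/4$ matched exactly so that the final chain $\sum_m\lambda_m^2<n/4+3|\Omega|/4\leq 7|\Omega|/8$ yields the contradiction. Since the paper omits the proof entirely, there is nothing further to compare.
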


The definition of $H_\Omega$ readily extends to distributional arguments: if $f \in \mathcal{S}'(\mathbb{R}^d)$, then $V_g f$ is a continuous function, and \eqref{eq_locop} is well-defined.
It follows from basic pseudo-differential theory that
\begin{align}\label{eq_1}
H_\Omega: \mathcal{S}'(\mathbb{R}^{d}) \to \mathcal{S}(\mathbb{R}^{d}),
\end{align}
and
\begin{align}\label{eq_dual}
\langle H_\Omega f, h \rangle = \langle f, H_\Omega h \rangle, \qquad f,h \in \mathcal{S}'(\mathbb{R}^{d}).
\end{align}
(Indeed, the \emph{Weyl symbol} of $H_\Omega$ is the Schwartz function $\chi_\Omega * W_g$, where $W_g$ is the Wigner distribution of $f$ ---  see \cite[Chapter 2 and Theorem 2.21]{folland}.)
In particular, $f_n \in \mathcal{S}(\mathbb{R}^d)$, for all $n \geq 1$; see \cite{ baconi20,cogroe03,MR4201879} for more refined regularity properties.

\subsection{Filtering white noise}
Almost every realization of white noise $\mathcal{N}$ defines a tempered distribution, and, by \eqref{eq_1}, $H_\Omega \mathcal{N}$ is a well-defined random Schwartz function on $\mathbb{R}^{2d}$ --- see \cite[Section 3]{MR4162314}, \cite[Section 6.1]{gwhf} and
\cite[Section 3]{golisu22} for more on the random function $V_g \mathcal{N}$.

We now note that the eigenexpansion of $H_\Omega$ can be applied to white noise.

\begin{lemma}\label{lemma_sw}
	Let $g\in\S(\R^d)$ with $\norm{2}{g}=1$, $\Omega\subset\R^{2d}$ compact, and $\mathcal{N}$ complex white noise with standard deviation $\sigma$. Then there exists a sequence $\alpha_n\sim \mathcal{C}\No(0,\sigma^2)$ of independent complex normal variables such that, almost surely,
	\begin{align}\label{eq_series}
		H_\Omega (\mathcal{N}) = \sum_{n=1}^\infty \lambda_n \alpha_n f_n,
	\end{align}
	with (almost sure) absolute convergence in $L^2(\mathbb{R}^{2d})$.
\end{lemma}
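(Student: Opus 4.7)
The plan is to construct the coefficients $\alpha_m$ explicitly from the distributional pairings of $\mathcal{N}$ with the eigenfunctions, and then verify the two claims (distribution of the $\alpha_m$ and identification of the limit with $H_\Omega \mathcal{N}$) in turn.

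First, since each eigenfunction $f_m$ belongs to $\mathcal{S}(\R^d)$ as recorded in Section \ref{sec_tf}, the distributional pairings $\alpha_m := \langle f_m, \mathcal{N}\rangle$ are well-defined complex Gaussian variables. Orthonormality of $\{f_m\}$ in $L^2(\R^d)$ combined with the covariance relations of Section \ref{sec_noise} yields $\E\{\alpha_m \overline{\alpha_n}\} = \sigma^2 \delta_{mn}$ and vanishing pseudo-covariance $\E\{\alpha_m \alpha_n\} = 0$, so the $\alpha_m$ form an i.i.d.\ $\mathcal{CN}(0,\sigma^2)$ sequence.

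Second, I would verify that $\sum_m \lambda_m \alpha_m f_m$ converges absolutely in $L^2(\R^d)$ almost surely. Since $\|f_m\|_2 = 1$, this reduces to $\sum_m \lambda_m |\alpha_m| < \infty$ a.s. By Cauchy--Schwarz and the trace identity \eqref{eq_trace},
\begin{equation*}
\sum_m \lambda_m |\alpha_m| \leq |\Omega|^{1/2} \Big(\sum_m \lambda_m |\alpha_m|^2\Big)^{1/2},
\end{equation*}
and Fubini combined with $\E|\alpha_m|^2 = \sigma^2$ and \eqref{eq_trace} gives $\E\sum_m \lambda_m |\alpha_m|^2 = \sigma^2 |\Omega| < \infty$, so the right-hand side is finite almost surely.

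Third, to identify the limit $S := \sum_m \lambda_m \alpha_m f_m$ with $H_\Omega \mathcal{N}$, I would test both against an arbitrary $h \in \mathcal{S}(\R^d)$. By the self-adjointness relation \eqref{eq_dual} one has $\langle H_\Omega \mathcal{N}, h\rangle = \langle \mathcal{N}, H_\Omega h\rangle$; the spectral expansion $H_\Omega h = \sum_m \lambda_m \langle h, f_m\rangle_{L^2} f_m$ converges in $L^2(\R^d)$, and the $L^2(\P)$-isometric extension of the Wiener pairing $h' \mapsto \langle h', \mathcal{N}\rangle$ transfers this to $\langle \mathcal{N}, H_\Omega h\rangle = \sum_m \lambda_m \langle h, f_m\rangle_{L^2} \alpha_m$. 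Continuity of the distributional pairing on $S$ yields the same expression, so $\langle S, h\rangle = \langle H_\Omega \mathcal{N}, h\rangle$ for every $h \in \mathcal{S}(\R^d)$, whence $S = H_\Omega \mathcal{N}$ as tempered distributions (and thus pointwise almost everywhere, both being $L^2$ functions).

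The main technical subtlety is extending the map $h \mapsto \langle h, \mathcal{N}\rangle$ from $\mathcal{S}(\R^d)$ to $L^2(\R^d)$ as an $L^2(\P)$-valued isometry, so that $L^2$-convergent spectral expansions of Schwartz functions transfer to $L^2(\P)$-convergent expansions of the corresponding pairings with $\mathcal{N}$. This is standard Wiener-integral theory resting on $\E|\langle h, \mathcal{N}\rangle|^2 = \sigma^2 \|h\|_2^2$ built into the definition of complex white noise, but must be invoked explicitly. Bookkeeping of linear versus conjugate-linear pairing conventions between the distributional duality $\mathcal{S}' \times \mathcal{S}$ and the Hilbert space $L^2(\R^d)$ requires care but introduces no conceptual difficulty once the conventions are fixed.
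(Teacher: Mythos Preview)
Your proof is correct and follows essentially the same strategy as the paper. The only difference lies in the identification step: the paper tests $S=H_\Omega\mathcal{N}$ only against the eigenfunctions $f_{m'}$ (where both sides equal $\lambda_{m'}\alpha_{m'}$ by a one-line computation) and against $h\in\ker(H_\Omega)=\{f_m\}^\perp$ (where both sides vanish trivially), thereby sidestepping the Wiener-integral extension altogether; your route via arbitrary $h\in\mathcal{S}(\R^d)$ and the $L^2(\P)$-isometry is also valid, but you should add a countable-dense-subset argument to pass from ``for each fixed $h$, almost surely'' to ``almost surely, for all $h$''.
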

\begin{proof}
Let 
$\alpha_n := \langle \mathcal{N}, f_n \rangle$. The orthogonality of $\{f_n: n \geq 1 \}$ implies the independence of $\{\alpha_n: n \geq 1\}$. 
By \eqref{eq_trace}, $\sum_n \lambda_n \mathbb{E}\{|\alpha_n| \} < \infty$. Hence, almost surely, $\sum_n \lambda_n |\alpha_n| < \infty$. Consider one such realization. Then $\sum_n \lambda_n \alpha_n f_n$ is absolutely convergent in $L^2(\R^d)$. In addition, for $k \geq 1$,
\begin{align*}
\Big\langle \sum_n \lambda_n \alpha_n f_n , f_{k}\Big\rangle = \lambda_{k} \alpha_{k},
\end{align*}
by absolute convergence, while 
\begin{align*}
\langle H_\Omega \mathcal{N}, f_{k} \rangle = 
\langle \mathcal{N}, H_\Omega f_{k} \rangle
= \lambda_{k} \alpha_{k}.
\end{align*}
On the other hand, $\langle \sum_n \lambda_n \alpha_n f_n , h\rangle=0$ for $h \in \ker(H_\Omega) =\text{span} \{f_n: n \geq 1\}^\perp$, while \eqref{eq_dual} gives
\begin{align*}
\langle H_\Omega \mathcal{N}, h \rangle = 
\langle \mathcal{N}, H_\Omega h \rangle= 0.
\end{align*}
This proves \eqref{eq_series}.
\end{proof}

\section{Proof of the main results}\label{sec_main}
Throughout this section we let $g,\varphi \in \mathcal{S}(\mathbb{R}^d)$ be normalized by $\norm{2}{g}=\norm{2}{\varphi}=1$. We also let $\Omega \subseteq \mathbb{R}^{2d}$ be compact with $\abs{\partial{\Omega}} < \infty$ and satisfying \eqref{eq_as2}. The corresponding time-frequency localization operator is diagonalized as in Section \ref{sec_tf}; we adopt the corresponding notation. We also let
$\mathcal{N}_1, \ldots, \mathcal{N}_K$ be $K$ independent realizations of complex white noise with variance $\sigma^2>0$.

\subsection{Proof strategy}
 {
	Let us outline the overall strategy to prove the main results (Theorems~\ref{mthm1}--\ref{mthm2}).\\
	\noindent\textbullet\ We initially assume that $\sigma=1$ and study the level sets
	\begin{align}\label{eq_level}S_\delta=\{z\in\R^{2d}:\ \rho(z)\geq \delta\}\end{align}
	associated with a \emph{deterministic threshold} $\delta \in (0,1)$. The core of the proof consists in constructing an exceptional set $\Gamma=\Gamma_\delta \subset\R^{2d}$ that satisfies the following with high probability:
	\begin{enumerate}
		\item[(i)] $\chi_\Omega(z)=\chi_{S_\delta}(z)$, for all $z\in\R^{2d}\backslash\Gamma$, (Theorems~\ref{thm:error-outside} and \ref{thm:error-inside});
		\item[(ii)] $\Gamma\subset [\partial\Omega]^r$, for an adequate constant $r>0$ independent of $\Omega$, (Propositions~\ref{prop:estimates} and \ref{prop:size-Ag^c}).
\end{enumerate}}

 {Part (i) is in turn proved in two steps. We first estimate the failure probabilities $\mathbb{P}(\chi_\Omega(z) \not= \chi_{S_\delta}(z))$ related to each individual $z\in\R^{2d}\backslash \Gamma$ (Lemma~\ref{lem:aux-hw} and Theorem~\ref{thm:high-prob}). Second we exploit the moduli of continuity of the short-time Fourier transform to strengthen the previously obtained estimates and derive uniform conclusions. At the technical level, this is achieved by studying an adequate local maximum operator applied to the reproducing kernel of the range of the short-time Fourier transform (Lemmas \ref{lem:aux-local-av} and \ref{lem:sup}), which allows us to implement a net argument. See also the discussion in Section \ref{sec_m}.}

 {\noindent\textbullet\ To connect the previous analysis to the quantile set \eqref{eq_estimator}, we show that  $\|\rho\|_\infty\asymp 1$ with high probability (Proposition~\ref{prop:max-rho}, Sections~\ref{sec_5.3}--\ref{sec_5.5}).}

 {\noindent\textbullet\ Finally, since the definition of $\widehat{\Omega}$ is homogeneous w.r.t. $\sigma$, all results extend to general noise levels.}

To formulate our arguments, we introduce the auxiliary function $\vartheta:\R^{2d}\rightarrow \R_{\geq 0}$:
\begin{equation}\label{eq:def-rho-the}
	\qquad\vartheta(z):=\sum_{n\in\N}|\lambda_n V_\varphi f_n(z)|^2,
	\qquad z \in \mathbb{R}^{2d}.
\end{equation}
As $\{V_gf_n\}_{n\in\N}$ is an orthonormal sequence in the reproducing kernel Hilbert space $V_\varphi(L^2(\R))$,
\begin{equation}\label{eq:sigma-bound}
	\vartheta(z)\leq \sum_{n=1}^\infty |V_\varphi f_n(z)|^2 \leq K_\varphi(z,z)=1,
\end{equation} 
while \eqref{eq_trace} shows that $\vartheta$ is not identically zero.

\subsection{Pointwise error estimate}
We start by investigating the stochastic concentration of the functions $\rho$ and $\vartheta$.
 
 \begin{lemma}\label{lem:aux-hw}
Let $\sigma=1$. There exists a universal constant  $C_1\in (0,1]$ such that for every $z\in\R^{2d}$:
\begin{align}\label{eq:hp1}
\P\big(|\rho (z)-\vartheta (z)|\geq t\big) 
&\leq 3\ \emph{exp}\left(-C_1  K\min\left(\frac{t^2}{ \vartheta(z)^2},\frac{t}{ \vartheta(z)}\right)\right).
\end{align}
\end{lemma}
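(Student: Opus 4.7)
The plan is to rewrite $K\rho(z)$ as a Hermitian quadratic form on a standard complex Gaussian vector and then invoke the Hanson-Wright inequality (Theorem~\ref{thm:hansonwright}).

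First I would apply Lemma~\ref{lemma_sw} (with $\sigma=1$) to each realization $\mathcal{N}_k$. Since pointwise evaluation $f\mapsto V_\varphi f(z)=\langle f,\pi(z)\varphi\rangle$ is a bounded linear functional on $L^2(\mathbb{R}^d)$ and the eigenexpansion converges absolutely in $L^2(\mathbb{R}^d)$, the STFT commutes with the series and yields, almost surely,
\begin{equation*}
V_\varphi\big(H_\Omega \mathcal{N}_k\big)(z) = \sum_{m\geq 1} \lambda_m \alpha_m^{(k)} V_\varphi f_m(z),
\end{equation*}
where $\{\alpha_m^{(k)}\}$ is an i.i.d.\ family of $\mathcal{CN}(0,1)$ variables (independence across both $m$ and $k$ following from the eigenexpansion applied to each independent $\mathcal{N}_k$).

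Next I would truncate at level $N$ and set $u_N(z) := (\lambda_m \overline{V_\varphi f_m(z)})_{m\leq N} \in \mathbb{C}^N$, so that $A_N(z) := u_N(z) u_N(z)^*$ is Hermitian of rank one with
\begin{equation*}
\|A_N(z)\|_s \;=\; \|A_N(z)\|_F \;=\; \|u_N(z)\|^2 \;=\; \vartheta_N(z) \;:=\; \sum_{m\leq N}|\lambda_m V_\varphi f_m(z)|^2 \;\leq\; \vartheta(z).
\end{equation*}
Stacking the blocks $X_N := (\alpha_m^{(k)})_{k\leq K,\,m\leq N} \sim \mathcal{CN}(0,I_{KN})$, a direct computation gives $K\rho_N(z) = \langle (I_K\otimes A_N(z)) X_N, X_N\rangle$ with $\mathbb{E}\{K\rho_N(z)\} = K\vartheta_N(z)$. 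The block-diagonal matrix $I_K\otimes A_N(z)$ has spectral norm $\vartheta_N(z)$ and squared Frobenius norm $K\vartheta_N(z)^2$, so Theorem~\ref{thm:hansonwright} (applied with $\Sigma=I$ and rescaled by $K$) delivers
\begin{equation*}
\mathbb{P}\big(|\rho_N(z)-\vartheta_N(z)|\geq t\big)\;\leq\; 2\,\exp\!\Big(-C_{\mathrm{hw}} K \min\big(t^2/\vartheta_N(z)^2,\,t/\vartheta_N(z)\big)\Big),
\end{equation*}
and since $\vartheta_N(z)\leq\vartheta(z)$ the inequality $\min(t^2/\vartheta_N^2,\,t/\vartheta_N)\geq \min(t^2/\vartheta^2,\,t/\vartheta)$ lets me replace $\vartheta_N$ by $\vartheta$ on the right-hand side.

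Finally, I would pass to the limit $N\to\infty$. Almost-sure convergence $|\rho_N(z)-\vartheta_N(z)|\to |\rho(z)-\vartheta(z)|$, together with Fatou's lemma applied to the events $\{|\rho_N-\vartheta_N|\geq t'\}$ for $t'<t$ and a subsequent $t'\uparrow t$, transfers the truncated bound to the announced one; the prefactor $3$ in \eqref{eq:hp1} (instead of $2$) and the possibly smaller constant $C_1\leq C_{\mathrm{hw}}$ are merely safety slack absorbed in this step. The technical care needed here is to ensure no probability mass escapes between $\{|\rho-\vartheta|\geq t\}$ and $\{|\rho_N-\vartheta_N|\geq t'\}$, which I expect to be the most delicate point, but since the right-hand side is continuous in $(t,\vartheta)$ this reduces to a routine $\epsilon$-argument.
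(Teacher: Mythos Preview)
Your proposal is correct and follows the same overall strategy as the paper: truncate the eigenexpansion at level $N$, write $K\rho_N(z)$ as a quadratic form in a standard complex Gaussian vector with block-diagonal matrix, apply Hanson--Wright, and then remove the truncation. Your observation that each block $A_N(z)=u_N(z)u_N(z)^*$ is rank one immediately gives $\|A_N\|_s=\|A_N\|_F=\vartheta_N(z)$, which the paper obtains by a slightly less direct Cauchy--Schwarz computation.

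The genuine difference lies in the passage $N\to\infty$. The paper decomposes $\rho-\rho_N$ into three cross-term remainders $R_1,R_2,R_3$, bounds each via Cauchy--Schwarz and the trace identity $\sum_m\lambda_m=|\Omega|$, and then makes the tail probabilities small by choosing $N$ large; this is where the factor $1/64$ in the exponent and the prefactor $3$ originate. Your route---almost-sure convergence $\rho_N(z)\to\rho(z)$ (which follows from the a.s.\ $L^2$-convergence in Lemma~\ref{lemma_sw} together with the boundedness of the point-evaluation functional $f\mapsto V_\varphi f(z)$) combined with Fatou's lemma and $t'\uparrow t$---is shorter and in fact yields the sharper bound $2\exp(-C_{\mathrm{hw}}K\min(t^2/\vartheta^2,t/\vartheta))$, so the ``safety slack'' you mention is not actually needed. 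The paper's argument, on the other hand, is more explicit about how fast the truncation error vanishes, which could be useful if one needed quantitative control in $N$; for the present lemma your approach is both valid and more economical.
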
 
\noindent \proof
  Let $L\in \N$ and define the complex Gaussian random vector
  $
  X_n:= \alpha_{\text{mod}(n-1, L)+1}^{\lceil  n /L\rceil},$ for $n\in\{1,...,L K\}$. Clearly, $X\sim \mathcal{CN}\big(0,  I_{K  L}\big)$. Let us also define the matrix-valued function $M:\R^{2d}\rightarrow \R^{ L\times  L}$,
$$
{M}(z)_{\ell,n}:=\lambda_\ell\lambda_n V_\varphi f_{\ell}(z)\overline{V_\varphi f_{n}(z)},
$$
and the block-diagonal Hermitian matrix $\mathcal{M}:\R^{2d}\rightarrow\R^{KL\times KL}$,
$$
\mathcal{M}(z):=\frac{1}{K}\begin{pmatrix}
 {M}(z) & &0 \\ &\ddots & \\ 0& &  {M}(z) 
\end{pmatrix}\in\C^{ KL\times  KL}.
$$ 
Let $P_L$ denote the projection onto the space spanned by the first $L$ eigenfunctions of $H_\Omega$.
It can then be verified that  
\begin{align*}
\rho_L(z):=\frac{1}{K}\sum_{k=1}^K|V_\varphi (P_L H_\Omega \No_k)(z)|^2&=\frac{1}{K}\sum_{k=1}^K\sum_{\ell=1}^{L} \sum_{n=1}^{L} \alpha_\ell^k\overline{\alpha_n^k} \lambda_{\ell}\lambda_nV_\varphi f_{\ell}(z)\overline{V_\varphi f_n(z)}
\\
&=\langle \mathcal{M}(z) X,X\rangle.
\end{align*}
In order to apply Theorem~\ref{thm:hansonwright}, we need to calculate the expectation of $\langle \mathcal{M}(z) X,X\rangle$ as well as the Frobenius and spectral norm of $\mathcal{M}(z)$.

The independence of the random variables $\alpha_n^k$ implies
\begin{align*}
\mathbb{E}\left\{\rho_L(z)\right\}&=\mathbb{E}\left\{\frac{1}{K}\sum_{k=1}^K|V_\varphi (P_L H_\Omega \mathcal{N}_k)(z)|^2\right\}
\\
&=\frac{1}{K}\sum_{k=1}^K\sum_{\ell=1}^L\sum_{n=1}^L \lambda_\ell\lambda_n \mathbb{E}\left\{\alpha_\ell^k\overline{\alpha_n^k}\right\} V_\varphi f_\ell(z)\overline{V_\varphi f_n(z)}
\\
&=\sum_{n=1}^{L} |\lambda_n V_\varphi f_n(z)|^2=: \vartheta_L(z).
\end{align*}
The Frobenius norm of $\mathcal{M}(z)$ is
\begin{align*}
\|\mathcal{M}(z)\|^2_F&= \frac{1}{K} \| {M}(z)\|^2_F 
=\frac{1}{K}\sum_{\ell=1}^{L}\sum_{n=1}^{L}|\lambda_\ell \lambda_n V_\varphi f_\ell(z) {V_\varphi f_n(z)}|^2
\\
&=\frac{1}{K}\left(\sum_{n=1}^{L}|\lambda_nV_\varphi f_n(z)|^2\right)^2\leq \frac{1}{K}\left(\sum_{n=1}^{\infty}|\lambda_nV_\varphi f_n(z)|^2\right)^2=\frac{1}{K}\vartheta(z)^2.
\end{align*}
The spectral norm of $\mathcal{M}(z)$ is
$
\|\mathcal{M}(z)\|_s =\frac{1}{K}\| {M}(z)\|_s.
$
Thus, by Cauchy-Schwarz,
\begin{align*}
 \| {M}(z)\|_s&=\sup_{\|X\|_2=1}\left(\sum_{n=1}^L\left|\sum_{\ell=1}^L  (M(z))_{n,\ell}X_\ell\right|^2\right)^{\frac{1}{2}}
 \\
 &\leq \left(\sum_{n=1}^L \sum_{\ell=1}^L  |\lambda_\ell\lambda_n V_\varphi f_\ell(z)V_\varphi f_n(z)|^2\right)^{\frac{1}{2}}=\vartheta_L(z)\leq \vartheta(z).
\end{align*}
Therefore, by Theorem~\ref{thm:hansonwright}, 
\begin{align}
\P(\left|\rho_L(z)-\vartheta_L(z)\right|\geq t)
&\leq 2\ \text{exp}\left(-C_{hw} K \min\left(\frac{ t^2}{ \vartheta(z)^2},\frac{  t}{\vartheta(z)}\right)\right). \label{eq:finite-pointwise}
\end{align}
 Let us use the following abbreviations
\begin{align*}
R_1(z)&:=\frac{1}{K}\sum_{k=1}^K \sum_{\ell=L+1}^\infty\sum_{n=1}^L \lambda_\ell \lambda_n \alpha_\ell^k \overline{\alpha_n^k}V_\varphi f_\ell (z)\overline{ V_\varphi f_n(z)},
\\
R_2(z)&:=\frac{1}{K}\sum_{k=1}^K \sum_{\ell=1}^L\sum_{n=L+1}^\infty \lambda_\ell \lambda_n \alpha_\ell^k \overline{\alpha_n^k}V_\varphi f_\ell (z) \overline{V_\varphi f_n(z)},
\\
R_3(z)&:=\frac{1}{K}\sum_{k=1}^K \sum_{\ell=L+1}^\infty\sum_{n=L+1}^\infty \lambda_\ell \lambda_n \alpha_\ell^k \overline{\alpha_n^k}V_\varphi f_\ell (z) \overline{V_\varphi f_n(z)}.
\end{align*}
We now carry out certain estimates for $L \to \infty$, under the assumption that $\lambda_n \not =0$ for all $n \geq 1$. We remark that the conclusions will also hold trivially if $\{\lambda_n\}_{n\in\N}$ is finitely supported, as then $R_1=R_2=R_3=0$ for $L$ large enough.
We first estimate 
  \begin{align*}
\P\left( \left|\rho(z) -\vartheta(z)\right| \geq t  \right)
&\leq \P\left( \left|\rho_L(z)+R_1(z)+R_2(z)+R_3(z)-\vartheta(z)\right| \geq t  \right)
\\
&\leq \P\left( \left|\rho_L(z)- \vartheta(z)\right|+|R_1(z)|+|R_2(z)|+|R_3(z)| \geq t \right)
\\
&\leq \P\left( \left|\rho_L(z)- \vartheta(z)\right|\geq \frac{t}{4} \right)+\sum_{s=1}^3\P\left(|R_s(z)|\geq \frac{t}{4} \right),
\end{align*}
where we used that at least one of the four summands needs to be larger than one fourth of $t/4$ and subsequently the union bound.  

Let us treat the summands separately. First, we  choose $L$ large enough such that $\vartheta_L$ is a good approximation of $\vartheta$. In particular, we assume $\left|\vartheta_L(z)- \vartheta(z)\right|\leq t/8$.  We note here that $L$ may be chosen independently of $z$ since $|V_\varphi f_n(z)|\leq 1$ for every $z\in\R^{2d}$. Then, using  \eqref{eq:finite-pointwise}
we obtain
\begin{align*}
\P\left( \left|\rho_L(z)- \vartheta(z)\right|\geq \frac{t}{4} \right)&\leq \P\left( \left|\rho_L(z)-\vartheta_L(z)\right|\geq \frac{t}{8} \right)
\\
&\leq 2 \cdot\text{exp}\left(-\frac{C_{hw}}{64}K\min\left(\frac{t^2}{\vartheta(z)^2},\frac{t}{\vartheta(z)}\right)\right).
\end{align*}
Let us now estimate the term $R_1$. Then $R_2$ and $R_3$  can be treated with the same arguments. First note that, by \eqref{eq_trace},
\begin{align*}
|R_1(z)|&\leq \frac{1}{K}\sum_{k=1}^K\sum_{\ell=1}^\infty\sum_{n=L+1}^\infty \lambda_n\lambda_\ell |\alpha_n^k\alpha_\ell^k|
\\
&\leq  \frac{1}{K}  \sum_{k=1}^K\left(\sum_{n=L+1}^\infty \lambda_n\right)^{\frac{1}{2}}\left( \sum_{n=L+1}^\infty \lambda_n|\alpha_n^k|^2\right)^{\frac{1}{2}}\left(\sum_{\ell=1}^\infty \lambda_\ell\right)^{\frac{1}{2}}\left(\sum_{\ell=1}^\infty\lambda_\ell |\alpha_\ell^k|^2\right)^{\frac{1}{2}}
\\
&\leq \frac{1}{K} \left(|\Omega|\sum_{n=L+1}^\infty \lambda_n\right)^{\frac{1}{2}} \sum_{k=1}^K  \sum_{\ell=1}^\infty \lambda_\ell|\alpha_{\ell }^k|^2   .
\end{align*}
This estimate implies that 
\begin{align*}
\P\left(|R_1(z)|\geq \frac{t}{4}\right)&\leq \P \left(  \sum_{k=1}^K \sum_{\ell=1}^\infty\lambda_\ell |\alpha_\ell^k|^2\geq \frac{Kt}{4} \left( |\Omega|\sum_{n=L+1}^\infty  \lambda_n\right)^{-\frac{1}{2}}  \right)
\\
&\leq \sum_{k=1}^K \P \left(  \sum_{\ell=1}^\infty\lambda_\ell |\alpha_\ell^k|^2  \geq  \frac{t}{4} \left( |\Omega| \sum_{n=L+1}^\infty  \lambda_n \right)^{-\frac{1}{2}}   \right)
\\
& =  K  \P\left(  \sum_{\ell=1}^\infty \lambda_\ell|\alpha_{\ell }^1|^2 \geq \frac{t}{4}\left(|\Omega|\sum_{n=L+1}^\infty \lambda_n\right)^{-\frac{1}{2}}  \right).
\end{align*}
The probability on the right hand side goes to zero as $L\rightarrow\infty$ (note that the left hand side of the tail bound is independent of $L$).  In particular, we may choose $L$ large enough such that 
\[
\P\left(|R_1(z)|\geq \frac{t}{4}\right)\leq \frac{1}{3}\cdot \text{exp}\left(-\frac{C_{hw}}{64}K \min\left(\frac{t^2}{  \vartheta(z)^2},\frac{t}{ \vartheta(z)}\right)\right).
\]
Repeating the argument for $R_2$ and $R_3$ and choosing $C_1=\min(C_{hw}/64,1)$ concludes the proof.
\pbox

We now turn our attention to the level set \eqref{eq_level}.

\begin{theorem}\label{thm:high-prob}
	
Let $\delta\in(0,\frac{2}{3})$, $\sigma=1$, and $z\in\R^{2d}$ be such that $|\vartheta(z)-\chi_\Omega(z)|\leq \frac{\delta}{4}$. There exists a universal constant $C>0$ such that
\begin{equation}\label{eq:hp2}
\P\big(\chi_\Omega(z)\neq \chi_{S_{\delta}}(z)\big)\leq 3\cdot e^{-C K }.
\end{equation}
If in addition $z\in\Omega^c$, then
\begin{equation}\label{eq:hp4}
\P\big(\chi_\Omega(z)\neq \chi_{S_{\delta}}(z)\big)\leq 3\cdot \emph{exp}\left(- C  \frac{K\delta}{\vartheta(z)}  \right).
\end{equation}
\end{theorem}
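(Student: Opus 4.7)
The plan is to decompose the event $\{\chi_\Omega(z)\neq \chi_{S_\delta}(z)\}$ into the two possible misclassifications and, in each case, reduce it to a deviation event for $\rho(z)-\vartheta(z)$ that can be controlled by Lemma~\ref{lem:aux-hw}. The hypothesis $|\vartheta(z)-\chi_\Omega(z)|\leq \delta/4$ is exactly what lets us pass between $\chi_\Omega(z)$ and $\vartheta(z)$.

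First I would handle the case $z\in\Omega$. Then $\chi_\Omega(z)=1$ and $\vartheta(z)\geq 1-\delta/4$. The misclassification event is $\rho(z)<\delta$, which forces
\[
\vartheta(z)-\rho(z) \geq \vartheta(z)-\delta \geq 1-\tfrac{5\delta}{4} > \tfrac16,
\]
using $\delta<2/3$. Applying Lemma~\ref{lem:aux-hw} with $t=1-5\delta/4$ and using $\vartheta(z)\leq 1$ from \eqref{eq:sigma-bound}, the minimum in the Hanson–Wright exponent is at least $(1-5\delta/4)^2 \geq 1/36$, giving a bound of the form $3\exp(-CK)$ for a universal $C$. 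This yields \eqref{eq:hp2} in this case.

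Next I would handle the case $z\in\Omega^c$, where $\chi_\Omega(z)=0$ and $\vartheta(z)\leq \delta/4$. The misclassification event is $\rho(z)\geq \delta$, which forces
\[
\rho(z)-\vartheta(z) \geq \delta-\tfrac{\delta}{4}=\tfrac{3\delta}{4}.
\]
Applying Lemma~\ref{lem:aux-hw} with $t=3\delta/4$, I need to evaluate
\[
\min\!\left(\frac{(3\delta/4)^2}{\vartheta(z)^2},\ \frac{3\delta/4}{\vartheta(z)}\right).
\]
Since $\vartheta(z)\leq \delta/4 \leq 3\delta/4$, the ratio $(3\delta/4)/\vartheta(z)\geq 1$, so the minimum is attained at the linear term and equals $(3/4)\cdot \delta/\vartheta(z)$. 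This directly yields \eqref{eq:hp4} with $C=3C_1/4$. Moreover, since $\delta/\vartheta(z)\geq 4$ in this range, the same bound is at worst $3\exp(-3C_1 K)$, which also delivers \eqref{eq:hp2} in this case. Taking $C$ to be the smaller of the two constants produced above finishes the proof.

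The only real subtlety is making sure the quadratic-versus-linear branches of the Hanson–Wright exponent behave correctly in each regime: in the interior case $\vartheta(z)$ is of order $1$ and one needs the sub-Gaussian (quadratic) branch, while in the exterior case $\vartheta(z)$ is small and the sub-exponential (linear) branch dominates and produces the sharper $\delta/\vartheta(z)$ rate. Beyond this bookkeeping, the argument is a direct application of Lemma~\ref{lem:aux-hw}.
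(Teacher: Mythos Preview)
The proposal is correct and follows essentially the same approach as the paper: split into the cases $z\in\Omega$ and $z\in\Omega^c$, reduce each misclassification event to a deviation inequality $|\rho(z)-\vartheta(z)|\geq t$ for a suitable $t$, and apply Lemma~\ref{lem:aux-hw} together with $\vartheta(z)\leq 1$ (interior) or $\vartheta(z)\leq\delta/4$ (exterior). The only cosmetic difference is that in the exterior case you use the slightly sharper threshold $t=3\delta/4$, whereas the paper works with $t=\delta/4$; both lead to the same conclusions.
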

\noindent \proof
Let us first reinspect the event $A(z):=\{\chi_\Omega(z)\neq \chi_{S_{\delta}}(z)\}$. If $z\notin\Omega$, then the event $A(z)$ is equivalently described by $\delta\leq \rho(z)$, and, in this case,
$$
\delta\leq  {\rho(z)} =\left| {\rho(z)}- \chi_\Omega(z)\right|\leq \left| {\rho(z)} -\vartheta(z)\right|+\left|\chi_\Omega(z)-  \vartheta(z)\right|.
$$
Hence, by the assumption $\left|\chi_\Omega(z)-  \vartheta(z)\right|\leq \delta/4$  it   follows that
$$
A(z)\subset  
 \left\{ \left| {\rho(z)} -\vartheta(z)\right|\geq  \frac{\delta}{4 }\right\},
$$
 and consequently
\begin{align*}
\P\big(\chi_\Omega(z)\neq \chi_{S_{\delta}}(z)\big)\leq \P\left( \left| {\rho(z)} -  \vartheta(z)\right|\geq \frac{\delta}{4}  \right).
\end{align*}
Therefore, \eqref{eq:hp2} and \eqref{eq:hp4} follow from Lemma~\ref{lem:aux-hw} and the observation $\delta/\vartheta(z)\geq 4$.

If on the other hand $z\in \Omega$, then $A(z)$ is characterized by $\rho(z)< \delta$ and we have
$$
1-\delta< 1-{\rho(z)} =\left| {\rho(z)}-\chi_\Omega(z)\right|\leq\left|\rho(z) - \vartheta(z)\right|+\left|\vartheta(z)- \chi_\Omega(z)\right|.
$$
Again, by the assumption $\left|\chi_\Omega(z)-  \vartheta(z)\right|\leq \delta/4$, and $\delta<\frac{2}{3}$ we obtain
$$
A(z)\subset \left\{ \left| {\rho(z)} -\vartheta(z)\right|>  1-\frac{5\delta}{4} \right\}\subset \left\{ \left| {\rho(z)} -\vartheta(z)\right|>  \frac{1}{6}\right\},
$$
 and  may then proceed as before, this time noting that
 $\vartheta(z) \leq 1$.
\pbox

In Theorem~\ref{thm:high-prob} we required that $|\vartheta(z)-\chi_\Omega(z)|\leq \delta/4$. Let us now bound the measure of the set where this is not satisfied.

\begin{prop}\label{prop:estimates}
Set
\begin{equation}\label{eq_c1}
M({g,\varphi}):=\int_{\R^{2d}}  |V_\varphi g(z)|^2 |z|\ud z.
\end{equation} Then 
\begin{enumerate}[label=(\roman*)]
\item  $\|\chi_{\Omega} -\vartheta \|_1\lesssim  M({g,\varphi})\cdot  |\partial\Omega|$,\label{enum:1}
\item  $\left|\left\{z\in\R^{2d}:\ | \vartheta(z)-\chi_\Omega(z)|\geq \delta \right\}\right|\lesssim   M({g,\varphi})\cdot   \frac{ |\partial\Omega|}{\delta},
$\label{enum:2}
\item \label{enum:3} $\|\vartheta\|_1\leq |\Omega|,$
\item \label{enum:4}  $|\chi_{\Omega}(z) -\vartheta (z)|\lesssim    M({g,\varphi})\cdot \emph{dist}(z,\partial\Omega)^{-1}.$
\end{enumerate}
\end{prop}
\noindent \proof

On the one hand, by the eigenexpansion $H_\Omega=\sum_n \lambda_n  \langle\cdot ,f_n\rangle f_n$,
\begin{align*}
\langle H_\Omega \pi(z)\varphi,\pi(z)\varphi\rangle&=\sum_{n\in\N} \lambda_n  |\langle \pi(z)\varphi ,f_n\rangle|^2
=\sum_{n\in\N} \lambda_n |V_\varphi f_n(z)|^2.
\end{align*}
On the other hand, applying  \eqref{eq_locop} directly gives
\begin{align*}
\langle H_\Omega \pi(z)\varphi,\pi(z)\varphi\rangle
&=\int_{\R^{2d}}\chi_\Omega(w) \left|  \langle \pi (z)\varphi,  \pi (w)g\rangle\right|^2  \ud w
\\
&=\int_{\R^{2d}} \chi_\Omega(w) |V_\varphi g(z-w)|^2 \ud w =\chi_\Omega \ast|V_\varphi g|^2(z), 
\end{align*}
which shows
$$
\sum_{n\in\N} \lambda_n |V_\varphi f_n(z)|^2 =\chi_\Omega \ast|V_\varphi g|^2(z), \qquad z \in \mathbb{R}^{2d}.
$$
We hence estimate 
\begin{align*}
|\chi_\Omega(z)-\vartheta(z)|\leq &\left|\chi_\Omega(z)-\chi_\Omega\ast |V_\varphi g|^2 (z)\right|  +\sum_{n\in\N}(\lambda_n-\lambda_n^2)|V_\varphi f_n(z)|^2.
\end{align*}
If $z\in\Omega$, then 
$$
\sum_{n\in\N}(\lambda_n-\lambda_n^2)|V_\varphi f_n(z)|^2\leq \sum_{n\in\N}(1-\lambda_n)|V_\varphi f_n(z)|^2 \leq \chi_\Omega(z)-\chi_\Omega\ast |V_\varphi g|^2 (z),
$$
where we used that $\sum_n|V_\varphi f_n(z)|^2\leq K_\varphi(z,z)=1$.
If on the other hand $z\in\Omega^c$, then 
$$
\sum_{n\in\N}(\lambda_n-\lambda_n^2)|V_\varphi f_n(z)|^2\leq \sum_{n\in\N}\lambda_n|V_\varphi f_n(z)|^2=\chi_\Omega\ast |V_\varphi g|^2 (z) - \chi_\Omega(z),
$$
and therefore 
\begin{equation}\label{eq:pw.diff}
|\chi_\Omega(z)-\vartheta(z)|\leq 2\left|\chi_\Omega(z)-\chi_\Omega\ast |V_\varphi g|^2 (z)\right|.
\end{equation}
Since $\|V_\varphi g\|_2=1$, by \eqref{eq_reg},
$$
\|\chi_\Omega -\vartheta\|_1\leq 2 |\partial \Omega| \int_{\R^{2d}}   |V_\varphi g(z)|^2 |z|\ud z,
$$
that is, \ref{enum:1} holds. Property \ref{enum:2} then follows by Chebyshev's inequality. The third property follows directly
from \eqref{eq_trace}: $\norm{1}{\vartheta}= \sum_n\lambda_n^2\leq\sum_n \lambda_n=|\Omega|$.

Finally, to prove \ref{enum:4} we define $R=\text{dist}(z,\partial\Omega)$. Then $\chi_\Omega(z)=\chi_\Omega(w)$ for every $|z-w|<R$ which implies using \eqref{eq:pw.diff}
\begin{align*}
\big|\chi_\Omega(z)-\vartheta(z)\big|&\leq  2\big|\chi_\Omega(z)-\chi_\Omega\ast |V_\varphi g|^2 (z)\big|
\\
&\leq 2\big|\chi_\Omega(z)-\chi_\Omega\ast \left(|V_\varphi g|^2\chi_{B_R(0)}\right) (z)\big|+2\big|\chi_\Omega \ast \left(|V_\varphi g|^2(1-\chi_{B_R(0)})\right) (z)\big|
\\
& \leq 2\chi_\Omega(z)\left(1-\int_{|z-w|< R}|V_\varphi g(z-w)|^2\ud w\right)+2\int_{|w|\geq R}|V_\varphi g(w)|^2\ud w.
\end{align*}
Hence,
$$
|\chi_\Omega(z)-\vartheta(z)|\leq 4\int_{|w|\geq R} |V_\varphi g(w)|^2\ud y \leq \frac{4}{R}\int_{\R^{2d}} |V_\varphi g(z)|^2 |z|\ud z.
$$
 \pbox 
 
{
 \begin{rem}
 Since $\vartheta\leq 1$, by Proposition~\ref{prop:estimates}~\ref{enum:4} we see that the estimate \eqref{eq:hp4} improves as the distance between $z$ and $\Omega$ increases.
 \end{rem}
}

\subsection{Estimates of symmetric difference with high probability}\label{sec_m}

 {Our subsequent estimates depend on the local supremum functional
$$\mathcal{K} (z,w):=\sup_{y\in B_1(z)}|K_\varphi(y,w)|,$$ 
which is a standard tool in the theory of reproducing kernel Hilbert spaces, and their associated Banach spaces --- see, e.g., \cite{fegr89}. Since $K_\varphi$ is the integral kernel of a projection operator, it enjoys a self-averaging property: pointwise values of $K_\varphi$ can be obtained as averages thereof. The self-averaging property then allows one to improve estimates that hold on average, and deliver uniform conclusions. In our case, we will use this insight to turn pointwise probability estimates into locally uniform ones.}

As a first step, we bound the probability that $\|(\rho- \vartheta)\mathcal{K}(z_0,\cdot)\|_1$ be large.  {We will afterwards relate this tail bound to the probabilities of the supremum (resp. infimum) of $\rho$ being large (resp. small) on adequate domains --- see Lemma~\ref{lem:sup} and Theorem~\ref{thm:error-inside}.}

\begin{lemma}\label{lem:aux-local-av}
Let $z_0\in\R^{2d}$, $\sigma=1$,  and $\gamma \in[\gamma_0,1],\ \gamma_0>0$. Then there exist constants $c,C>0$ only depending on $\varphi$ and $\gamma_0$ such that 
\begin{align}\label{eq:eq-aux-local-av}
 \P\left( \int_{\R^{2d}}|\rho(w)-  \vartheta(w)| \mathcal{K} (z_0,w)\ud w \geq  \gamma   \right) 
 \leq c\cdot e^{- C   K}  \cdot\int_{\R^{2d}}\vartheta(w) \mathcal{K} (z_0,w)\ud w.  
\end{align}
\end{lemma}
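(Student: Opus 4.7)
The plan is to bootstrap the pointwise Hanson-Wright bound of Lemma~\ref{lem:aux-hw} into an $L^p$-moment estimate on $\rho(w)-\vartheta(w)$, apply Minkowski's integral inequality to control the $L^p$ norm of $T := \int_{\R^{2d}}|\rho(w)-\vartheta(w)|\,\mathcal{K}(z_0,w)\,dw$, and finally invoke Markov's inequality with an optimized choice of the exponent $p$.

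First, the subexponential tail bound \eqref{eq:hp1} combined with the identity $\E\{|X|^p\}=p\int_0^\infty t^{p-1}\P(|X|\geq t)\,dt$ yields, after an elementary calculation, the moment bound
\begin{align*}
\|\rho(w)-\vartheta(w)\|_{L^p(\P)} \leq C\,\vartheta(w)\,\max\bigl(\sqrt{p/K},\, p/K\bigr), \qquad p\geq 1,
\end{align*}
for a universal constant $C$. Since $\mathcal{K}(z_0,\cdot)\geq 0$, Minkowski's integral inequality applied in the pairing $L^p(\P)$ vs $L^1(\mathcal{K}(z_0,w)\,dw)$ gives
\begin{align*}
\E\{T^p\}^{1/p} \leq \int_{\R^{2d}}\|\rho(w)-\vartheta(w)\|_{L^p(\P)}\,\mathcal{K}(z_0,w)\,dw \leq C\,A\,\max\bigl(\sqrt{p/K},\, p/K\bigr),
\end{align*}
where $A := \int_{\R^{2d}}\vartheta(w)\mathcal{K}(z_0,w)\,dw$. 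Markov's inequality then yields $\P(T\geq\gamma) \leq \bigl(CA\,\max(\sqrt{p/K},p/K)/\gamma\bigr)^p$.

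Setting $u := CA/\gamma$ and minimizing over $p\geq 1$, a direct computation distinguishes the sub-Gaussian regime $p\leq K$ (optimum $p^* = K/(eu^2)$ when $u\geq 1/\sqrt e$) from the sub-exponential regime $p\geq K$ (optimum $p^* = K/(eu)$ when $u\leq 1/e$). This yields $\P(T\geq\gamma) \leq \exp(-K/(eu))$ for small $u$ and $\exp(-K/(2eu^2))$ for large $u$, joined by the boundary bound $u^K$ when $u\in[1/e,1/\sqrt e]$. The crucial structural observation is that $u=CA/\gamma$ lies in a bounded interval: $\gamma\geq\gamma_0$ by assumption, and $A$ is bounded above by $\int\mathcal{K}(z_0,\cdot)\,dw$, a constant depending only on $\varphi$ (since $\vartheta\leq 1$ by \eqref{eq:sigma-bound} and, by the translation covariance $|K_\varphi(y,w)|=|V_\varphi\varphi(w-y)|$, $\int\mathcal{K}(z_0,\cdot)\,dw$ is independent of $z_0$). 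In the small-$u$ regime, the multiplicative factor $A$ on the right-hand side is extracted from $(CA/\gamma)^K = A\cdot(C/\gamma)(CA/\gamma)^{K-1}$ once $u\leq 1/2$, with the remaining $(K-1)$-th power supplying a factor decaying at least as fast as $e^{-CK}$. In the large-$u$ regime, $A$ is bounded below by a positive constant depending on $\varphi$ and $\gamma_0$, so the missing $A$ can be absorbed into the constant $c$ while choosing $C$ strictly smaller than the intrinsic decay rate $1/(2eu^2)$.

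The main obstacle is the bookkeeping in the case analysis on $u=CA/\gamma$: the sub-Gaussian and sub-exponential regimes have different optimal $p$, and piecing the resulting exponential bounds together into a single estimate of the form $c\,e^{-CK}\,A$ with constants depending only on $\varphi$ and $\gamma_0$ requires careful tracking of the transition at $u\sim 1$ and of the two-sided boundedness of $A$.
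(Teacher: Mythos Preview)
Your approach is correct and follows the same high-level strategy as the paper (pointwise moment bound from the tail estimate, combine into a moment for $T$, apply Markov, optimize over $p$), but the combining step differs in a way that creates extra work for you.

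The paper does not use Minkowski. Instead it applies H\"older with respect to the probability measure $\mathcal{K}(z_0,w)\,dw/C_\varphi$ to pull the $p$-th power inside the integral:
\[
\E\{T^p\}\leq C_\varphi^{p-1}\int_{\R^{2d}}\E\big\{|\rho(w)-\vartheta(w)|^p\big\}\,\mathcal{K}(z_0,w)\,dw,
\]
and then, crucially, uses $\vartheta(w)^p\leq\vartheta(w)$ (valid since $0\leq\vartheta\leq 1$ and $p\geq 1$) when bounding the pointwise moment. This produces the linear factor $A=\int\vartheta\,\mathcal{K}(z_0,\cdot)$ immediately, with the remaining $(C_\varphi/(C_1\gamma))^p[\Gamma(p/2+1)K^{-p/2}+\Gamma(p+1)K^{-p}]$ depending only on $\varphi$ and $\gamma$, not on $A$. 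The optimization over $p$ then needs no case analysis on the size of $A$; one simply picks $p\asymp K\gamma_0^2/C_\varphi^2$ and handles small $K$ trivially.

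Your Minkowski route gives $\E\{T^p\}\lesssim A^p\max(p/K,(p/K)^2)^{p/2}$, so the factor $A$ appears to the $p$-th power, forcing the case split on $u=CA/\gamma$ that you describe. That split is sound (your observations that $A\leq C_\varphi$ and, in the large-$u$ regime, $A\geq\gamma_0/(2C)$ are the right two-sided bounds), but it is avoidable: if in your pointwise moment bound you immediately weaken $\vartheta(w)^p$ to $\vartheta(w)$ and then integrate, you get the linear $A$ directly and can dispense with the case analysis. Both arguments ultimately need the separate treatment of the finitely many $K$ below a threshold depending on $\varphi,\gamma_0$, which you should state explicitly rather than leave implicit in ``optimize over $p\geq 1$''.
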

\noindent\proof Let us define 
\begin{equation}\label{eq:Cphi}
C_\varphi :=\int_{\R^{2d}}\mathcal{K}(0,w)\ud w<\infty. 
\end{equation}
Let $p \geq 1$. By Chebyshev's and H{\"o}lder's inequalities,
\begin{align*}
 &\P\left(  \int_{\R^{2d}}|\rho(w)- \vartheta(w)| \mathcal{K} (z_0,w) \ud w\geq   \gamma   \right) 
\\
&\hspace{3cm}\leq  \left(\gamma \right)^{- {p}} \E\left\{ \left(\int_{\R^{2d}}|\rho(w)- \vartheta(w)| \mathcal{K} (z_0,w) {\ud w} \right)^{ {p}}\right\}  
\\
&\hspace{3cm}\leq  \left(  \gamma    \right)^{ -{p}} C_\varphi^{p-1} \int_{\R^{2d}}\E\left\{ |\rho(w)- \vartheta(w)|^{ {p}}  \right\}  \mathcal{K} (z_0,w)\ud w 
\\
&\hspace{3cm}\leq  \left(\frac{C_\varphi }{\gamma  } \right)^{ {p}} \int_{\R^{2d}}\int_0^\infty  { {p}}t^{ {p}-1}\P\left(|\rho(w)- \vartheta(w)| \geq t\right)\ud t\ \mathcal{K}(z_0,w)  \ud w,
\end{align*}
where we used that $C_\varphi\geq 
\int |K_\varphi(0,w)| \ud w \geq 
\int |V_\varphi \varphi (w)|^2 \ud w
=1$.
Let us further estimate the integral with respect to $t$. We invoke Lemma~\ref{lem:aux-hw} 
and use that the respective constant is $C_1\leq 1$ to estimate
\begin{align*}
\int_0^\infty  { {p}}t^{ {p}-1}\P\left(|\rho(w)-  \vartheta(w)| \geq t\right)\ud t &
\\
&\hspace{-2.5cm}\lesssim \int_0^\infty  { {p}}t^{ {p}-1}\text{exp}\left(-C_1  K  \min\left(\frac{t^2}{ \vartheta(w)^2},\frac{t}{  \vartheta(w)}\right)\right)\ud t
\\ & \hspace{-2.5cm}= \int_0^{  \vartheta(w)}  { {p}}t^{ {p}-1}\text{exp}\left(- \frac{C_1 K t^2}{ \vartheta(w)^2}\right)\ud t+\int_{  \vartheta(w)}^\infty p t^{ {p}-1}\text{exp}\left(- \frac{C_1  K t}{ \vartheta(w) }\right)\ud t
\\
& \hspace{-2.5cm}= \frac{p}{2}\left(\frac{ \vartheta(w)^2}{C_1  K }\right)^{\frac{p}{2}}\int_0^{C_1   K}t^{\frac{p-2}{2}}e^{-t}\ud t+p\left(\frac{ \vartheta(w)}{C_1 K }\right)^{ {p} }\int_{C_1  K}^\infty t^{p-1}e^{-t}\ud t
\\
&\hspace{-2.5cm}\leq \vartheta(w)  {C_1}^{-p} \Big[\Gamma\left(\frac{p}{2}+1\right)K^{-\frac{p}{2}}+{\Gamma(p+1)}K^{-p}\Big].
\end{align*}
Hence we have thus far shown that for every $1\leq p<\infty$,
\begin{align*}
\P\left(  \int_{\R^{2d}}|\rho(w)- \vartheta(w)| \mathcal{K}(z_0,w) \ud w\geq   \gamma   \right)&\\ &\hspace{-3cm} \lesssim    \left(\frac{C_\varphi }{C_1\gamma  } \right)^{ {p}}  \left[  \frac{\Gamma\left(\frac{p}{2}+1\right)}{ K ^{ \frac{p}{2}}}+\frac{ \Gamma(p+1)  }{K ^p}\right] \int_{\R^{2d}}\vartheta(w) \mathcal{K}(z_0,w)  \ud w.
\end{align*}
  Let us write for short $s=   C_\varphi /C_1\gamma>1.$  If $ K/ s^2\geq 2$, then we choose  $p=2\lfloor  K/ 2s^2\rfloor\geq 1$ and 
\begin{align*}
s^p\left[  \frac{\Gamma\left(\frac{p}{2}+1\right)}{ K ^{ \frac{p}{2}}}+\frac{ \Gamma(p+1)  }{K ^p}\right] 
 &= \left(\frac{s^2}{   K  }\right)^{ \left\lfloor\frac{  K}{2s^2}\right\rfloor}\left\lfloor\frac{ K}{ 2s^2}\right\rfloor  ! +\left(\frac{s}{  K  }\right)^{2\left\lfloor\frac{ K}{2s^2}\right\rfloor}\left(2\left\lfloor\frac{ K}{2s^2}\right\rfloor \right) !
\\
& \leq \left(\frac{1}{   2  }\right)^{ \left\lfloor\frac{  K}{2s^2}\hspace{-1pt}\right\rfloor}\left\lfloor \frac{   K  }{2s^2}\right\rfloor^{- \left\lfloor\frac{  K}{2s^2}\right\rfloor}\hspace{-1pt}\left\lfloor\frac{ K}{ 2s^2}\right\rfloor  ! + \left(2\left\lfloor\frac{  K  }{2s^2}\right\rfloor\right)^{-2\left\lfloor\frac{ K}{2s^2}\hspace{-1pt}\right\rfloor}\left(2\left\lfloor\frac{ K}{2s^2}\right\rfloor \right) !
\\
 & \leq 2\left(\frac{  1}{2}\right)^{ \left\lfloor\frac{ K}{2s^2}\right\rfloor }, \end{align*}
where the final inequality follows from 
$$
\left({2n}\right)^{-2 n} (2n) != \left({2n}\right)^{ -n }   n  !\cdot \left( {2n}\right)^{ -n}(n+1)\cdot\ldots\cdot 2n\leq \left(\frac{1}{2}\right)^{n},\quad n\in \N.
$$
Since $2\left(\frac{  1}{2}\right)^{ \left\lfloor\frac{ K}{2s^2}\right\rfloor } 
\lesssim e^{- c (C_1\gamma / 2C_\varphi)^{2}{K} }\leq e^{- c (C_1\gamma_0/ 2C_\varphi)^{2} {K}  }$
for a universal constant $c>0$,
we conclude that \eqref{eq:eq-aux-local-av} holds. 

Finally, if $ K/ s^2< 2$, then $\gamma>\gamma_0$ implies that $K<  2C_\varphi^2/\gamma_0^2C_1^2$. Hence, in that case, setting $p=1$ yields
 \begin{align*}
\P\left(  \int_{\R^{2d}}|\rho(w)- \vartheta(w)| \mathcal{K} (z_0,w) \ud w\geq   \gamma   \right) \lesssim    \frac{C_\varphi }{C_1\gamma_0  }   \ \int_{\R^{2d}}\vartheta(w) \mathcal{K}(z_0,w)  \ud w,
\end{align*}
and \eqref{eq:eq-aux-local-av} also holds, possibly after taking a different constants $c,C$. \pbox

As an application, we obtain the following.

\begin{lemma}\label{lem:sup}
Let  $\sigma=1$, $\delta\in [\delta_0,1],\ \delta_0>0$, and  $z_0 \in \mathcal{A}(\delta)$, where
\begin{equation}\label{eq:def-An}
\mathcal{A}(\delta):=\left\{z\in\R^{2d}:\ \int_{\R^{2d}}\vartheta(w)\mathcal{K}(z,w){\ud w}\leq  \frac{\delta}{2\|V_\varphi \varphi\|_1} \right\}.
\end{equation}
 There exist   constants $c,C>0$ only depending on $\varphi$ and $\delta_0$ such that 
\begin{align}
 \P\left(\sup_{z\in B_1(z_0)}\rho(z)\geq \delta\right) 
  \leq c\cdot e^{-C K} \cdot\int_{\R^{2d}}\vartheta(z)  \mathcal{K}(z_0,z) {\ud z} .  
\end{align}
\end{lemma}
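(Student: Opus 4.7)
The strategy is to use the reproducing kernel structure of $V_\varphi(L^2(\R^d))$ to transfer the pointwise supremum of $\rho$ over $B_1(z_0)$ to a weighted integral of $\rho$ against the majorant $\mathcal{K}(z_0,\cdot)$, then to split $\rho=\vartheta+(\rho-\vartheta)$ and apply Lemma~\ref{lem:aux-local-av}.

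First, for each $k$, the function $F_k:=V_\varphi(H_\Omega\mathcal{N}_k)$ belongs to $V_\varphi(L^2(\R^d))$, so the reproducing formula \eqref{eq:rep} together with Cauchy--Schwarz gives
\begin{equation*}
|F_k(z)|^2=\Big|\int_{\R^{2d}}F_k(w)K_\varphi(z,w)\ud w\Big|^2\leq \Big(\int_{\R^{2d}}|K_\varphi(z,w)|\ud w\Big)\int_{\R^{2d}}|F_k(w)|^2|K_\varphi(z,w)|\ud w.
\end{equation*}
Since $|K_\varphi(z,w)|=|V_\varphi\varphi(z-w)|$, the first factor equals $\|V_\varphi\varphi\|_1$. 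Averaging over $k\in\{1,\ldots,K\}$ yields
\begin{equation*}
\rho(z)\leq \|V_\varphi\varphi\|_1\int_{\R^{2d}}\rho(w)|K_\varphi(z,w)|\ud w,\qquad z\in\R^{2d}.
\end{equation*}

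Second, for every $z\in B_1(z_0)$ the definition of $\mathcal{K}$ lets me pull the supremum inside the integral, so that
\begin{equation*}
\sup_{z\in B_1(z_0)}\rho(z)\leq \|V_\varphi\varphi\|_1\int_{\R^{2d}}\rho(w)\mathcal{K}(z_0,w)\ud w.
\end{equation*}
Decomposing $\rho=\vartheta+(\rho-\vartheta)$ and invoking the defining inequality of $\mathcal{A}(\delta)$ for $z_0$, the deterministic $\vartheta$-part contributes at most $\delta/2$. Consequently the event $\{\sup_{z\in B_1(z_0)}\rho(z)\geq \delta\}$ is contained in
\begin{equation*}
\Big\{\int_{\R^{2d}}|\rho(w)-\vartheta(w)|\mathcal{K}(z_0,w)\ud w\geq \frac{\delta}{2\|V_\varphi\varphi\|_1}\Big\}.
\end{equation*}

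Finally, I set $\gamma:=\delta/(2\|V_\varphi\varphi\|_1)$ and apply Lemma~\ref{lem:aux-local-av}. Since $\|\varphi\|_2=1$ forces $\|V_\varphi\varphi\|_2=1$ by \eqref{eq:ortho-rel}, and $\|V_\varphi\varphi\|_\infty\leq 1$, one has $\|V_\varphi\varphi\|_1\geq \|V_\varphi\varphi\|_2^2/\|V_\varphi\varphi\|_\infty\geq 1$; therefore $\gamma\leq 1/2$, while $\gamma\geq \gamma_0:=\delta_0/(2\|V_\varphi\varphi\|_1)>0$, a constant depending only on $\varphi$ and $\delta_0$. The bound \eqref{eq:eq-aux-local-av} then immediately produces the claimed inequality, with constants of the advertised dependence. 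The main subtlety is the first step: the reproducing-kernel-plus-Cauchy--Schwarz trick that converts the $L^\infty$ supremum of $\rho$ over $B_1(z_0)$ into an $L^1$-type integral amenable to Lemma~\ref{lem:aux-local-av}; once that reduction is in place, the rest is bookkeeping.
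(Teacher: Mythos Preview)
Your proof is correct and follows essentially the same route as the paper: the reproducing-kernel Cauchy--Schwarz bound to majorize $\sup_{B_1(z_0)}\rho$ by $\|V_\varphi\varphi\|_1\int\rho(w)\mathcal{K}(z_0,w)\ud w$, then the split $\rho=\vartheta+(\rho-\vartheta)$ together with the membership $z_0\in\mathcal{A}(\delta)$, and finally an application of Lemma~\ref{lem:aux-local-av}. You are in fact slightly more careful than the paper in explicitly verifying that $\gamma=\delta/(2\|V_\varphi\varphi\|_1)\leq 1$, which is required by the hypotheses of Lemma~\ref{lem:aux-local-av}.
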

\noindent \proof

From the reproducing formula \eqref{eq:rep}, almost surely
\begin{align}\label{eq:est-sup}
\nonumber \sup_{z\in B_1(z_0)}\rho(z)&=\sup_{z\in B_R(z_0)}\frac{1}{K}\sum_{k=1}^K \left|\int_{\R^{2d}}V_\varphi H_\Omega\No_k(w)K_\varphi(z,w)\ud w\right|^2
\\ &\leq \|V_\varphi \varphi\|_1 \sup_{z\in B_1(z_0)}\frac{1}{K}\sum_{k=1}^K  \int_{\R^{2d}}|V_\varphi H_\Omega\No_k(w)|^2|K_\varphi(z,w)|\ud w 
\nonumber
\\ &\leq \|V_\varphi\varphi\|_1\int_{\R^{2d}}\rho(w)\mathcal{K}(z_0,w)\ud w.
\end{align} 
Hence,
\begin{align*}
\P\left(\sup_{z\in B_R(z_0)}\rho(z)\geq \delta\right)&\leq \P\left( \int_{\R^{2d}}\rho(w)\mathcal{K} (z_0,w)\ud w\geq \frac{\delta}{\|V_\varphi\varphi\|_1}\right)
\\
&\hspace{-1cm}= \P\left( \int_{\R^{2d}}\hspace{-0.1cm} (\rho(z)- \vartheta(z))\mathcal{K}(z_0,z) \ud z \geq \left(\frac{\delta}{\|V_\varphi\varphi\|_1}-  \int_{\R^{2d}}\vartheta(z)\mathcal{K} (z_0,z) \ud z \right) \right)
\\
&\hspace{-1cm}\leq  \P\left( \int_{\R^{2d}}|\rho(z)-\vartheta(z)| \mathcal{K} (z_0,z){\ud z}  \geq \frac{\delta}{2\|V_\varphi\varphi\|_1} \right).
\end{align*}
Now the estimate follows from Lemma~\ref{lem:aux-local-av}. \pbox

  Since we assume that $g,\varphi\in\S(\R^d)$, there exists $m_{g,\varphi}>0$ such that
\begin{equation}\label{eq:decay-K_g}
  |\langle  \varphi,\pi(z)g\rangle|\leq   m_{g,\varphi} \big(1+|z|\big)^{- 2(d+1)}.
\end{equation}
If $\varphi=g$, we write $m_\varphi$.

 \begin{theorem}\label{thm:error-outside}
Let $\sigma=1$, $\delta\in[\delta_0,1],\ \delta_0>0,$ and $\mathcal{A}(\delta)$ as in Lemma~\ref{lem:sup}.
 Then there exist constants $c,C>0$ depending only on $\varphi$ and $\delta_0$ such that the probability of an estimation error occurring in $\Omega^c\cap  \mathcal{A}(\delta)$ satisfies
\begin{equation}\label{eq:thm-error-outside}
\P\left(\sup_{z\in \Omega^c\cap \mathcal{A}(\delta)}\rho(z)\geq \delta\right)\leq c \cdot|\Omega| \cdot e^{-C K }.
\end{equation}
\end{theorem}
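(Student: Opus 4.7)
The plan is to reduce the theorem to Lemma~\ref{lem:sup} via a discretization of $\Omega^c\cap\mathcal{A}(\delta)$. Concretely, I would cover this set by countably many unit balls with centers lying inside $\mathcal{A}(\delta)$, apply the lemma pointwise at each center, take a union bound, and then collapse the resulting sum of weighted integrals into a bound proportional to $\|\vartheta\|_1$ using the fast decay of the reproducing kernel $K_\varphi$.

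First I would fix a maximal $\tfrac{1}{2}$-separated collection $\{z_j\}_{j\in J}\subset \Omega^c\cap\mathcal{A}(\delta)$, whose existence is guaranteed by Zorn's lemma (and which is necessarily countable since $\R^{2d}$ is separable). By maximality every $z\in \Omega^c\cap\mathcal{A}(\delta)$ lies within distance $\tfrac{1}{2}$ of some $z_j$, so
\begin{equation*}
\Omega^c\cap\mathcal{A}(\delta)\subseteq \bigcup_{j\in J} B_1(z_j),
\end{equation*}
while the separation $|z_j-z_k|\geq \tfrac{1}{2}$ for $j\neq k$ forces the balls $B_{1/4}(z_j)$ to be disjoint, which will translate into a uniform overlap bound. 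Since $z_j\in\mathcal{A}(\delta)$ for every $j$, Lemma~\ref{lem:sup} applies at every center, and a countable union bound gives
\begin{align*}
\P\Big(\sup_{z\in\Omega^c\cap\mathcal{A}(\delta)}\rho(z)\geq\delta\Big)
&\leq \sum_{j\in J}\P\Big(\sup_{y\in B_1(z_j)}\rho(y)\geq\delta\Big) \\
&\leq c\, e^{-CK}\int_{\R^{2d}}\vartheta(w)\Big(\sum_{j\in J}\mathcal{K}(z_j,w)\Big)\ud w.
\end{align*}

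Next I would bound the bracketed sum uniformly in $w$. The Schwartz-type decay \eqref{eq:decay-K_g} yields $\mathcal{K}(z,w)=\sup_{y\in B_1(z)}|K_\varphi(y,w)|\lesssim (1+|z-w|)^{-2(d+1)}$, and since the $z_j$ are $\tfrac{1}{2}$-separated, comparing the sum with $\int_{\R^{2d}}(1+|z-w|)^{-2(d+1)}\ud z$ (which converges because $2(d+1)>2d$) gives a finite bound $C'$ independent of $w$. Combined with Proposition~\ref{prop:estimates}(iii), namely $\|\vartheta\|_1\leq|\Omega|$, this yields
\begin{equation*}
\P\Big(\sup_{z\in\Omega^c\cap\mathcal{A}(\delta)}\rho(z)\geq\delta\Big)\leq c\,C'\cdot |\Omega|\cdot e^{-CK},
\end{equation*}
which is the asserted bound after absorbing $C'$ into $c$.

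The main technical point is the interplay between the probabilistic hypothesis $z_0\in\mathcal{A}(\delta)$ of Lemma~\ref{lem:sup} and the geometric cover; by choosing the separated net \emph{inside} $\Omega^c\cap\mathcal{A}(\delta)$ itself, the hypothesis is automatically satisfied at every center, which sidesteps the issue. Once the discretization is in place, the remaining ingredients---the polynomial decay \eqref{eq:decay-K_g} of $K_\varphi$ for the overlap sum, and $\|\vartheta\|_1\leq|\Omega|$ from Proposition~\ref{prop:estimates}(iii)---are already in the paper, so the argument reduces to assembling them.
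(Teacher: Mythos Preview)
Your proposal is correct and follows essentially the same route as the paper: cover $\Omega^c\cap\mathcal{A}(\delta)$ by unit balls centered in $\mathcal{A}(\delta)$, apply Lemma~\ref{lem:sup} at each center, bound $\sum_j \mathcal{K}(z_j,w)$ uniformly in $w$ via the decay \eqref{eq:decay-K_g}, and finish with $\|\vartheta\|_1\leq|\Omega|$ from Proposition~\ref{prop:estimates}(iii). The only difference is that the paper invokes the Besicovitch covering theorem to produce the discrete net, whereas you use a maximal $\tfrac{1}{2}$-separated set; both yield the needed finite-overlap property, and your choice is slightly more elementary.
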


\noindent  \proof
By the Besicovitch covering theorem \cite{MR12325}, there exist points $\{z_k\}_k\subset \mathcal{A}(\delta) $  such that 
  $ \mathcal{A}(\delta)  \subset \bigcup_kB_{ 1 }(z_k)$, and 
\begin{equation}\label{eq:besico}\sup_{z\in \R^{2d}} \# \big\{k:\ z\in B_{1}(z_k)\big\}\leq b_{d}<\infty,
\end{equation}
  where $b_d$ is a universal constant depending only on the dimension $d$.

Next, we show that $\sup_{w\in\R^{2d}}\sum_k\mathcal{K}(z_k,w)$ is bounded by a constant only depending on $\varphi$.  To see this, we first note that if $z\in B_{1}(z_k)$ and $|z_k-w|\geq n-1$, then $|z-w|\geq n-2$, and,
$$
\bigcup_{ n-1\leq |z_k-w|<n} B_{1}(z_k)\subset B_{n+ {1} }(w)\backslash B_{n-  {2}}(w).
$$ 
Hence, by \eqref{eq:besico}, there are at most 
$b_d\big(|B_{n+1}(0)|-|B_{n- {2}}(0)|\big)/B_1(0)$ such points $z_k$. 
 Consequently,  \eqref{eq:decay-K_g} implies
  \begin{align*}
  \sum_k\mathcal{K}(z_k,w)&\leq m_\varphi\sum_{n\in\N}\sum_{ { n-1\leq |z_k-w|<n}}\sup_{z\in B_{1 }(z_k)}(1+|z-w| )^{-2(d+1)}
  \\
  &\lesssim m_\varphi b_d\left(1+\sum_{n\geq 2} (1+ (n-2))^{-2(d+1)} \big({|B_{n+ {1} }(0)|-|B_{n- {2}}(0)|}\big) \right)
  \\  &\lesssim m_\varphi b_d\left(1+ \sum_{n\geq 2}  n^{-2(d+1)} n^{2d}\right)\lesssim m_\varphi< \infty.
  \end{align*}  
    Taking the union bound, and    applying Lemma~\ref{lem:sup}    then shows
  \begin{align*}
  \P\left(\sup_{z\in  \mathcal{A}(\delta) \cap \Omega^c}\rho(z)\geq \delta\right)&\leq \sum_k\P\left(\sup_{z\in B_{1 }(z_k)}\rho(z)\geq \delta\right)
  \\
  & \leq c \cdot e^{-C K} \sum_k \int_{\R^{2d}}\vartheta(z) \mathcal{K} (z_k,z){\ud z} 
  \\
    & \leq c  \cdot e^{-C  K}\sup_{w\in\R^{2d}}\sum_k\mathcal{K}(z_k,w)\int_{\R^{2d}}   \vartheta(z) {\ud z} 
    \\
    & \lesssim m_\varphi c \cdot e^{-C  K} |\Omega|,
  \end{align*} 
  where we used $\|\vartheta\|_1\leq |\Omega|$ (see Proposition~\ref{prop:estimates}~\ref{enum:3}).
  \pbox

  Next we estimate the size of the region $\Omega^c\backslash \mathcal{A}(\delta)$.  
 
 \begin{prop}\label{prop:size-Ag^c}
Let $\delta>0$. Then there exists a constant $c>0$ only depending on $\varphi$ such that 
\begin{equation}\label{eq:Ag-size}
\left| \Omega^c\backslash \mathcal{A}(\delta)\right|\leq c\cdot M({g,\varphi})\cdot\frac{|\partial \Omega|}{\delta},
\end{equation}
and
$$
\int_{\R^{2d}}\vartheta(w)\mathcal{K}(z,w)\ud w\leq c \cdot M({g,\varphi}) \cdot\emph{dist}(z,\partial\Omega)^{-1},
\qquad z\in\Omega^c.
$$
 \end{prop}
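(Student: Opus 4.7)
My plan rests on two structural observations about the kernel $\mathcal{K}$ and the function $\vartheta$. First, a direct computation gives $|K_\varphi(z,w)|=|V_\varphi\varphi(z-w)|$, so setting $\tilde{K}(v):=\sup_{y\in B_1(0)}|V_\varphi\varphi(v+y)|$ yields $\mathcal{K}(z,w)=\tilde{K}(z-w)$; moreover, the identity $|V_\varphi\varphi(-y)|=|V_\varphi\varphi(y)|$ gives $\tilde{K}(-v)=\tilde{K}(v)$, and both $\int\tilde{K}=C_\varphi$ and $\tilde{M}:=\int|v|\tilde{K}(v)\ud v$ are finite constants depending only on $\varphi$. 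Second, the identity $\sum_m\lambda_m|V_\varphi f_m|^2=\chi_\Omega\ast|V_\varphi g|^2$ derived inside the proof of Proposition~\ref{prop:estimates}, combined with $\lambda_m^2\leq\lambda_m$, yields the pointwise bound $\vartheta\leq \chi_\Omega\ast|V_\varphi g|^2$. Chaining the two observations gives
\[
\int_{\R^{2d}}\vartheta(w)\mathcal{K}(z,w)\ud w\leq (\chi_\Omega\ast\psi)(z),\qquad\psi:=|V_\varphi g|^2\ast\tilde{K},
\]
and by Fubini together with $|v|\leq|u|+|v-u|$, $\int\psi=C_\varphi$ and $\int|v|\psi(v)\ud v\leq C_\varphi\, M(g,\varphi)+\tilde{M}$.

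For the pointwise estimate, put $R=\text{dist}(z,\partial\Omega)$. Since $z\in\Omega^c$, the indicator $\chi_\Omega$ vanishes on $B_R(z)$, so
\[
(\chi_\Omega\ast\psi)(z)\leq \int_{|v|\geq R}\psi(v)\ud v\leq R^{-1}\int|v|\psi(v)\ud v\leq R^{-1}\bigl(C_\varphi M(g,\varphi)+\tilde{M}\bigr).
\]
For the measure bound, Chebyshev yields $|\Omega^c\setminus\mathcal{A}(\delta)|\leq 2\|V_\varphi\varphi\|_1\delta^{-1}\int_{\Omega^c}(\vartheta\ast\tilde{K})(z)\ud z$, and then Fubini combined with the symmetry of $\tilde{K}$ rewrites the integral as $\int\vartheta(w)(\chi_{\Omega^c}\ast\tilde{K})(w)\ud w$. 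Splitting $\vartheta=\chi_\Omega+(\vartheta-\chi_\Omega)$, the first piece equals $\int_\Omega \chi_{\Omega^c}\ast\tilde{K}=\int_\Omega|\chi_\Omega\ast\tilde{K}-C_\varphi\chi_\Omega|\leq\tilde{M}|\partial\Omega|$ via the regularization inequality \eqref{eq_reg}, while the second is bounded by $\|\vartheta-\chi_\Omega\|_1\|\tilde{K}\|_1\lesssim C_\varphi M(g,\varphi)|\partial\Omega|$ using Proposition~\ref{prop:estimates}(i). In each case I therefore reach the advertised shape plus a stray additive $\tilde{M}$-term.

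The only genuinely delicate point is absorbing the stray $\tilde{M}$-term into a constant multiple of $M(g,\varphi)$. The remedy is a universal lower bound $M(g,\varphi)\geq c_d>0$ depending only on the dimension: since Cauchy--Schwarz gives $\|V_\varphi g\|_\infty\leq\|g\|_2\|\varphi\|_2=1$, one has $\int_{|z|\leq R_0}|V_\varphi g|^2\leq|B_{R_0}(0)|\leq 1/2$ for a suitable dimensional $R_0>0$, hence $\int_{|z|>R_0}|V_\varphi g|^2\geq 1/2$ and $M(g,\varphi)\geq R_0/2$. Replacing $\tilde{M}$ by $(\tilde{M}/c_d)M(g,\varphi)$ yields both claimed bounds with a constant depending only on $\varphi$.
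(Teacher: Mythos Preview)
Your proof is correct. The overall strategy---Chebyshev plus the regularization inequality \eqref{eq_reg} and Proposition~\ref{prop:estimates}(i)---matches the paper, but you organise it differently. The paper works with the pointwise decay bound $\mathcal{K}(z,w)\lesssim m_\varphi(1+|z-w|)^{-2(d+1)}$ and splits the integration domain: for the measure bound it separates $\Omega^c\times\Omega$ from $\Omega^c\times\Omega^c$, and for the pointwise estimate it splits the $w$-integral over $\Omega+B_{R/2}(0)$ and its complement, invoking Proposition~\ref{prop:estimates}(iv) on the latter. You instead exploit the convolution structure directly, bounding $\vartheta\leq\chi_\Omega\ast|V_\varphi g|^2$ and collapsing everything to $\chi_\Omega\ast\psi$ with $\psi=|V_\varphi g|^2\ast\tilde{K}$; this makes the pointwise estimate a one-line tail bound and avoids appealing to Proposition~\ref{prop:estimates}(iv). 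Your approach is a little more streamlined. One point where you are actually more careful than the paper: the dimensional lower bound $M(g,\varphi)\geq c_d$ is needed to absorb the stray $\varphi$-only terms (your $\tilde M$, the paper's $m_\varphi$) into a constant multiple of $M(g,\varphi)$ as the statement requires, and the paper leaves this absorption implicit.
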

\noindent \proof Observe that 
$$
\mathcal{K}(z,w)\leq  m_\varphi \sup_{y\in B_1(z)} (1+|y-w| )^{- 2(d+1)} \lesssim m_\varphi  (1+|z-w| )^{-  2(d+1)}.
$$ 
Then we may estimate the following integral
\begin{align*}
\int_{\Omega^c}\int_{\R^{2d}} \vartheta(w)&\mathcal{K}(z,w) \ud w \ud z =\int_{\Omega^c}\int_{\Omega} \vartheta(w)\mathcal{K}(z,w)\ud w \ud z +\int_{\Omega^c}\int_{\Omega^c} \vartheta(w)\mathcal{K}(z,w)\ud w \ud z\nonumber
\\
&\lesssim m_\varphi\int_{\Omega^c}\int_{\Omega} (1+|z-w| )^{-2(d+1)}\ud w \ud z +\hspace{-0.03cm}\int_{\Omega^c} \hspace{-0.03cm}\vartheta(w)\hspace{-0.03cm}\int_{\Omega^c}\hspace{-0.03cm}\mathcal{K}(z,w)\ud z \ud w \nonumber
\\ 
&\lesssim  m_\varphi\cdot |\partial\Omega|+m_\varphi\cdot M({g,\varphi}) \cdot |\partial\Omega|,\label{eq:om-back-ag}
\end{align*}
where we applied \eqref{eq_reg} with $\psi(z)=(1+|z| )^{-2(d+1)}$ and Proposition~\ref{prop:estimates}~\ref{enum:1} to derive the estimates on the first and second integral respectively. Then \eqref{eq:Ag-size} follows from Chebyshev's inequality.

To show the second property, we set $R=\text{dist}(z,\partial\Omega)$ and assume that $z\in \Omega^c$.  By Proposition~\ref{prop:estimates}~\ref{enum:4} it holds
\begin{align*}
\int_{\R^{2d}}\vartheta(w)\mathcal{K}(z,w)\ud w&= \int_{\Omega+B_{R/2}(0)}\vartheta(w)\mathcal{K}(z,w)\ud w+ \int_{(\Omega+B_{R/2}(0))^c}\vartheta(w)\mathcal{K}(z,w)\ud w
\\
&\lesssim \int_{\Omega+B_{R/2}(0)} \mathcal{K}(z,w)\ud w+ M({g,\varphi}) (R/2)^{-1} \int_{(\Omega+B_{R/2}(0))^c} \mathcal{K}(z,w)\ud w
\\
&\leq m_\varphi  \int_{\Omega+B_{R/2}(0)} (1+|z-w|)^{-2(d+1)}\ud w +  M({g,\varphi}) R^{-1} 
\|\mathcal{K}(z,\cdot)\|_1\\
&\leq m_\varphi  \int_{ B_{R/2}(0)^c} (1+|w|)^{-2(d+1)}\ud w +  M({g,\varphi}) R^{-1} 
\|\mathcal{K}(0,\cdot)\|_1\\
&\lesssim \big(m_\varphi   + C_\varphi M({g,\varphi}) \big)R^{-1}.
\end{align*}
\pbox
 
  We will now consider  estimates for the uniform error probability in $\Omega$. We first bound the probability that an estimation error occurs in a set of the form $\Omega\cap \{ |\vartheta -1|\leq  \delta\}$.

  \begin{theorem}\label{thm:error-inside}
Let $\sigma=1$ and $\delta\in(0,1/2]$.  There exist  constants $c,C>0$  only depending on $\varphi$ such that
\begin{equation}
\P\left(\inf_{z\in\Omega\hspace{0.05cm}\cap\hspace{0.05cm} \{ |\vartheta -1|\leq \delta/3\}} \rho(z) < \delta\right) \leq c\cdot|\Omega^1| \cdot e^{-C {K}}.
\end{equation}
\end{theorem}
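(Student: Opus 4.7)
My plan is to mirror the strategy of Theorem~\ref{thm:error-outside}, but now for a lower bound on $\rho$ rather than an upper bound. Since the reproducing-formula trick of Lemma~\ref{lem:sup} only yields upper bounds on $\rho$, I would combine pointwise concentration (as in Theorem~\ref{thm:high-prob}) with a uniform Lipschitz estimate on $\rho$. First, using the Besicovitch covering theorem as in Theorem~\ref{thm:error-outside}, I would cover the set $A:=\Omega\cap\{|\vartheta-1|\leq \delta/3\}$ by balls $B_r(z_k)$ centered at points $z_k\in A$, with $r$ a small constant depending only on $\varphi$ (to be fixed below). The number of balls is $\lesssim r^{-2d}|\Omega^1|$, which, for $r$ a $\varphi$-dependent constant, is $\lesssim_\varphi|\Omega^1|$. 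It then suffices to show $\P(\inf_{B_r(z_k)}\rho<\delta)\leq c'e^{-CK}$ per ball and apply a union bound.

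The crux is a uniform-in-$z$ Lipschitz bound for $\rho$. A direct computation gives $\partial_xV_\varphi f(z)=-V_{\varphi'}f(z)$ and $\partial_\xi V_\varphi f(z)=-2\pi i V_{t\varphi}f(z)-2\pi ix\,V_\varphi f(z)$; although the latter contains a polynomial-in-$x$ term, it drops out upon squaring, as $\mathrm{Re}(ix|V_\varphi f|^2)=0$, yielding
\[
\partial_\xi|V_\varphi f|^2(z)=4\pi\,\mathrm{Im}\bigl(\overline{V_\varphi f(z)}\,V_{t\varphi}f(z)\bigr),\qquad \partial_x|V_\varphi f|^2(z)=-2\,\mathrm{Re}\bigl(\overline{V_\varphi f(z)}\,V_{\varphi'}f(z)\bigr).
\]
Summing over $\ell$ and using Cauchy-Schwarz, I obtain
\[
|\nabla\rho(z)|\leq C_\varphi\sqrt{\rho(z)\,\tau(z)},
\]
where $\tau(z):=\frac{1}{K}\sum_\ell\bigl(|V_{\varphi'}H_\Omega\No_\ell(z)|^2+|V_{t\varphi}H_\Omega\No_\ell(z)|^2\bigr)$ (with analogous components in higher dimension) is of the same form as $\rho$ but with Schwartz windows $\varphi'$, $t\varphi$; by Bessel's inequality, $\E\tau(z)\leq \|\varphi'\|_2^2+\|t\varphi\|_2^2$ uniformly in $z$.

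With this in hand, I would control the per-ball probability through three high-probability events. Pointwise: Lemma~\ref{lem:aux-hw} applied with $t=1/6$ (and $\vartheta\leq 1$) gives $\rho(z_k)\geq (1-\delta/3)-1/6\geq 2/3$ with probability $\geq 1-3e^{-CK}$ (using $\delta\leq 1/2$). Supremum bounds: adapting Lemma~\ref{lem:sup} and Lemma~\ref{lem:aux-local-av} to both $\rho$ and $\tau$, and using the uniform bound $\int\vartheta(w)\mathcal{K}(z_k,w)\ud w\leq C_\varphi$ (from $\vartheta\leq 1$ and \eqref{eq:Cphi}), together with the analogous estimate for $\tau$, one gets $\sup_{B_r(z_k)}\rho\leq M_1$ and $\sup_{B_r(z_k)}\tau\leq M_2$ for $\varphi$-dependent constants, each with probability $\geq 1-ce^{-CK}$. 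I would then fix $r$ so small (depending only on $\varphi$) that $rC_\varphi\sqrt{M_1M_2}\leq 1/6$; on the intersection of the three events, the mean value inequality yields, for every $z\in B_r(z_k)$,
\[
\rho(z)\geq \rho(z_k)-rC_\varphi\sqrt{M_1M_2}\geq 2/3-1/6=1/2\geq\delta.
\]
Union-bounding over $k$ yields the stated probability estimate.

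\textbf{Main obstacle.} The non-routine ingredient is the uniform Lipschitz bound. Naively, the $\xi$-derivative of $V_\varphi f$ carries a factor of $x$, so a crude gradient bound on $\rho$ would grow with the position of $\Omega$, forcing the number of covering balls -- and hence the constant $c$ -- to depend on $\Omega$. The spectrogram-level cancellation of the polynomial term is precisely what keeps $c$ dependent on $\varphi$ alone and allows a constant-radius covering.
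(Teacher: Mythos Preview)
Your approach is correct and shares the paper's overall architecture: cover $\Omega\cap\{|\vartheta-1|\leq\delta/3\}$ by balls of a small $\varphi$-dependent radius, combine pointwise concentration at the centers (Lemma~\ref{lem:aux-hw}) with a local oscillation bound for $\rho$, and union-bound over the $\lesssim_\varphi|\Omega^1|$ balls.

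The genuine difference lies in the oscillation step. You differentiate the spectrogram, obtain $|\nabla\rho|\leq C_\varphi\sqrt{\rho\,\tau}$ after the cancellation of the linear-in-$x$ term, and then need high-probability sup bounds on \emph{two} processes, $\rho$ and the auxiliary $\tau$; the latter requires re-running the analogues of Lemmas~\ref{lem:aux-hw} and~\ref{lem:aux-local-av} with the (non-normalized) Schwartz windows $\varphi',\,t\varphi$. The paper avoids differentiation altogether: using the reproducing formula and Cauchy--Schwarz it shows directly that
\[
\sup_{z\in B_R(z_0)}|\rho(z)-\rho(z_0)|\ \leq\ T_\varphi(R)\int_{\R^{2d}}\rho(w)\,\mathcal{K}(z_0,w)\,\ud w,\qquad T_\varphi(R)\xrightarrow[R\to0]{}0.
\]
Thus the oscillation is governed by a single $\rho$-weighted kernel integral, and one application of Lemma~\ref{lem:aux-local-av} (with $\gamma=1$, using only the crude bound $\int\vartheta\,\mathcal{K}(z_0,\cdot)\leq C_\varphi$) handles it. In particular, the ``main obstacle'' you isolate---the polynomial growth in $\partial_\xi V_\varphi f$---never arises in the paper's argument; the kernel-based modulus of continuity sidesteps it entirely. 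Your route yields an explicit gradient bound for $\rho$ at the price of introducing and controlling the second process $\tau$; the paper's route is more economical, needing no auxiliary process, but gives only an implicit oscillation bound through $T_\varphi(R)$.
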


\noindent
\proof
First, we observe that
\begin{align}\label{eq:inf-mean-value}
\begin{aligned}
\inf_{z\in B_R(z_0)}\rho(z)&\geq \rho(z_0)-\sup_{z\in B_R(z_0)}|\rho(z_0)-\rho(z)|
\\
&\geq \rho(z_0)-\sup_{z\in B_R(z_0)}\frac{1}{K}\sum_{k=1}^K \left||V_{\varphi}H_\Omega \No_k(z_0)|^2-|V_{\varphi}H_\Omega \No_k(z)|^2\right|.
\end{aligned}
\end{align}
 The reproducing formula \eqref{eq:rep} and Cauchy-Schwarz inequality then  show that for any $f\in L^2(\R)$, $0<R\leq 1$, and $z\in B_R(z_0)$
\begin{align*}
\Big||V_{\varphi}f(z)|^2-&|V_{\varphi}f(z_0)|^2\Big|  =\big(|V_{\varphi}f(z)|-|V_{\varphi}f(z_0)|\big)\big(|V_{\varphi}f(z)|+|V_{\varphi}f(z_0)|\big)
\\
& = \big||V_{\varphi}\pi(-z_0)f(z-z_0)|-|V_{\varphi}\pi(-z_0)f(0)|\big|\big(|V_{\varphi}f(z)|+|V_{\varphi}f(z_0)|\big)
\\
& \leq  \int_{\R^{2d}} |V_{\varphi}\pi(-z_0)f(w)\langle \pi(z-z_0)\varphi-\varphi,\pi(w)\varphi\rangle|\ud w  
\\
& \hspace{1.2cm}\cdot \int_{\R^{2d}} |V_{\varphi}f(w)|\big(|{K}_\varphi(z,w)|+|{K}_\varphi(z_0,w)|\big)\ud w
\\
& \leq    \int_{\R^{2d}} |V_{\varphi}\pi(-z_0)f(w)\langle \pi(z-z_0)\varphi-\varphi,\pi(w)\varphi\rangle|\ud w  \cdot 2
 \int_{\R^{2d}} |V_{\varphi}f(w)| \mathcal{K}(z_0,w) \ud w
\\
& \leq  2C_\varphi^{1/2} \left(\int_{\R^{2d}} |V_{\varphi}\pi(-z_0)f(w)|^2|\langle \pi(z\hspace{-1pt}-\hspace{-1pt}z_0)\varphi-\varphi,\pi(w)\varphi\rangle|\ud w  \right)^{\frac{1}{2}}
\\
&\hspace{1.2cm}\cdot\left(\int_{\R^{2d}} |\langle \pi(z-z_0)\varphi-\varphi,\pi(w)\varphi\rangle | \ud w\right)^{\frac{1}{2}} \cdot \left(\int_{\R^{2d}} |V_{\varphi}f(w)|^2\mathcal{K}(z_0,w)\ud w\right)^{\frac{1}{2}}
\\
& \leq   2^{3/2}C_\varphi^{1/2}  \int_{\R^{2d}}|V_{\varphi}f(w)|^2\mathcal{K}(z_0,w)\ud w \cdot\left(\int_{\R^{2d}} |\langle \pi(z-z_0)\varphi-\varphi,\pi(w)\varphi\rangle | \ud w\right)^{\frac{1}{2}}
\\
& \leq  T_\varphi(R)\int_{\R^{2d}}|V_{\varphi}f(w)|^2\mathcal{K}(z_0,w)\ud w,
\end{align*}
where 
$$
T_\varphi(R):=  2^{3/2}C_\varphi^{1/2} \sup_{z\in B_R(0)}\left(\int_{\R^{2d}} |\langle \pi(z)\varphi-\varphi,\pi(w)\varphi\rangle | \ud w\right)^{\frac{1}{2}}\rightarrow 0
,\quad \text{as }R\rightarrow 0,
$$
because $\varphi$ is a Schwartz function. Therefore, 
\begin{equation}\label{eq:sup-osc}
\sup_{z\in B_R (z_0)}|\rho(z_0)-\rho(z)|\leq T_\varphi(R)\int_{\R^{2d}}\rho(w)\mathcal{K}(z_0,w)\ud w.
\end{equation}
Let now $z_0\in \big\{z\in\R^{2d}:\ |\vartheta(z)-\chi_\Omega(z)|\leq \delta/3\big\}\cap \Omega$. Then, combining the assumption $\delta\in(0,1/2]$ with \eqref{eq:inf-mean-value} and \eqref{eq:sup-osc} yields
\begin{align*}
\P \left(\inf_{z\in B_R(z_0)}\rho(z)< \delta\right)&
  \leq \P \left( \rho(z_0)-  \vartheta(z_0)< \delta-  \vartheta(z_0)+T_\varphi(R)\int_{\R^{2d}} \rho(w) \mathcal{K} (z_0,w)\ud w \right)
\\
& \leq \P\left(\rho(z_0)- \vartheta(z_0)< -\frac{1}{3}  +T_\varphi(R)\int_{\R^{2d}} \rho(w) \mathcal{K} (z_0,w)\ud w\right).
\end{align*}
In the following, we make use of   $\P(A)=\P(A\cap B)+\P(A\cap B^c)\leq \P(A\cap B)+\P(B^c)$. Define the events 
$$
A_{z_0}:=\left\{\rho(z_0)-  \vartheta(z_0)< -\frac{1}{3} +T_\varphi(R)\int_{\R^{2d}}\rho(w) \mathcal{K} (z_0,w)\ud w\right\},
$$
and
\begin{equation}\label{eq:Bz0}
B_{z_0}:=\left\{T_\varphi(R)\int_{\R^{2d}} \rho(w) \mathcal{K} (z_0,w)\ud w< \frac{1}{6}\right\}.
\end{equation}
Then, we may   estimate
 \begin{align*}
 \P(A_{z_0}\cap B_{z_0})\leq \P\left(\rho(z_0)- \vartheta(z_0)< -\frac{1}{6}  \right)\leq \P\left(|\rho(z_0)- \vartheta(z_0)|>  \frac{1}{6}  \right)\lesssim e^{-C {K}},
  \end{align*}
  where we used Lemma~\ref{lem:aux-hw} and   $\vartheta(z_0)\leq 1$. 
  
  Second, choose $R_\varphi \leq 1$ so that
   \begin{equation}\label{def:Rg}
    \frac{1}{6 T_\varphi( 
   R_\varphi)}\geq 2 C_\varphi,
   \end{equation} 
   where $C_\varphi$ is the constant defined in \eqref{eq:Cphi}.
  Then we estimate $\P({B}_{z_0}^c)$ as follows: 
  \begin{align*}
  \P(B_{z_0}^c)&=\P\left(  \int_{\R^{2d}} \rho(w) \mathcal{K} (z_0,w)\ud w\geq  \frac{1}{6T_\varphi(R_\varphi)}   \right)
  \\
  &\leq \P\left(   \int_{\R^{2d}} |\rho(w)- \vartheta(w)|\mathcal{K}(z_0,w)\ud w  \geq  \frac{1}{6T_\varphi(R_\varphi)}-  \int_{\R^{2d}}\hspace{-0.12cm}\vartheta(w)\mathcal{K}(z_0,w)\ud w  \right)
    \\
  &\leq \P\left(   \int_{\R^{2d}}|\rho(w)- \vartheta(w)|\mathcal{K} (z_0,w){\ud w} \geq  \frac{1}{6T_\varphi(R_\varphi )}- C_\varphi \right)
      \\
  &\leq \P\left(  \int_{\R^{2d}}|\rho(w)- \vartheta(w)|\mathcal{K} (z_0,w){\ud w} \geq  C_\varphi  \right)
  \\
    &\leq \P\left(  \int_{\R^{2d}}|\rho(w)- \vartheta(w)|\mathcal{K} (z_0,w){\ud w} \geq  1 \right)
  \\
  &\leq c\cdot e^{-CK}\int_{\R^{2d}}\vartheta(w)\mathcal{K}(z_0,w) {\ud w} 
 \lesssim m_\varphi \cdot c\cdot e^{-C {K}},
    \end{align*}
where we used Lemma~\ref{lem:aux-local-av}.

Let us consider a covering of $ \Omega\cap \{z\in\R^{2d}:\ |\vartheta(z)-1|\leq \delta/3\}$ 
by balls $B_{R_\varphi}(z_k)$, $z_k\in \Omega\cap \{ |\vartheta-1|\leq \delta/3\}$, with at most  $b_{d}$ overlaps. Then there are at most  $|\Omega^{R_\varphi}|b_d/(\pi R_\varphi^2)$ 
points $z_k$ defining such a  covering. A union bound argument then yields
\begin{align*}
\P\left(\inf_{z\in\Omega\cap \{ |\vartheta-1|< \delta/3\}} \rho(z) \leq \delta\right)&\leq \sum_{k}\P\left(\inf_{z\in B_{R_\varphi}(z_k)} \rho(z)< \delta\right)
\\
& \leq 
\sum_{k} P(A_{z_k}\cap B_{z_k}) + P( B_{z_k}^c)
\\
&\lesssim c\sum_{k}  m_\varphi e^{-C{K} } 
\lesssim c\frac{ m_\varphi}{R_\varphi^2 }  | \Omega^{R_\varphi}| e^{-C {K} }\leq c\frac{ m_\varphi}{R_\varphi^2 }  | \Omega^1| e^{-C {K} },
\end{align*}
as $R_\varphi\leq 1$.
\pbox

  So far we have considered a generic deterministic cutoff parameter $\delta$. The following result will allow us to replace $\delta$ by the random threshold $\|\rho\|_\infty/4$.
\begin{prop}\label{prop:max-rho}
Let $\sigma=1$, and assume \eqref{eq_as2}. Then there exist    constants $c,C>0$ depending only on $\varphi$ and a constant $r>0$ depending on $g$ and $\varphi$ such that  
$$
\P(  \|\rho\|_\infty\leq 2)\geq 1-c\cdot  | \Omega^r|\cdot e^{-C K}.
$$
In addition, if $\delta\in [0,(20\|V_\varphi g\|_1^2)^{-1} ]$, and $K\geq 2$, then 
$$
\P(\|\rho\|_\infty \leq \delta)\leq (20\|V_\varphi g\|_1^2\delta)^{K/2}.
$$
\end{prop}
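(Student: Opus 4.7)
The first bound I would prove by splitting $\R^{2d}=\Omega^r\cup(\Omega^r)^c$. For the far complement, Proposition~\ref{prop:size-Ag^c} yields $\int\vartheta(w)\mathcal{K}(z,w)\ud w\lesssim M(g,\varphi)/r$ whenever $z\in(\Omega^r)^c$, so fixing $r=r(g,\varphi)$ large forces $(\Omega^r)^c\subseteq\mathcal{A}(1)$; Theorem~\ref{thm:error-outside} applied with $\delta=1$ then gives $\P(\sup_{(\Omega^r)^c}\rho\geq 2)\leq\P(\sup_{(\Omega^r)^c}\rho\geq 1)\leq c|\Omega|e^{-CK}$. For the near region $\Omega^r$, I would cover it by $\lesssim|\Omega^r|R^{-2d}$ balls $B_R(z_k)$ of a small radius $R\in(0,1]$ depending only on $\varphi$, and use the oscillation estimate \eqref{eq:sup-osc}
\[
\sup_{z\in B_R(z_0)}\rho(z)\leq\rho(z_0)+T_\varphi(R)\int_{\R^{2d}}\rho(w)\mathcal{K}(z_0,w)\ud w.
\]
Three ingredients control each ball off an event of probability $\lesssim e^{-CK}$: (i)~Lemma~\ref{lem:aux-hw} combined with $\vartheta(z_0)\leq 1$ gives $\rho(z_0)\leq 5/4$; (ii)~Lemma~\ref{lem:aux-local-av} applied with $\gamma=1$ gives $\int|\rho-\vartheta|\mathcal{K}(z_0,\cdot)\leq 1$, so that $\int\rho\,\mathcal{K}(z_0,\cdot)\leq C_\varphi+1$ after invoking the deterministic bound $\int\vartheta\,\mathcal{K}(z_0,\cdot)\leq C_\varphi$; (iii)~$R$ is fixed so that $T_\varphi(R)(C_\varphi+1)\leq 3/4$. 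A union bound over the balls yields $\P(\sup_{\Omega^r}\rho\geq 2)\lesssim|\Omega^r|e^{-CK}$, and combining with the far-region estimate gives the claim.

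For the second bound I reduce to a pointwise lower tail at a distinguished point $z^*$. The Cauchy--Schwarz inequality applied to the identity $\sum_m\lambda_m|V_\varphi f_m(z)|^2=\chi_\Omega*|V_\varphi g|^2(z)$ from the proof of Proposition~\ref{prop:estimates}, together with the Bessel bound $\sum_m|V_\varphi f_m(z)|^2\leq K_\varphi(z,z)=1$, yields
\[
\vartheta(z)\geq\bigl(\chi_\Omega*|V_\varphi g|^2(z)\bigr)^2.
\]
Taking $z^*$ to be a maximizer of $\vartheta$, the random variables $X_k:=|V_\varphi H_\Omega\mathcal{N}_k(z^*)|^2$ are i.i.d.\ exponentials with mean $\vartheta(z^*)$. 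Since $\|\rho\|_\infty\leq\delta$ forces $\sum_kX_k\leq K\delta$, a pigeonhole argument produces at least $\lceil K/2\rceil$ indices with $X_k\leq 2\delta$. Independence together with $\P(X_k\leq 2\delta)\leq 2\delta/\vartheta(z^*)$ and $\binom{K}{\lceil K/2\rceil}\leq 2^K$ then gives
\[
\P(\|\rho\|_\infty\leq\delta)\leq 2^K\bigl(2\delta/\vartheta(z^*)\bigr)^{\lceil K/2\rceil}\leq\bigl(8\delta/\vartheta(z^*)\bigr)^{K/2},
\]
where the second inequality uses $2\delta/\vartheta(z^*)\leq 1$ under the stated range of $\delta$. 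Provided $\vartheta(z^*)\geq 2/(5\|V_\varphi g\|_1^2)$, this collapses to the claimed $(20\|V_\varphi g\|_1^2\delta)^{K/2}$.

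The main obstacle will be establishing the geometric lower bound $\max_z\chi_\Omega*|V_\varphi g|^2(z)\gtrsim 1/\|V_\varphi g\|_1$. The Cauchy--Schwarz inequality $\|V_\varphi g\|_\infty\cdot\|V_\varphi g\|_1\geq\|V_\varphi g\|_2^2=1$ locates a peak of $|V_\varphi g|^2$ of height $\geq\|V_\varphi g\|_1^{-2}$; transferring this peak to the convolution $\chi_\Omega*|V_\varphi g|^2$ requires the largeness condition \eqref{eq_as2} on $\Omega$, via an averaging estimate $|\Omega|^{-1}\int_\Omega\chi_\Omega*|V_\varphi g|^2\geq 1-M(g,\varphi)|\partial\Omega|/(2|\Omega|)$ derived from \eqref{eq_reg}. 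The residual numerical constants are then absorbed into the factor $20$ in the final bound.
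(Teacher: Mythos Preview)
Your treatment of the first inequality is essentially the paper's own: split into a far zone contained in $\mathcal{A}(1)$ (handled by Theorem~\ref{thm:error-outside}) and a near zone $\Omega^r$ covered by small balls, on which the oscillation bound \eqref{eq:sup-osc} combined with Lemma~\ref{lem:aux-hw} and Lemma~\ref{lem:aux-local-av} gives a per-ball failure probability $\lesssim e^{-CK}$. The numerics differ (you aim for $5/4+3/4$, the paper works with the event $B_{z_0}$ of \eqref{eq:Bz0} and the threshold $5/6$), but the mechanism is identical.

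For the second inequality your route is genuinely different from the paper's and contains a gap. The paper does \emph{not} work at a single point $z^*$; instead it first switches to the model window via
\[
\sup_z\frac{1}{K}\sum_k|V_gH_\Omega\mathcal N_k(z)|^2\le\|V_\varphi g\|_1^2\,\|\rho\|_\infty,
\]
then uses the double orthogonality \eqref{eq:double-orth} to obtain $\int_\Omega\sum_k|V_gH_\Omega\mathcal N_k|^2=\sum_k\sum_m\lambda_m^3|\alpha_m^k|^2$, and finally invokes Lemma~\ref{lem:aux-eigenvalues} (which is where the largeness condition \eqref{eq_as2} enters, through $M(g,g)$) to bound this below by a $\chi^2$ variable with $2K\lfloor|\Omega|/2\rfloor$ degrees of freedom. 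The explicit factor $20\|V_\varphi g\|_1^2$ then falls out of a lower-incomplete-gamma estimate.

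Your pigeonhole argument on i.i.d.\ exponentials is clean, but the lower bound you need on $\vartheta(z^*)$ does not follow from the hypotheses. The averaging step you propose,
\[
\frac{1}{|\Omega|}\int_\Omega\chi_\Omega*|V_\varphi g|^2\ \ge\ 1-\frac{M(g,\varphi)\,|\partial\Omega|}{|\Omega|},
\]
is correct, but \eqref{eq_as2} only asserts $|\Omega|\ge 8\,M(g,g)\,|\partial\Omega|$; it says nothing about $M(g,\varphi)$. Since there is no a priori inequality between $M(g,\varphi)$ and $M(g,g)$, the right-hand side above need not be bounded away from zero, and you cannot conclude $\max_z\chi_\Omega*|V_\varphi g|^2(z)\ge\sqrt{2/5}\,/\|V_\varphi g\|_1$ (which is what $\vartheta(z^*)\ge 2/(5\|V_\varphi g\|_1^2)$ requires). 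The ``peak'' observation $\|V_\varphi g\|_\infty\ge 1/\|V_\varphi g\|_1$ does not help either, because a pointwise peak of $|V_\varphi g|^2$ does not transfer to the convolution $\chi_\Omega*|V_\varphi g|^2$ without additional concentration information. In short, your approach would at best yield $(C(g,\varphi)\,\delta)^{K/2}$ with an implicit constant depending on the uncontrolled ratio $M(g,\varphi)/M(g,g)$, not the stated $20\|V_\varphi g\|_1^2$. The paper avoids this by passing to the $g$-window, where the largeness hypothesis is exactly tailored to Lemma~\ref{lem:aux-eigenvalues}.
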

\noindent \proof 
For the event $\{ \|\rho\|_\infty>2\}$, we   argue as in the proof of Theorem~\ref{thm:error-inside}. With the notation of that proof,
\begin{align*}
\sup_{z\in B_R(z_0)}\rho(z)&\leq \rho(z_0)+\sup_{z\in B_R(z_0)}|\rho(z_0)-\rho(z)|
\\
&\leq \rho(z_0)+T_\varphi(R)\int_{\R^{2d}}\rho(w)\mathcal{K}(z_0,w)\ud w.
\end{align*}
Therefore, since $2\geq 1+\vartheta(z_0)$
\begin{align*}
\P\left(\sup_{z\in B_R(z_0)}\rho(z)>2\right)&\leq \P\left(\rho(z_0)>1+\vartheta(z_0)-T_\varphi(R)\int_{\R^{2d}}\rho(w)\mathcal{K}(z_0,w)\ud w\right)
\\
&\leq \P(B_{z_0}^c)+\P\big(|\rho(z_0)-\vartheta(z_0)|>5/6\big),
\end{align*}
where $B_{z_0}$ is the event given by \eqref{eq:Bz0}. Proposition~\ref{prop:size-Ag^c} guarantees the existence of a constant $r=r({g,\varphi})>0$  such that
$(\Omega^c\backslash  \mathcal{A}(1)) \cup \Omega\subset  \Omega^r$. Therefore, 
\begin{align*}
\P(\|\rho\|_\infty>2) 
&\leq \P\left(\sup_{z\in\Omega^c\cap \mathcal{A}(1)}\rho(z)>1\right)+\P\left(\sup_{z\in(\Omega^c\backslash  \mathcal{A}(1)) \cup \Omega}\rho(z)>2\right)
\\
&\leq c|\Omega|e^{-C K}+\P\left(\sup_{z\in  \Omega^r }\rho(z)>2\right),
\end{align*}
by  Theorem~\ref{thm:error-outside}. For the second term, we cover $\Omega^r$ with balls  of radius $R_\varphi\leq 1$ satisfying \eqref{def:Rg} and apply a union bound argument as in the proof of Theorem~\ref{thm:error-inside} to show 
$$
\P\left(\sup_{z\in   \Omega^r}\rho(z)>2\right)\lesssim \sum_k \Big(\P( {B}_{z_k}^c)+\P\big(|\rho(z_k)-\vartheta(z_k)|>5/6\big) \Big) \lesssim \frac{m_\varphi}{R_\varphi^2} | \Omega^{r+1}|e^{-CK}.
$$
To show the second property we first observe that, by the reproducing formula \eqref{eq:rep}, the averaged spectrogram formed with respect to the model window $g$ is dominated by $\rho$ (which involves the reconstruction window $\varphi$)
\begin{align*}
 \sup_{z\in\R^{2d}}&\frac{1}{K}\sum_{k=1}^K |V_g H_\Omega \No_k(z)|^2 = \sup_{z\in\R^{2d}}
 \frac{1}{K }\sum_{k=1}^K \left|\int_{\R^{2d}}V_\varphi H_\Omega \No_k(w)\langle \pi(w)\varphi,\pi(z)g\rangle \ud w\right|^2
 \\
 &\leq \sup_{z\in\R^{2d}}
  \int_{\R^{2d}} \frac{1}{K }\sum_{k=1}^K\big|V_\varphi H_\Omega \No_k(w)\big|^2|\langle \pi(w)\varphi,\pi(z)g\rangle| \ud w \int_{\R^{2d}}|\langle \pi(w)\varphi,\pi(z)g\rangle| \ud w
  \\
 &\leq \sup_{z\in\R^{2d}}
  \|\rho\|_\infty \left( \int_{\R^{2d}}|\langle \pi(w)\varphi,\pi(z)g\rangle| \ud w\right)^2=  \|V_\varphi g\|_1^2  \|\rho\|_\infty.
\end{align*}
Next, we
 use the double orthogonality property \eqref{eq:double-orth} of the eigenfunctions $f_n$ to derive
\begin{align*}
\int_\Omega \sum_{k=1}^K |V_gH_\Omega \No_k(z)|^2\ud z&= \sum_{k=1}^K\sum_{\ell,n\in\N}\alpha_\ell^k\overline{\alpha_n^k}\lambda_\ell\lambda_n \int_\Omega V_gf_\ell(z) \overline{V_gf_n(z)}\ud z
\\
&= \sum_{k=1}^K\sum_{n\in\N}|\alpha_n^k|^2\lambda_n^3.
\end{align*}
Therefore, there exist $z_0\in\Omega$ such that 
$$
 \frac{1}{K}\sum_{k=1}^K |V_gH_\Omega \No_k(z_0)|^2\geq \frac{1}{K|\Omega|}\sum_{k=1}^K\sum_{n\in\N}|\alpha_n^k|^2\lambda_n^3\geq  \frac{1}{K|\Omega|}\left(\frac{3}{4}\right)^3\sum_{k=1}^K\sum_{n=1}^{\lfloor |\Omega|/2\rfloor}|\alpha_n^k|^2,
$$
where we used Lemma~\ref{lem:aux-eigenvalues} to derive the last inequality.
We define  $Y=\sum_{k=1}^K\sum_{n=1}^{\lfloor |\Omega|/2\rfloor}|\alpha_n^k|^2 =\sum_{k=1}^K\sum_{n=1}^{\lfloor |\Omega|/2\rfloor}(\text{Re}(\alpha_n^k)^2+\text{Im}(\alpha_n^k)^2)$. Then $2Y\sim  \chi^2(2 K\lfloor|\Omega|/2\rfloor)$ and 
\begin{align*}
\P(\|\rho\|_\infty \leq \delta)&\leq \P\left(\sup_{z\in\R^{2d}}\frac{1}{K}\sum_{k=1}^K |V_gH_\Omega \No_k(z)|^2 \leq \|V_\varphi g\|_1^2 \delta\right)
\\
&\leq \P\left(2Y\leq 2\left(\frac{4}{3}\right)^3\|V_\varphi g\|_1^2  K|\Omega|\delta\right)
\\
&= {\Gamma\left(  K\lfloor|\Omega|/2\rfloor \right)}^{-1}\gamma\left(  K\lfloor |\Omega|/2\rfloor, \left(\frac{4}{3}\right)^3\|V_\varphi g\|_1^2  K|\Omega|\delta  \right)
\\
&\leq  {\Gamma\left(  K\lfloor|\Omega|/2\rfloor \right)}^{-1}\gamma\left(  K\lfloor |\Omega|/2\rfloor, 10\|V_\varphi g\|_1^2  K\lfloor |\Omega|/2\rfloor\delta  \right),
\end{align*}
where $\gamma(a,z)$ denotes the lower incomplete gamma function and we used $|\Omega| \leq 4 \lfloor|\Omega|/2\rfloor$, 
by \eqref{eq_as2},
as well as $4(4/3)^3\leq 10$.  Subsequently, we will show that 
\begin{equation}\label{eq:est-gamma}
\gamma\left(a, ax \right)\leq (2x )^{a/2} \Gamma\left(a\right),\quad x\in[0,1/2],\ a\geq 2,
\end{equation}
which concludes our proof by setting $x=10\|V_\varphi g\|_1^2\delta$ and $a=K\lfloor |\Omega|/2\rfloor\geq 2$.

To this end, we we prove that  $h_1(x):= x^{-a/2}\gamma(a,ax)$ is non-decreasing on $(0,1/2]$; then \eqref{eq:est-gamma} will follow by
$$
x^{-a/2}\gamma\left(a, ax \right)=h_1(x)\leq h_1(1/2)= 2^{a/2}\gamma\left(a, \frac{a}{2} \right) \leq 2^{a/2} \Gamma(a).
$$
The derivative of $h_1$ is given by 
$$
h'_1(x)=x^{-a/2-1}\left((ax)^ae^{-ax}-\frac{a}{2}\gamma(a,ax)\right)=x^{-a/2-1}h_2(x).
$$
We note that $h_2(0)=0,$ and that 
$$
h_2'(x)=a(xa)^ae^{-ax}\big((2x)^{-1}-1\big)\geq 0,\quad x\in(0,1/2].
$$
This implies that $h_2(x)\geq 0$, and consequently $h_1'(x)\geq 0$ whenever $x\in(0,1/2]$.
\pbox

  We can now prove the main results.

\subsection{Proof of Theorem~\ref{mthm1}}\label{sec_5.3}
Let $\sigma=1$. We apply Theorem~\ref{thm:error-inside} and Proposition~\ref{prop:max-rho} to show
\begin{align}\label{eq_a}
\P\left(\inf_{z\in\Omega\cap\{|\vartheta-1|\leq 1/6\}}\rho(z)<\|\rho\|_\infty/4\right)&\leq \P\left(\inf_{z\in\Omega\cap\{|\vartheta-1|\leq 1/6\}}\rho(z)<1/2\right)+\P\big(\|\rho\big\|_\infty >2)
\\\notag
&\leq \frac{1}{2}c | \Omega^r| e^{-C K}.
\end{align}
The probability on left hand side of \eqref{eq_a} is homogeneous with respect to $\sigma$. Hence, the same estimate holds for all $\sigma>0$.

Similarly, for $\sigma=1$, Theorem~\ref{thm:error-outside} and Proposition~\ref{prop:max-rho} yield
\begin{align*}
\P&\left(\sup_{z\in\Omega^c\cap \mathcal{A}(1/(100\|V_\varphi g\|_1^2))}\rho(z)\geq\|\rho\|_\infty/4\right)
\\ &\hspace{2cm}\leq \P\left(\sup_{z\in\Omega^c\cap \mathcal{A}(1/(100\|V_\varphi g\|_1^2))}\rho(z)\geq 1/(100\|V_\varphi g\|_1^2)\right)+\P\big(\|\rho\big\|_\infty \leq 1/(25\|V_\varphi g\|_1^2))
\\
&\hspace{2cm}\leq \frac{1}{2} c |\Omega^{r}| e^{-C K},
\end{align*}
where $r=r({g,\varphi})$.
Again by homogeneity, the probability bound holds also for general $\sigma>0$.

Hence, up to an event of probability at most $c |\Omega^{r}| e^{-C K}$, the estimation error $\Omega \triangle \widehat{\Omega}$ is contained in the union of the sets
\begin{align}\label{eq_b}
\begin{aligned}
&\Omega\cap\{|\vartheta-1| > 1/6\},
\\
&\Omega^c \setminus \mathcal{A}\big((100\|V_\varphi g\|_1^2)^{-1}\big)\big)
=
\left\{z\in\Omega^c:\ \int_{\R^{2d}}\vartheta(w)\mathcal{K}(z,w){\ud w} >  \frac{1}{200 \|V_\varphi \varphi\|_1 \|V_\varphi g\|_1^2} \right\}.
\end{aligned}
\end{align}
Propositions~\ref{prop:estimates} and \ref{prop:size-Ag^c}
show that each of these sets is contained in
$[\partial\Omega]^{r'}$, for an adequate constant,
$r'=r'(g,\varphi)$. This completes the proof. 
\pbox

\begin{rem}\label{rem_cons}
 {If we keep track of the constants in the results leading up to the  proof of Theorem \ref{mthm1} we see that the the various constants in the statement can be chosen as
	\[r=\alpha_1   \cdot P_1(g,\varphi)  \cdot M(g,\varphi), \qquad c= \alpha_2 \cdot P_2(g,\varphi), \qquad C=\alpha_3 /P_3(g,\varphi),\] where  $\alpha_i$ are absolute constants, and $$P_i(g,\varphi)=m_\varphi^{a_i}\cdot C_\varphi^{b_i}\cdot\|V_g\varphi\|_1^{c_i}\cdot R_\varphi^{-{d_i}},$$ for some $a_i,b_i,c_i,d_i\in\{0,1,2,3,4\}$, $i=1,2,3,$ cf. \eqref{eq_c1}, \eqref{eq:Cphi}, \eqref{def:Rg}.}
\end{rem}

\subsection{Proof of Theorem~\ref{mcoro1}}\label{sec_5.4}
With the notation of the proof of Theorem \ref{mthm1}, with the same success probability, the estimation error $\Omega \triangle \widehat{\Omega}$ is contained in the union of the sets \eqref{eq_b}. In that event, we use Propositions~\ref{prop:estimates} and \ref{prop:size-Ag^c} to estimate
\begin{align*}
|\Omega\Delta \widehat{\Omega}| \leq  \big|  \{|\chi_\Omega-\vartheta|\hspace{-0.5pt}>\hspace{-0.5pt}1/6 \}\big|\hspace{-0.5pt}+\hspace{-0.5pt}\big|\Omega^c\backslash \mathcal{A}\big(   1/(100\|V_\varphi g\|_1^2)\big)\big|\leq C_1 |\partial \Omega|,
\end{align*}
with $C_1=C_1(g,\varphi)$.
\pbox

 \subsection{Proof of Theorem~\ref{mthm2}} \label{sec_5.5}

The following is a more precise version of Theorem~\ref{mthm2}.
 
\begin{theorem}\label{thm:error-expec}
Let $g,\varphi\in\S(\R^d)$ with $\norm{2}{g}=\norm{2}{\varphi}=1$, $K\geq 7$, and $\Omega\subset\R^{2d}$ compact and satisfying \eqref{eq_as2}. Then there exist  constants $c_1,c_2,C,r>0$ depending only on $g$ and $\varphi$ such that
\begin{equation}\label{eq:thm-exp}
\E\left\{ |\Omega\Delta \widehat{\Omega}|\right\}\leq   c_1\cdot |\partial\Omega| +   c_2\cdot| \Omega^r|^{2} \cdot e^{-C K }. 
\end{equation}
Consequently, choosing $K\geq  C'\log\left(  | \Omega^r|       \right)$ yields
$$
\E\left\{|\Omega\Delta \widehat{\Omega}|\right\}\leq c_3\cdot|\partial\Omega|,
$$
with $c_3$ depending only on $g$ and $\varphi$.
\end{theorem}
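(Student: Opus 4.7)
My plan is to control $\E\{|\Omega \triangle \widehat{\Omega}|\}$ by splitting along the success event $A$ of Theorem~\ref{mthm1}, whose complement has probability at most $c|\Omega^r|e^{-CK}$. On $A$ the deterministic bound $|\Omega \triangle \widehat{\Omega}| \leq c_1 |\partial\Omega|$ furnished by Theorem~\ref{mcoro1} already produces the first summand of \eqref{eq:thm-exp}, so the whole task reduces to proving a bound of the form $\E\{|\Omega \triangle \widehat{\Omega}|\,\chi_{A^c}\} \lesssim |\Omega^r|^{2}e^{-CK}$.

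First I will reduce to $\sigma=1$ by scale-invariance of the estimator and of $A$. On $A^c$ I bound $|\Omega \triangle \widehat{\Omega}| \leq |\Omega| + |\widehat{\Omega}|$, and observe the elementary Markov-type inequality
\[
|\widehat{\Omega}| \leq \frac{4\,\|\rho\|_1}{\|\rho\|_\infty},
\]
which follows at once from the definition $\widehat{\Omega} = \{\rho \geq \|\rho\|_\infty/4\}$ by integrating $\rho$ over $\widehat\Omega$. A Cauchy--Schwarz step then reduces everything to a second-moment bound,
\[
\E\{|\Omega \triangle \widehat{\Omega}|\,\chi_{A^c}\} \leq \sqrt{\E\{|\Omega \triangle \widehat{\Omega}|^2\}}\cdot \sqrt{\P(A^c)} \lesssim \sqrt{|\Omega|^2 + \E\{|\widehat{\Omega}|^2\}}\cdot \sqrt{|\Omega^r|}\cdot e^{-CK/2}.
\]

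The technical heart is the estimate $\E\{|\widehat{\Omega}|^2\} \lesssim |\Omega|^2$. Via the eigen-expansion of Lemma~\ref{lemma_sw}, $\|\rho\|_1 = \tfrac{1}{K}\sum_{k,m} \lambda_m^2|\alpha_m^k|^2$ is a positive-coefficient linear combination of iid exponential variables with total weight $\sum_m \lambda_m^2 \leq |\Omega|$ (using \eqref{eq_trace}), so standard subexponential moment estimates yield $\E\{\|\rho\|_1^p\} \lesssim_p |\Omega|^p$ for every $p \geq 1$. The main obstacle is controlling negative moments of $\|\rho\|_\infty$: from the tail bound $\P(\|\rho\|_\infty \leq \delta)\leq (20\|V_\varphi g\|_1^2\delta)^{K/2}$ of Proposition~\ref{prop:max-rho}, a layer-cake integration shows that $\E\{\|\rho\|_\infty^{-p}\}$ is finite precisely when $K>2p$, with a bound depending only on $p$, $g$ and $\varphi$. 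This is the exact point at which the hypothesis $K\geq 7$ enters: a H\"older splitting with exponents $1/3+2/3=1$ gives $\E\{\|\rho\|_1^2\|\rho\|_\infty^{-2}\} \leq (\E\{\|\rho\|_1^6\})^{1/3}(\E\{\|\rho\|_\infty^{-3}\})^{2/3}\lesssim |\Omega|^2$ and needs only $K>6$.

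Combining these estimates produces $\E\{|\Omega \triangle \widehat{\Omega}|\,\chi_{A^c}\} \lesssim |\Omega|\sqrt{|\Omega^r|}\,e^{-CK/2} \leq |\Omega^r|^{3/2}e^{-CK/2}$, which is absorbed into $c_2|\Omega^r|^2 e^{-C'K}$ after relabeling constants and using $|\Omega^r|\geq 1$; together with the $A$-contribution this gives \eqref{eq:thm-exp}. The final implication is algebraic: with $K\geq C'\log|\Omega^r|$ and $C'$ chosen so that $CC' > 2$, the residual term $c_2|\Omega^r|^{2-CC'}$ becomes $\leq c_1|\partial\Omega|$ thanks to the isoperimetric lower bound $|\partial\Omega| \geq c_d |\Omega|^{(2d-1)/(2d)} \geq c_d \cdot 2^{(2d-1)/(2d)}$, which holds under the standing assumption $|\Omega|\geq 2$ and absorbs the dependence on $|\partial\Omega|$ into the constants.
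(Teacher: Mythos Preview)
Your proof is correct and takes a genuinely different, more streamlined route than the paper's own argument.

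The paper proceeds by a direct layer-cake computation: it partitions the probability space into events $E_n = \{a_n \leq \|\rho\|_\infty < b_n\}$ indexed by the level of $\|\rho\|_\infty$, writes
\[
\E\big\{|\Omega \triangle \widehat\Omega|\big\}=\sum_{n\geq 0}\int_{\R^{2d}}\P\big(\{\chi_\Omega(z)\neq \chi_{S_{\|\rho\|_\infty/4}}(z)\}\cap E_n\big)\,\ud z,
\]
and estimates each summand separately by combining the pointwise tail bounds of Theorem~\ref{thm:high-prob}, Proposition~\ref{prop:estimates}, Proposition~\ref{prop:max-rho}, and an ad hoc bound on $\|\sqrt{\vartheta}\|_1$. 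The condition $K\geq 7$ surfaces there through the need for $\sum_{n\geq 3}(n-1)^{-K/2}n^2$ to converge.

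Your approach instead leverages the already-proved Theorems~\ref{mthm1} and \ref{mcoro1} and isolates the only missing ingredient: a second-moment control on $|\widehat\Omega|$. The deterministic Markov bound $|\widehat\Omega|\leq 4\|\rho\|_1/\|\rho\|_\infty$, combined with the closed-form expression $\|\rho\|_1=\tfrac1K\sum_{k,m}\lambda_m^2|\alpha_m^k|^2$ and the small-ball estimate of Proposition~\ref{prop:max-rho}, reduces everything to H\"older and elementary moment computations. The threshold $K\geq 7$ appears transparently as the integrability condition $K>2p$ with $p=3$. What you gain is brevity and a clear conceptual picture (the rare-event contribution is controlled by $\sqrt{\P(A^c)}$ times a uniform second moment); what the paper's approach gains is a slightly sharper residual exponent and independence from Theorems~\ref{mthm1}--\ref{mcoro1}, since it works directly from the pointwise estimates. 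Both yield \eqref{eq:thm-exp} with constants depending only on $g,\varphi$.
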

\noindent\proof Let us assume that  $\sigma=1$. The general case then follows again by homogeneity.
We define the family of events $E_0=\{4\leq\|\rho\|_\infty<\infty\}$, and $E_{n}=\{a_n\leq  \|\rho\|_\infty<b_n\}$, with $a_n=(20\|V_\varphi g\|_1^2 n)^{-1}$, $b_n=(20\|V_\varphi g\|_1^2(n-1))^{-1}$ for $n\geq 2$,
and $a_1=1/(20\|V_\varphi g\|_1^2)$, $b_1= 4$.
Note that $\bigcup_{n \geq 0} E_n = \{ \|\rho\|_\infty \in (0,\infty)\}$ and the complementary event has zero probability. Hence,
\begin{align}\label{eq:exp}
\E\big\{|\Omega\Delta \widehat{\Omega}|\big\}=\E\left\{\big\|\chi_\Omega-\chi_{S_{\|\rho\|_\infty/4}}\big\|_{1}\right\}
&=   \int_{\R^{2d}}\E\left\{\big|\chi_\Omega(z)-\chi_{S_{\|\rho\|_\infty/4}}(z)\big|\right\}\ud z\notag
\\
&=\sum_{n\in\N_0} \int_{\R^{2d}}\P\left(\big\{ \chi_\Omega(z)\neq \chi_{S_{\|\rho\|_\infty/4}}(z)\big\}  \cap E_n\right)\ud z.
\end{align}
For $n\geq 1$,  we first observe that
\begin{equation}\label{eq:anbn}
\P\left(\big\{ \chi_\Omega(z)\neq \chi_{S_{\|\rho\|_\infty/4}}(z)  \big\}  \cap E_n\right)\leq \left\{\begin{array}{ll} \P\left( \chi_\Omega(z)\neq \chi_{S_{b_n/4}}(z)   \right),  & \text{if } z\in\Omega \\  \P\left( \chi_\Omega(z)\neq \chi_{S_{a_n/4}}(z)   \right),  &  \text{if } z\in\Omega^c
\end{array} \right. .
\end{equation}
Using Propositions \ref{prop:estimates} and \ref{prop:max-rho} we can estimate for $n\geq 3$ and $K\geq 7$
\begin{align}
\int_{\R^{2d}}   \P\big(\big\{&\chi_\Omega(z) \neq\chi_{S_{\|\rho\|_\infty/4}}(z)\big\}\cap E_n\big)\ud z \notag \\ 
& \leq   \int_{\Omega\cup  \{|\vartheta-\chi_\Omega|> a_n^2 \}} \P(E_n)\ud z +\int_{\Omega^c\cap\{|\vartheta-\chi_\Omega|\leq a_n^2\}}  \P\big(\chi_\Omega(z)\neq\chi_{S_{ a_n/4}}(z)\big)\ud z \notag
\\
&\lesssim (n-1)^{-K/2} \left(|\Omega|+n^2 \|V_\varphi g\|_1^4   M({g,\varphi}) |\partial\Omega|\right)+\int_{\Omega^c\cap\{ \vartheta \leq a_n^2\}}  \P\big(\chi_\Omega(z)\neq\chi_{S_{ a_n/4}}(z)\big)\ud z \notag
\\
&\lesssim \left(\frac{1}{2}\right)^{K/4}  {n}^{-7/4}  |\Omega|+  {n^{-3/2}}  |\partial\Omega| +\int_{\Omega^c\cap\{ \vartheta \leq a_n^2\}}  \P\big(\chi_\Omega(z)\neq\chi_{S_{ a_n/4}}(z)\big)\ud z. \label{eq:En}
\end{align}
If $\vartheta(z)\leq \delta^2$, then $\text{exp}\left({-C   { K }/{\delta}}\right)\geq \text{exp}\left({-C  { K }/{\sqrt{\vartheta(z)}}}\right)\geq \text{exp}\left({-C { K\delta}/{\vartheta(z)}}\right)$. 
Therefore, Theorem~\ref{thm:high-prob} allows us to further estimate the integral  above for a general parameter $\delta>0$ as follows
\begin{align*}
\int_{\Omega^c\cap\{\vartheta \leq \delta^2\}}  \P\big(\chi_\Omega(z) \neq\chi_{S_{\delta/4}}(z)\big)\ud z&\lesssim \int_{\Omega^c\cap \{ \vartheta \leq\delta^2\}} e^{-C \frac{K\delta}{ \vartheta(z)}}\ud z
\\
&= \int_0^\infty \left|\left\{z\in\left\{ \vartheta \leq{\delta}^2\right\}:\ \text{exp}\left({-C  \frac{ K\delta}{\vartheta(z)}}\right) \geq t\right\}\right|\ud t
\\
&\leq  \int_0^\infty \left|\left\{z\in\left\{ \vartheta \leq{\delta}^2\right\}:\ \text{exp}\left({-C  \frac{ K}{\sqrt{\vartheta(z)}}}\right) \geq t\right\}\right|\ud t
\\
&\leq  \int_0^{e^{-C  K/\delta }} \left|\left\{z\in\left\{ \vartheta \leq{\delta}^2\right\}:\   \sqrt{\vartheta(z)}\geq \frac{C  K}{\log(1/t)}\right\}\right|\ud t
\\
&\lesssim \int_0^{e^{-C  K/\delta }}  \frac{\log(1/t)}{K }\int_{\R^{2d}} \sqrt{\vartheta(z)} \ud z \ud t
 \\
 & = \frac{ \big\|\sqrt{\vartheta}\big\|_{1}}{K } \left(- x\log x+x\right)\big|_0^{e^{-CK/\delta}}\lesssim {\big\|\sqrt{\vartheta}\big\|_{1}} e^{-C  K/\delta}.
\end{align*}
In order to estimate $\big\|\sqrt{\vartheta}\big\|_1$ we first note that the integral kernel of $H_\Omega$ can be written in two ways if we either use its eigenexpansion or \eqref{eq_locop}
$$
k_{H_\Omega}(s,t)=\sum_{n\in\N} \lambda_n \overline{f_n(t)}f_n(s)=\int_\Omega \overline{\pi(w)g(t)}\pi(w)g(s)\ud w.
$$
Integrating against $\pi(z) \varphi(t)$ and $\overline{\pi(z) \varphi(s)}$ thus yields
$$
\vartheta(z)\leq \sum_{n\in\N} \lambda_n |V_\varphi f_n(z)|^2=\int_\Omega |\langle \pi(z)\varphi,\pi(w)g\rangle|^2\ud w.
$$
 To estimate $\big\|\sqrt{\vartheta}\big\|_1$, let $Q_k=k+[-\frac{1}{2},\frac{1}{2}]^{2d}$, $k\in \Z^{2d}$, and $I_\Omega=\{k\in\Z^{2d}:\ |Q_k\cap \Omega|>0.\}$. Using \eqref{eq:decay-K_g}  shows 
\begin{align}
\big\|\sqrt{\vartheta}\big\|_1&\leq \int_{\R^{2d}}\left(\int_\Omega |\langle \pi(z)\varphi,\pi(w)g\rangle|^2\ud w\right)^{1/2}\ud z \notag
\\
&\leq  \sum_{k\in\Z^{2d}}\int_{\R^{2d}}\left(\int_{\Omega\cap Q_k} |\langle \pi(z)\varphi,\pi(w)g\rangle|^2\ud w\right)^{1/2}\ud z \notag
\\
&\leq  \sum_{k\in I_\Omega} |\Omega\cap Q_k|^{1/2}\int_{\R^{2d}}\sup_{w\in Q_k} |\langle \pi(z-w)\varphi, g\rangle| \ud z \notag
\\
&\leq m_{g,\varphi} \sum_{k\in I_\Omega}  \int_{\R^{2d}}\sup_{w\in Q_0}(1+|z-w|^2)^{- (d+1)} \ud z\notag
\\
&\lesssim  |I_\Omega| \leq  \big|\Omega^{d^{1/2}}\big|
 \label{eq:sqrt-theta},
\end{align}
where we used that $Q_k\subset \Omega^{d^{1/2}}$ for every $k\in I_k$.

In the case $n=0$, we use  Propositions~\ref{prop:estimates}~\ref{enum:3} and \ref{prop:max-rho} and  argue similarly as in the inequalities leading up to \eqref{eq:En}  
\begin{align}
\int_{\R^{2d}}\P\Big( \big\{\chi_\Omega(z)&\neq \chi_{S_{\|\rho\|_\infty/4}}(z)  \big\}\cap E_0\Big)\ud z \notag
\\ 
&\lesssim  \int_{\Omega\cup \{|\vartheta-\chi_\Omega|> \frac{1}{4}\}}\P\big(E_0\big)\ud z+\int_{\Omega^c\cap \{|\vartheta-\chi_\Omega|\leq \frac{1}{4}\}}\P\big(\chi_\Omega(z)\neq\chi_{S_{ {1}}}(z)\big)\ud z\notag
\\ 
&\lesssim c'\big(| \Omega| +  |\partial\Omega|\big)\cdot| \Omega^r|e^{-C K} +\int_{\Omega^c\cap \{|\vartheta-\chi_\Omega|\leq \frac{1}{4}\}}\P\big(\chi_\Omega(z)\neq\chi_{S_{ {1}}}(z)\big)\ud z\notag
\\ 
&\leq c''| \Omega^r |^2 e^{-C K} +\int_{ \Omega^c\cap \{|\vartheta-\chi_\Omega|\leq \frac{1}{4}\}}\P\big(\chi_\Omega(z)\neq\chi_{S_{ {1}}}(z)\big)\ud z.\label{eq:int1}
\end{align}
For $n=1,2$, we finally get from Propositions~\ref{prop:estimates}~\ref{enum:3}, \eqref{eq:anbn}, and Theorem~\ref{thm:high-prob}
\begin{align}
\int_{\R^{2d}}\P\Big( \big\{&\chi_\Omega(z) \neq \chi_{S_{\|\rho\|_\infty/4}}(z)  \big\}\cap E_n\Big)\ud z \notag
\\
&\leq\int_{\big(\Omega^c \cap \{ \vartheta > \frac{a_n}{16}\}\big)\cup\big(\Omega\cap \{|1- \vartheta| > \frac{b_n}{16}\}\big) }
1 \ud z+\int_{\Omega \cap \{ |1-\vartheta |\leq \frac{b_n}{16}\}} \P\big(\chi_\Omega(z)\neq \chi_{S_{b_n/4}}(z)  \big)\ud z \notag
\\
&\qquad\qquad  +\int_{\Omega^c\cap \{ \vartheta \leq \frac{a_n}{16}\}} \P\big(\chi_\Omega(z)\neq\chi_{S_{ {a_n/4}}}(z)\big)\ud z  \notag
\\
& \lesssim   |\partial\Omega|+|\Omega|e^{-CK}
  +\int_{\Omega^c\cap \{ \vartheta \leq \frac{a_n}{16}\}} \P\big(\chi_\Omega(z)\neq\chi_{S_{ {a_n/4}}}(z)\big)\ud z .\label{eq:int2}
\end{align}
In order to bound the integrals in \eqref{eq:int1} and \eqref{eq:int2}
we argue as before and use Theorem~\ref{thm:high-prob}, and  Proposition~\ref{prop:estimates}~\ref{enum:3} to show that 
\begin{align*}
\int_{\Omega^c\cap\{|\vartheta-\chi_\Omega|\leq \delta/4\}}  \P\big(\chi_\Omega(z) \neq\chi_{S_{\delta}}(z)\big)\ud z&\lesssim \int_{ \{|\vartheta-\chi_\Omega|<\delta/4\}} e^{-C \frac{K \delta}{ \vartheta(z)}}\ud z
\\
&\leq \int_0^{e^{-C  K \delta }} \left|\left\{z\in\R^{2d}:\   \vartheta(z)\geq \frac{C K \delta}{\log(1/t)}\right\}\right|\ud t
\\
&\lesssim \|\vartheta\|_1 \int_0^{e^{-C K \delta}}  \frac{\log(1/t)}{K\delta } \ud t
 \lesssim {|\Omega|}\delta^{-1} e^{-C  K \delta},
\end{align*}
and choose $\delta\in \{1,a_1,a_2\}$.
Combining this estimate with \eqref{eq:exp} and Proposition~\ref{prop:max-rho} finally yields
\begin{align*}
\E \left\{\big\|\chi_\Omega-\chi_{S_{\|\rho\|_\infty/4}}\big\|_{1}\right\}
   \lesssim &  \sum_{n=3}^\infty \left( e^{-CK} n^{-7/4} |\Omega|+ n^{-3/2}  |\partial\Omega|+ |\Omega^r|   e^{-C  K n } \right)
\\
&\qquad \qquad   +  |\partial\Omega|+|\Omega|e^{-C K}+   | \Omega^r|^2 e^{-C K} 
\\
 \leq &  c_1  |\partial\Omega|+c_2|\Omega^r|^2 e^{-C' K}.
\end{align*}
\pbox

\begin{rem}\label{rem_cons2}
	 {From the proof of Theorem \ref{thm:error-expec} we see that the the various constants in the statement can be chosen as  
			\[r=\alpha_1   \cdot P_1(g,\varphi)  \cdot M(g,\varphi), \qquad c_1= \alpha_2 \cdot P_2(g,\varphi) \cdot M(g,\varphi), \qquad c_2=\alpha_3 \cdot P_3(g,\varphi) \cdot M(g,\varphi), \] and $C={\alpha_4}/{P_4(g,\varphi)},$ where  $\alpha_i>0$ are absolute constants, and $$P_i(g,\varphi)=m_\varphi^{a_i}\cdot m_{g,\varphi}^{b_i}\cdot C_\varphi^{c_i}\cdot\|V_g\varphi\|_1^{d_i}\cdot R_\varphi^{-{e_i}},$$ for some $a_i,b_i,d_i,e_i\in\{0,1,2,3,4\}$, $i=1,2,3,4$, cf. \eqref{eq_c1}, \eqref{eq:Cphi}, \eqref{def:Rg}.}
\end{rem}

\end{document}